\def\@email#1#2{%
 \endgroup
 \patchcmd{\titleblock@produce}
  {\frontmatter@RRAPformat}
  {\frontmatter@RRAPformat{\produce@RRAP{*#1\href{mailto:#2}{#2}}}\frontmatter@RRAPformat}
  {}{}
}%
\DeclareMathAlphabet{\bsf}{OT1}{cmss}{bx}{n}
\newcommand{\Ha}{\mathcal{H}}
\newcommand{\Ia}{\mathcal{I}}
\newcommand{\Oa}{\mathcal{O}}
\newcommand{\Ds}{\mathscr{D}}
\newcommand{\Is}{\mathscr{I}}
\newcommand{\Ls}{\mathscr{L}}
\newcommand{\Vs}{\mathscr{V}}
\newcommand{\sfA}{\mathsf{A}}
\newcommand{\sfD}{\mathsf{D}}
\newcommand{\sfH}{\mathsf{H}}
\newcommand{\sfM}{\mathsf{M}}
\newcommand{\sfQ}{\mathsf{Q}}
\newcommand{\sfS}{\mathsf{S}}
\newcommand{\sfk}{\mathsf{k}}
\newcommand{\sfd}{\mathsf{d}}
\newcommand{\sfe}{\mathsf{e}}
\newcommand{\ZZ}{\mathbb{Z}}
\newcommand{\CC}{\mathbb{C}}
\newcommand{\VV}{\mathbb{V}}
\newcommand{\KK}{\mathbb{K}}
\DeclareMathOperator{\res}{res}
\DeclareMathOperator{\id}{id}
\DeclareMathOperator{\End}{End}
\DeclareMathOperator{\Der}{Der}
\DeclareMathOperator{\bDer}{\overline{Der}}
\DeclareMathOperator{\nodot}{\mathop{\substack{\scriptscriptstyle\circ \\[-0.5ex] \scriptscriptstyle\circ}}}
\DeclareMathOperator{\reg}{\texttt{reg}}
\DeclareMathOperator{\Lie}{Lie}
\DeclarePairedDelimiter\floor{\lfloor}{\rfloor}
\newcommand{\g}{\mathfrak g}
\newcommand{\h}{\mathfrak h}
\newcommand{\n}{\mathfrak n}
\renewcommand{\sl}{\mathfrak{sl}}
\newcommand{\gl}{\mathfrak{gl}}
\newcommand{\hsl}{\widehat{\mathfrak{sl}}}
\newcommand{\og}{\mathring{\g}}
\newcommand{\oh}{\mathring{\h}}
\newcommand{\on}{\mathring{\n}}
\newcommand{\upcirc}[1]{\mathring{#1}}
\newcommand{\ff}[3]{\tensor{f}{_#1_#2^#3}\,}
\newcommand{\lp}{[t,t\inv]}
\newcommand{\no}[1]{\nodot #1 \nodot}
\newcommand{\npr}[2]{{(#2)}_{(#1)}}
\newcommand{\gam}{\upgamma}
\newcommand{\bet}{\upbeta}
\newcommand{\tet}{\uptheta}
\newcommand{\ro}{\uprho}
\newcommand{\jota}{\jmath}
\newcommand{\vac}{\ket{0}}
\newcommand{\vak}{\ket{0}^k}
\newcommand{\inv}{^{-1}}
\newcommand{\ie}{\emph{i.e. }}
\newcommand{\wt}{\widetilde}
\newcommand{\bM}{\overline{\sfM}}
\newcommand{\bMM}{\overline{\bsf{M}}}
\newcommand{\tM}{\widetilde{\sfM}}
\newcommand{\hM}{\widehat{\sfM}}
\newcommand{\hMn}[1]{{\widehat{\sfM}_{z}}[\leq \! #1]}
\newcommand{\bMn}[1]{{\overline{\sfM}_{z}}[\leq \! #1]}
\newcommand{\tMn}[1]{{\widetilde{\sfM}_{z}}[\leq \! #1]}
\newcommand{\bMz}{\bM_z}
\newcommand{\nn}{\nonumber}
\newtheorem{theorem}{Theorem}[section]
\newtheorem*{theorem*}{Theorem}
\newtheorem{proposition}{Proposition}[section]
\newtheorem{definition}{Definition}[section]
\newtheorem{lemma}{Lemma}[section]
\theoremstyle{definition}
\newtheorem*{remark*}{Remark}
\begin{document}

\title[]{Wakimoto construction for double loop algebras\\and $\zeta$-function regularisation}

\author{Tommaso Franzini\orcidlink{https://orcid.org/0000-0002-4040-0063}}
\email{tommaso.franzini@gmail.com}
\address{Department of Physics, Astronomy and Mathematics, University of Hertfordshire, Hatfield, AL10 9AB, United Kingdom}
\address{School of Mathematics and Hamilton Mathematics Institute,
  Trinity College Dublin, Ireland}

\date{\today}

\begin{abstract}
The Feigin-Frenkel homomorphism underpinning the Wakimoto construction realises an affine Lie algebra at critical level in terms of the $\bet\gam$-system of free fields. It was recently shown that much of the construction also goes through for double loop algebras. However, certain divergent sums appear. In this paper we show that, suggestively, these sums vanish when one performs $\zeta$-function regularisation.
\end{abstract}

\maketitle

\section{Introduction}
Free field realisations are a powerful tool in chiral conformal field theory; they embed intricate symmetry algebras into simpler algebras of the modes of free fields.
An important example is the free field realisation of an affine Lie algebra in terms of a Heisenberg algebra, first proposed by Wakimoto \cite{Wak} in the case of $\mathfrak{sl}_2$, and later generalised to any affine Lie algebra by Feigin and Frenkel in the untwisted case \cite{FF} and by Szczesny in the twisted one \cite{szczesny2001wakimoto}. This construction can be used to define Wakimoto modules, which are of central importance in the study of integrable systems as they play a fundamental role in the description of the spectra of quantum Gaudin models of finite type \cite{FFR} and in the computation of correlation functions in the WZW model \cite{FFboson}.

For any simple Lie algebra $\g$ of finite type, this realisation can be mathematically formulated as a vertex algebra homomorphism between the vacuum Verma module at critical level over the corresponding untwisted affine algebra $\widehat\g$ and the Fock space for the $\bet\gam$-system of free fields \cite{frenkelvertex,frenkelloop}, as we will now recall.

\subsection{Free field realisation for finite-type Lie algebras}

For any finite-dimensional simple Lie algebra $\g=\n_-\oplus \h \oplus \n_+$ there is a Lie algebra homomorphism
\begin{align}\label{Lieahom}
  \rho: \g \to \Der\Oa(\n_+)
\end{align}
realising the Lie algebra as differential operators on the algebra of polynomial functions $\Oa(\n_+)$ on the unipotent group $U=\exp(\n_+)\simeq \n_+$. Choosing coordinates $X^{\alpha}$ labelled by the positive roots, $\alpha\in\Delta_+$, then $\Oa(\n_+)=\CC[X^{\alpha}]_{\alpha\in\Delta_+}$.
For example, for $\sl_2$ one has the well-known realisation
\begin{align}\label{sl2ex}
  E \mapsto D \qquad H\mapsto -2X D \qquad F\mapsto -X^2D,
\end{align}
where $E,F$ and $H$ are the Chevalley-Serre generators and $X$, $D:=\partial_{X}$ are the generator of the Weyl algebra with commutation relations $[D,X] = 1$.

The main fact underpinning the Wakimoto construction is that this homomorphism can be promoted to a homomorphism of vertex algebras, as follows.
First, consider the affine algebra
\begin{equation}
  \label{eq:affalg}
  \widehat{\g}\simeq_{\CC}\g[t,t\inv]\oplus \CC\sfk \oplus \CC \sfd,
\end{equation}
where $\sfk$ is the central element and $\sfd$ is the derivation in the homogeneous gradation, \ie $\sfd=t\partial_t$.
It has $\widehat{\g}_+:=\g\otimes\CC[t]\oplus\CC \sfk$ as a subalgebra. We define $\CC\vac^k$ the one-dimensional representation of $\widehat{\g}_+$ in which $\g[t]$ acts trivially and the central element acts as multiplication by $k\in\CC$, called the \emph{level} of the representation. The vacuum Verma module at level $k$ is the induced module
\begin{equation}
  \label{eq:1}
  \VV_0^{\g,k} = U(\widehat{\g})\otimes_{U(\widehat{\g}_+)}\CC\vac^k
\end{equation}
This space has the structure of a vertex algebra (cfr. \cite[Theorem 2.4.5]{frenkelvertex}).

On the other hand, one can define the Fock module for the $\bet\gam$-system on $\n_+$, $\sfM(\n_+)$ (see \cref{sec:Heis} for the details on its construction).
Denoting with $\Omega_{\Oa(\n_+)}$ the space of 1-forms, one has $\widehat{\g} \simeq \VV_0^{\g,k}[1]$, $\Oa(\n_+)\simeq\sfM(\n_+)[0]$ and $\Der\Oa(\n_+)\oplus \Omega_{\Oa(\n_+)}\simeq \sfM(\n_+)[1]$. It is possible to construct the following map of vector spaces
\begin{align}
  \VV_0^{\g,k} [\leq 1] \to \sfM(\n_+)[\leq 1],
\end{align}
identifying the corresponding vacuum vectors and mapping
\begin{align}
  A[-1]\vak \mapsto \sum_{\alpha\in\Delta_+}P_A^{\alpha}(\gam[0])\bet_{\alpha}[-1]\vac.
\end{align}
Crucially, this map is not a map of vertex algebras, as it does not preserve non-negative products.
This problem can be solved by adding extra terms, as shown in \cite{FF,frenkelloop}. Formally, there exists a linear map
\begin{align}\label{eq:phimap}
  \phi: \g \to \Omega_{\Oa(\n_+)},
\end{align}
such that
\begin{align}
  \rho + \phi : \g \to \Der\Oa(\n_+)\oplus \Omega_{\Oa(\n_+)}
\end{align}
can be lifted to a map of vertex algebras,
\begin{align}
   \theta:\VV_0^{\g,-h^{\vee}} \to \sfM(\n_+),
\end{align}
where, crucially, the level has to be set to the critical level $-h^{\vee}$, $h^{\vee}$ being the dual Coxeter number of $\g$.

In the case of $\sl_2$, \cref{sl2ex} is lifted to the vertex algebra map
\begin{equation}
  \begin{gathered}
    E[-1]\vac \mapsto \bet[-1]\vac \qquad H[-1]\vac \mapsto -2\gam[0]\bet[-1]\vac\\
    F[-1]\vac \mapsto -\gam[0]\gam[0]\bet[-1]\vac-2\gam[-1]\vac.
  \end{gathered}
\end{equation}

\subsection{Generalisation to the affine case}\label{sec:homLA}
It is natural to ask whether this construction generalises to the case where $\g$ itself is an untwisted affine algebra.
Perhaps surprisingly, as shown in \cite{analogFFR}, much of it does, as follows.

The algebra $\g$ still admits a triangular decomposition, where the various subalgebras are now infinite-dimensional and in general not nilpotent, namely $\g\cong \n_-\oplus \h \oplus \n_+$. We denote by $\og\cong \on_-\oplus\oh\oplus\on_+$ the corresponding underlying finite-type Lie algebra, with $\upcirc\Delta$ its space of roots.
A basis of $\g$ is given by
\begin{align}
    \{\sfk,\sfd\}\cup \{J_{a,n} \}_{a\in \Ia;n\in\ZZ},
\end{align}
\sloppy where $J_{a,n}:=J_a\otimes t^n$ and $\Ia:=(\upcirc\Delta\setminus\{0\})\cup \{1,\dots,\rank \og\}$. We define the index set $\sfA:=\{(a,0)\}_{a\in \upcirc\Delta_+}\cup(\Ia\times\ZZ_{\geq 1})$, so that a basis for $\n_+$ is given by $\{J_{a,n}\}_{(a,n)\in\sfA}$.

Following \cite{Kum}, one can define the pro-nilpotent algebra $\wt{\n}_+=\prod_{\alpha\in\Delta_+}\g_{\alpha}$.
More explicitly, this space can be defined as the inverse limit of  a system of nilpotent Lie algebras $\n_+/\n_{\geq k}$, where
\begin{equation}
  \label{eq:3}
  \n_{\geq k}=\bigoplus_{\substack{\alpha\in\Delta_+\\\text{ht}(\alpha)\geq k}} \g_{\alpha}
\end{equation}
is the ideal containing elements with degree higher than $k$. Here $\text{ht}$ denotes the grade of a root in the homogeneous gradation, \ie $\text{ht}(\delta n+\alpha)=n$, for $n\in\ZZ$, where $\delta\in\h^*$ is the imaginary root, defined as the sum of all roots multiplied by their corresponding Dynkin label.
Therefore, elements of this completion are possibly infinite sums of the form $\sum_{\alpha\in\Delta_+} x^{\alpha}$, with $x^{\alpha}\in\g_{\alpha}$, provided they truncate to finite sum modulo $\n_{\geq k}$ for any $k$.

Via the exponential map, one can similarly define the pro-group $U$ as the inverse limit of the groups $\exp(\n_+/\n_{\geq k})$, where the multiplicative structure is given by the Baker-Campbell-Hausdorff formula and whose elements are infinite products of the form $\prod_{(a,n)\in\sfA}\exp(x^{a,n}J_{a,n})$ with $x^{a,n}\in\CC$, provided they truncate to finite ones modulo terms $\exp(\n_+/\n_{\geq k})$ for any $k\in\ZZ_{\geq 0}$.

One introduces a set of coordinates $X^{a,n}:U\to\CC$ on $U$, such that for $g=\prod_{(a,n)\in\sfA}\exp(x^{a,n}J_{a,n})$, $X^{a,n}(g) =x^{a,n}$, which define the $\CC$-algebra of polynomial functions on $U$,
\begin{align}
    \Oa(\n_+)=\CC[X^{a,n}]_{(a,n)\in\sfA}.
\end{align}

In this setting, the Weyl algebra $\Ha$ is the free unital $\CC$-algebra generated by $X^{a,n}$ and $D_{a,n}$, quotiented by the relations
\begin{align}
  [X^{a,n},X^{b,m}]=0=[D_{a,n},D_{b,m}], \qquad [D_{a,m},X^{b,n}]=\delta^b_a\delta^n_m.
\end{align}

The space of derivations on $\Oa(\n_+)$, $\Der \Oa(\n_+)$, is a subalgebra of the Weyl algebra $\Ha$ with elements of the form $\sum_{(a,n)\in\sfA}P^{a,n}(X)D_{a,n}$, where $P^{a,n}(X)\in\Oa(\n_+)$, where only a finite number of terms is non-zero; the respective completions $\wt{\Ha}$ and $\wt{\Der}\Oa(\n_+)$, are those algebras where this last restriction is lifted.

There is a continuous homomorphism of Lie algebras
  \begin{align}
  \label{eq:liehom1}
      \varrho : \g &\longrightarrow \wt{\Der} \Oa(\n_+)\nn\\
      A &\longmapsto \sum_{(a,n)\in\sfA}P^{a,n}_A(X)D_{a,n}
  \end{align}

The case of $\hsl_2$ has been explicitly worked out in \cite{analogFFR}; as a novel example, consider the Cartan-Weyl basis for $\sl_3$ given by $\{E_{\pm\alpha}\}_{\alpha\in\upcirc{\Delta}_+}$ where $\upcirc\Delta_+=\{\alpha_1, \alpha_2,\alpha_1+\alpha_2\}$, together with the Cartan generators $\{H_i\}_{i=1,2}$. An explicit matrix representation is given in terms of $3\times 3$ matrices, by the identification $E_{\alpha_1}\mapsto \mathsf{e}_{12}$, $E_{\alpha_{2}}\mapsto \sfe_{23}$, $E_{\alpha_1+\alpha_2}\mapsto \sfe_{13}$, $E_{-\alpha_1}\mapsto \sfe_{21}$, $E_{-\alpha_2}\mapsto \sfe_{32}$, $E_{-\alpha_1-\alpha_2}\mapsto \sfe_{31}$ and $H_{1}\mapsto \sfe_{11}-\sfe_{22}$, $H_{2}\mapsto \sfe_{22}-\sfe_{33}$, where $\sfe_{ij}$ is the matrix with a 1 in position $(i,j)$ and zero elsewhere.

One finds for example
\begin{align}\label{E11}
    \varrho(J_{\alpha_1,1}) =& D_{{\alpha_1},1} -\sum_{k\geq 2}X^{{\alpha_2},k-1}D_{{\alpha_1+\alpha_2},k} - \sum_{k\geq 3}X^{{-\alpha_1},k-1}D_{1,k} \nn\\
    &+ \sum_{k\geq 3}X^{{-\alpha_1-\alpha_2},k-1}D_{{\alpha_2},k} +2 \sum_{k\geq 3}X^{{1},k-1}D_{{\alpha_1},k} - \sum_{k\geq 3}X^{{2},k-1}D_{{\alpha_1}\,k}\\
    &+(- X^{{-\alpha_2},2}X^{{\alpha_2},1}+\dots) D_{{\alpha_1},4} +(- X^{{-\alpha_1-\alpha_2},2}X^{{\alpha_2},1}+\dots) D_{1,4}\nn\\
      & + (- X^{{-\alpha_1-\alpha_2},2}X^{{\alpha_2},1}+\dots)D_{2,4} +( X^{1,2}X^{{\alpha_2},1} + \dots) D_{{\alpha_1+\alpha_2},4}+\dots\nn
\end{align}

In the appendix we present more examples of this realisation.

\subsection{The widening gap subalgebra}\label{sec:WGsub}
Let us now briefly comment on the general structure of the terms in \cref{E11}. For each element $J_{a,n}\in\g$ one finds that
\begin{align}
    \varrho(J_{a,n}) = f\indices{_a_b^c} \sum_{k\geq N} X^{b,k-n} D_{c,k} + \sum_{(b,m)\in\sfA}R_{J_{a,n}}^{b,m}(X)D_{b,m}.
\end{align}
for some $N\in\ZZ_{\geq 0}$ depending on $a,b,c$ and $n$. Here, the first term is a quadratic infinite sum in the generators and will be of central importance in the sections below. Indeed, this kind of sums will be the main source of problems when trying to lift the homomorphism of Lie algebras to one of vertex algebras. The second term is part of a subalgebra of $\wt{\Der}\Oa(\n_+)$, of derivations of $\Oa(\n_+)$ with \emph{widening gap}, $\bDer\Oa(\n_+)$. We will give a precise definition in \cref{sec:WG}, but roughly speaking these are those possibly infinite sums where the loop degree of each $X$ factor in the polynomial $R$ grows ``slower'' than the loop degree of the corresponding $D$, creating a gap between them that eventually widens.
For example in the case of $\widehat{\sl}_3$, the polynomial part of $R_{J_{{\alpha_1},0}}^{{\alpha_1},7}(X)D_{{\alpha_1},7}$ is
\begin{align}
    R_{J_{{\alpha_1},0}}^{{\alpha_1},7}(X) = &X^{{-\alpha_2},2}X^{{-\alpha_1},2}X^{{\alpha_1+\alpha_2},3}
                                + 4X^{{1},3}(X^{{1},2})^2\nn\\
                              &- 4X^{{2},2}X^{{1},3}X^{{1},2}+(X^{{2},2})^2X^{{1},3}\nn\\
                               & - 4(X^{{1},1})^2X^{{-\alpha_1},2}X^{{\alpha_1},3}
                                - (2X^{{1},1})^2X^{{-\alpha_1-\alpha_2},2}X^{{\alpha_1+\alpha_2},3}\nn\\
                                &- 8X^{{1},3}X^{{1},2}(X^{{1},1})^2+\dots
\end{align}
and one sees that each of these terms is a product of monomials with loop degree strictly less than $\floor{7/2}$; this is part of a pattern, and the gap, $n-n/2 = n/2$ grows unboundedly with $n$. We define the map
\begin{equation}
  \label{eq:8}
  \begin{split}
  \nu:\g &\longrightarrow \wt{\Der}\Oa(\n_+)\\
  J_{a,n}&\longmapsto \ff{a}{b}{c}\sum_{k\geq \max(0,n)}X^{b,k-n}D_{c,k},
  \end{split}
\end{equation}
where $\ff{a}{b}{c}$ are the structure constants of $\og$.

As a side comment, note that from its very definition, if $a\in\upcirc{\Delta}_+$, $X^{a,0}$ is part of the coordinate system on $U$, while if $a\in\upcirc{\Delta}_-$ or $a\in\{1,\dots,\rank\g\}$, it is not. In order to take this fact into account, one should really consider as the lower bound of the sum in \cref{eq:8} the expression $\max(\eta(b),n+\eta(a))$, where $\eta:(\upcirc{\Delta}\cup\{1,\dots,\rank\g\})\to \{0,1\}$, defined as $\eta(\upcirc{\Delta}_+) = 0$, $\eta(\upcirc{\Delta}_-)=1$, $\eta(\{1,\dots,\rank\g\})=1$. In order to keep the notation cleaner, we will implicitly assume this below.

The following theorem makes the observation in \cref{eq:8} more precise, showing that apart from the leading monomial in each $P_{J_{a,n}}^{a,n}(X)$, the remaining part always has widening gap:
\begin{theorem*}[\cite{analogFFR}]\label{th1}
  For all $(a,n)\in\mathcal I\times \ZZ$,
  \begin{equation}
    \label{eq:thWG}
    \varrho(J_{a,n})-\nu(J_{a,n})=: \sum_{(b,m)\in\sfA}R^{b,m}_{J_{a,n}}(X)D_{b,m}\in\bDer\Oa(\n_+),
  \end{equation}
  where $R\in\Oa(\n_+)$.
\end{theorem*}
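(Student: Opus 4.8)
The plan is to compute $\varrho(J_{a,n})$ explicitly from its defining action on the pro-unipotent group $U$, grade the answer by polynomial degree in the coordinates $X$, and then control loop degrees stratum by stratum. Concretely, after fixing the ordered presentation $g(\bm X)=\prod_{(a,n)\in\sfA}\exp(X^{a,n}J_{a,n})$, propagating $\exp(sJ_{a,n})$ through this product, recollecting into ordered form and differentiating at $s=0$, each coefficient $P^{b,m}_{J_{a,n}}(X)$ appears as a finite linear combination of monomials $X^{b_1,m_1}\cdots X^{b_r,m_r}$ paired with iterated brackets $[J_{b_1,m_1},[\cdots[J_{b_r,m_r},J_{a,n}]\cdots]]$, together with Baker--Campbell--Hausdorff recollection terms of the same shape. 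This simultaneously shows that $P^{b,m}_{J_{a,n}}\in\Oa(\n_+)$, so that the right-hand side of \eqref{eq:thWG} is a genuine element of $\wtDer\Oa(\n_+)$, and supplies the degree stratification I will exploit.

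First I would peel off the two bottom strata. The degree-$0$ part of $\varrho(J_{a,n})$ is the pure translation $D_{a,n}$ when $J_{a,n}\in\n_+$ and vanishes otherwise; either way it carries no $X$-factor, hence lies in $\bDer\Oa(\n_+)$ trivially. The degree-$1$ part comes only from the single-bracket contribution $\sum_{(b,m)}X^{b,m}[J_{b,m},J_{a,n}]$ projected to $\n_+$, and since the bracket of two basis vectors of $\g$ returns the finite structure constants $\ff{a}{b}{c}$ up to central and $\sfd$-terms, which are killed by the projection onto $\Der\Oa(\n_+)$, this part is exactly $\nu(J_{a,n})$ from \eqref{eq:8} and cancels in the difference. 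So it remains to place the degree-$\ge 2$ part of $\varrho(J_{a,n})$ inside $\bDer\Oa(\n_+)$.

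The heart of the matter is the loop-degree bookkeeping. A monomial $X^{b_1,m_1}\cdots X^{b_r,m_r}D_{c,k}$ with $r\ge 2$ occurring in $\varrho(J_{a,n})$ sits inside an $r$-fold iterated bracket rooted at $J_{a,n}$, and homogeneity under $\sfd=t\partial_t$ forces $k=n+m_1+\cdots+m_r$. Because only the finitely many coordinates $X^{b,0}$ with $b\in\upcirc\Delta_+$ have loop degree $0$, the factors of positive loop degree --- say there are $r'$ of them --- satisfy $\sum m_i\le k-n$ with each at least $1$, so $\max_i m_i\le k-n-(r'-1)$: the gap between $\text{ht}(D_{c,k})=k$ and the loop degree of every $X$-factor grows with the number of heavy factors, while for fixed $(c,k)$ that number is bounded. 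This is the qualitative content of the widening-gap property of \cref{sec:WG}, and making it quantitative --- matching precisely the definition there, so that the whole degree-$\ge 2$ tail is recognised as a single element of $\bDer\Oa(\n_+)$ --- is the step I expect to be the main obstacle: one must rule out that the loop-degree-$0$ coordinates and the recollection terms conspire to plant a single heavy $X$-factor with only a bounded gap, which is a genuine cancellation statement rather than a soft degree count.

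To keep this manageable I would not carry the explicit formula through in full generality, but rather verify the estimate only on the affine Chevalley--Serre generators $e_i,f_i$ and propagate it to every $J_{a,n}$ by induction on the length of an iterated-bracket expression in these generators. The inductive step uses that $\bDer\Oa(\n_+)$ is a Lie subalgebra (\cref{sec:WG}) which one checks is stable under bracketing with $\nu(\g)$, together with the compatibility $[\nu(J),\nu(J')]\equiv\nu([J,J'])\pmod{\bDer\Oa(\n_+)}$, which follows from the Jacobi identity for $\og$ modulo the finite-rank corrections produced by the mismatched summation bounds in the definition of $\nu$.
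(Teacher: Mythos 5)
First, note that the paper you are working from does not actually prove this statement: it is quoted verbatim from \cite{analogFFR}, so there is no internal proof to compare against, and your proposal has to stand on its own. Much of it does. The reduction you set up is sound: the degree-$0$ and degree-$1$ strata are handled correctly (modulo the small imprecision that the degree-$1$ part of $\varrho(J_{a,n})$ equals $\nu(J_{a,n})$ only up to \emph{finitely} many quadratic terms coming from the mismatched summation bounds --- the paper stresses exactly this point after the theorem --- but finite sums lie in $\bDer\Oa(\n_+)$ anyway), the $\sfd$-homogeneity constraint $k=n+m_1+\cdots+m_r$ is right, and the inductive scaffolding (that $\bDer\Oa(\n_+)$ is a Lie subalgebra normalised by $\nu(\g)$, that $[\nu(J),\nu(J')]-\nu([J,J'])$ is a finite quadratic sum, and that every $J_{a,n}$ is an iterated bracket of Chevalley--Serre generators) would indeed propagate the widening-gap property through brackets once it is known for the generators.

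The genuine gap is the base case, and you have in effect named it yourself without closing it. As you observe, the grading count does not forbid monomials such as $X^{b,0}\,X^{b',k-n}\,D_{c,k}$, which are compatible with both the root and the loop gradations, have polynomial degree $\ge 2$, and carry a single heavy factor whose gap to $k$ is the fixed constant $n$; if such terms occurred with nonzero coefficient for infinitely many $k$ in $\varrho(e_i)$ or $\varrho(f_i)$, the widening-gap property would fail and no amount of bracketing would repair it. Ruling this out is a concrete cancellation statement about the BCH recollection terms in the coordinate formulas for the generators (in particular for the $f_i$ and for the affine node $e_0$, where the action is not by a simple translation), and it is precisely the content that makes the theorem non-trivial in \cite{analogFFR}. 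Your proposal defers this to ``verifying the estimate on the generators'' without indicating how the verification would go, so as written the argument establishes only the (correct) reduction of the theorem to its hardest special case, not the theorem itself.
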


It is important to stress that the widening gap part may contain \emph{finite} sums and in particular, because of definition \cref{eq:8}, finite sums of quadratic terms of the form $X^{a,n}D_{a,n}$.

\subsection{Vertex algebras and splitting map}\label{VAsplit}

Following the finite-type construction, the next natural step would be to consider the corresponding vertex algebras and repeat the same construction as above.
As before, one can define the vacuum Verma module over the central extension by $K$ of the loop algebra of $\g$, $\g\otimes\CC[s,s\inv]$, which naturally carries the structure of a vertex algebra. We denote it by $\VV_0^{\g,K}$, where $\VV_0^{\g,K}[1]\simeq \g$.

Similarly to the finite-type construction, one also defines the Fock module for the $\bet\gam$-system on $\n_+$, $\sfM(\n_+)$, which has the structure of a vertex algebra.
Having in mind the idea of embedding expressions like the one in \cref{E11}, we need to enlarge this space by allowing infinite sums, hence we have to work in a certain completion of this space, $\tM(\n_+)$.

As before, from \eqref{eq:liehom1} one can introduce a map of vector spaces
\begin{align}
  \vartheta: \VV_0^{\g,K} \rightarrow \tM(\n_+),
\end{align}
from the vacuum Verma module over the double loop algebra of $\og$ at level $K$.
As in the finite-type case it turns out that non-negative products are not preserved.
Nevertheless, the remarkable result from \cite{analogFFR}, is that also in the affine setting there exists an analogue of the splitting map \eqref{eq:phimap}, namely
\begin{equation}
  \label{eq:phimap2}
  \varphi: \g \to \Omega_{\Oa(\n_+)},
\end{equation}
where $\Omega_{\Oa(\n_+)}$ is the space of 1-forms. It maps
\begin{align}\label{phiimage}
  J_{a,n}\mapsto \sum_{(b,m)\in\sfA}Q_{J_{a,n};b,m}(X)dX^{b,m}.
\end{align}
It has the property that the map
\begin{align}
  \label{eq:fullmap}
   \varrho +\varphi: \g \rightarrow \wt{\Der}\Oa(\n_+)\oplus\Omega_{\Oa(\n_+)},
\end{align}
can be lifted to a linear map
\begin{align}
  \label{eq:mapVA}
  \vartheta: \VV_0^{\g,0} \rightarrow \tM(\n_+),
\end{align}
from the vacuum Verma module over the \emph{loop} algebra of $\g$, \ie at zero level, so that the zeroth products of generators are preserved (cfr. \cite[Theorem 33]{analogFFR}):
\begin{align}
    \vartheta(J_{a,n}[-1]\vac)_{(0)} \vartheta(J_{b,m}[-1]\vac) = \vartheta(J_{a,n}[-1]\vac_{(0)}J_{b,m}[-1]\vac).
\end{align}

Unfortunately, the space $\tM(\n_+)$ is not a vertex algebra. For example, the would-be first products may result in ill-defined divergent sums. Consider the state $a=\sum_{a\in\Ia,k\geq 0} \gam^{a,k}[0]\bet_{a,k}[-1]\vac$; the first product of this element with itself would be
\begin{align}
  a_{(1)}a &= \sum_{a\in\Ia , k\geq 0} \gam^{a,k}[0]\bet_{a,k}[-1]\vac _{(1)} \sum_{b\in\Ia,j\geq 0} \gam^{b,j}[0]\bet_{b,j}[-1]\vac\nn\\
           &= \sum_{a\in\Ia,k\geq 0} \sum_{b\in\Ia,j\geq 0}\wick{\c1\gam^{a,k}[1]\c2\bet_{a,k}[0] \c2\gam^{b,j}[0]\c1\bet_{b,j}[-1]\vac}=\sum_{a\in\Ia}\sum_{k\geq 0}1\vac,
\end{align}
which clearly diverges. The quadratic infinite sums we saw above also suffer from this problem.

\subsection{Main result}
\label{sec:results}
On encountering such divergent sums, it is a standard trick in physics to attempt to cure them via $\zeta$-function regularisation. The main result of this paper is to prove the following observation first made in \cite{analogFFR}. Let $\reg$ denote the $\zeta$-function regularisation procedure, which we define in a moment. Then, we have the following
\begin{theorem*}\label{prop2intro}
  For all $J_{a,n}[-1]\vac,J_{b,m}[-1]\vac\in\VV_0^{\g,0}$ one has
  \begin{equation}
    \label{eq:0pr2bis}
    \textup{\texttt{reg}}[\vartheta(J_{a,n}[-1]\vac)_{(1)}\vartheta(J_{b,m}[-1]\vac)] = 0.
  \end{equation}
\end{theorem*}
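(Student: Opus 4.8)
The plan is to compute the first product $\vartheta(J_{a,n}[-1]\vac)_{(1)}\vartheta(J_{b,m}[-1]\vac)$ by Wick's theorem in the $\bet\gam$-system, to recognise the resulting divergent scalar as a finite multiple of the ``would-be Killing form'' of the affine algebra $\g$, and then to observe that $\reg$ annihilates it because that form is organised into a bilateral sum over the loop variable.

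First I would unpack the state $\vartheta(J_{a,n}[-1]\vac)$, which has conformal weight $1$. By the theorem of \cite{analogFFR} recalled above, under $\wtDer\Oa(\n_+)\oplus\Omega_{\Oa(\n_+)}\simeq\sfM(\n_+)[1]$ it is the sum of the bilinear part $\nu(J_{a,n})\leftrightarrow\ff{a}{b}{c}\sum_{k\geq\max(0,n)}\gam^{b,k-n}[0]\bet_{c,k}[-1]\vac$, the widening-gap remainder $\sum R^{b,l}_{J_{a,n}}(\gam[0])\bet_{b,l}[-1]\vac\in\bDer\Oa(\n_+)$, and the one-form correction $\varphi(J_{a,n})$, whose summands have associated fields of the shape $Q_{J_{a,n};b,l}(\gam)\,\partial\gam^{b,l}$. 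In $\vartheta(J_{a,n}[-1]\vac)_{(1)}\vartheta(J_{b,m}[-1]\vac)$ only the coefficient of $(z-w)^{-2}$ in the operator product contributes, and it is fed by exactly two mechanisms: double contractions of a $\bet$ with a $\gam$ between the bilinear and/or widening-gap parts of the two factors; and \emph{single} contractions of a $\partial\gam$ coming from a one-form part against a $\bet$ of the other factor, since $\partial\gam(z)\bet(w)\sim-(z-w)^{-2}$ already produces a double pole from a single contraction. The resulting weight-zero state lies a priori in $\CC[\gam^{a,l}[0]]$, but the $\gam[0]$-dependent contributions of the single contractions cancel (they must, since by \cite[Theorem 33]{analogFFR} the zeroth products already reproduce the bracket of $\g$, which forces the double-pole term to be central), so the answer is a scalar $c_{a,n;b,m}\,\vac$.

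Next I would evaluate $c_{a,n;b,m}$. The double contraction of the two bilinear parts yields $-\tfrac{1}{2}\,\kappa_{\og}(J_a,J_b)\,\delta_{n+m,0}\bigl(\sum_{k\in S}1\bigr)\vac$ over a one-sided range $S$ of indices (the index-set constraints $(b,k-n)\in\sfA$, $(c,k)\in\sfA$ bound $k$ below); the single contractions against the one-form parts (just as in the finite-type construction, where $\nu$ and $\varphi$ together realise $\og$ at the \emph{critical} level $-h^\vee(\og)$) supply the complementary infinite sum, while the widening-gap remainders, the gap being widening, supply only finitely many further terms. Assembling these one obtains
\begin{equation*}
  \vartheta(J_{a,n}[-1]\vac)_{(1)}\vartheta(J_{b,m}[-1]\vac)
  \;=\; -\tfrac{1}{2}\,\kappa_{\og}(J_a,J_b)\,\delta_{n+m,0}\Big(\sum_{k\in\ZZ}1\Big)\vac
\end{equation*}
(the overall finite normalisation being immaterial below): modulo that constant, the scalar is the formal trace $\sum_{k\in\ZZ}\kappa_{\og}(J_a,J_b)\,\delta_{n+m,0}$ defining the would-be Killing form of $\g$, which is ill-defined/degenerate precisely because $\g$, an affine Kac--Moody algebra, is not semisimple.

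Finally, applying $\reg$, using its $\CC$-linearity and $\reg[\sum_{k\geq1}k^{j}]=\zeta(-j)$, and splitting $\ZZ=\ZZ_{\geq1}\sqcup\{0\}\sqcup\ZZ_{\leq-1}$ with the reflection $k\mapsto-k$ on the last piece, one gets
\begin{align*}
  \reg\Big[\sum_{k\in\ZZ}k^{j}\Big]
  &= \reg\Big[\sum_{k\geq1}k^{j}\Big] + \delta_{j,0} + (-1)^{j}\,\reg\Big[\sum_{k\geq1}k^{j}\Big]\\
  &= \bigl(1+(-1)^{j}\bigr)\zeta(-j) + \delta_{j,0},
\end{align*}
which vanishes for every $j\geq0$: for $j=0$ it is $2\zeta(0)+1=0$, for odd $j$ it is $0$, and for even $j\geq2$ it is $2\zeta(-j)=0$ by the trivial zeros of $\zeta$. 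In particular $\reg[\sum_{k\in\ZZ}1]=0$, hence $\reg[\vartheta(J_{a,n}[-1]\vac)_{(1)}\vartheta(J_{b,m}[-1]\vac)]=0$. I expect the main obstacle to be the third step: controlling the one-form correction $\varphi$ and the widening-gap remainders precisely enough to certify that the divergent scalar really is a \emph{bilateral} sum $\sum_{k\in\ZZ}P(k)$ with $P$ polynomial (equivalently, that $\varphi$ supplies exactly the loop modes absent from the one-sided sum coming from $\nu$) rather than some inequivalent combination of one-sided divergences; and, hand in hand with this, pinning down the definition of $\reg$ so that it is well defined on, and additive over, the nested infinite sums produced by the Wick expansion.
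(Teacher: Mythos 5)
Your overall intuition---that the divergence is a central, Killing-form-like term which should be organised into a bilateral sum over the loop variable and then killed by $\zeta$-regularisation---is exactly the philosophy of the paper, but the argument as written has a genuine gap in the step you yourself flag as the main obstacle. The map $\vartheta=\jota\circ(\varrho+\varphi)$ lands entirely in $\hM_z(\n_+)$: the quadratic infinite sums coming from $\nu$ are \emph{one-sided} ($k\geq\max(0,n)$), the one-form part $\varphi(J_{a,n})$ is a \emph{finite} sum of terms $Q_{J_{a,n};b,l}(\gam)\gam^{b,l}[-1]\vac$, and the widening-gap remainder contributes finitely many further scalars. So the divergent scalar is a one-sided geometric sum $\sum_{k\geq N}z^{2k-n}$ plus finitely many monomials $c\,z^{M}$; it is \emph{not} a bilateral sum, and $\varphi$ does not ``supply the complementary infinite sum''. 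The bilateral structure only appears after the doubling trick of \cite{analogFFR}: one passes to $\tet:\VV_0^{\g,0}\to\bMM$, where the first product vanishes identically because the level is zero, and then one must prove (this is the content of \cref{lempi}) that the doubled product decomposes as $\reg$ of the positive copy plus $\reg$ of the negative copy plus a finite overlap, and (\cref{inv}) that $\reg$ is invariant under $z\mapsto z^{-1}$ so that the two copies contribute equally. That is the actual proof; your proposal skips it.

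Relatedly, the dismissal of the finite terms as ``immaterial normalisation'' is where the argument would fail quantitatively. The paper's $\reg$ acts on rational functions of the regulator $z$ and is sensitive to the exponent offsets: $\reg[z^{2}/(1-z^{2})]=-\tfrac12$ but $\reg[z^{8}/(1-z^{2})]=-\tfrac72$, so replacing $\sum_{k\geq N}z^{2k-n}$ by an unweighted $\sum_{k\geq N}1$ and invoking $\zeta(0)=-\tfrac12$ changes the answer by a finite amount depending on $n$ and $N$. In the paper's $\hsl_2$ example the product equals $\bigl(-14z^{2}-4z^{8}/(1-z^{2})\bigr)\vac$ and vanishes under $\reg$ only because the finite contribution $-14z^{2}$ from $\varphi$ exactly compensates $-4\cdot(-\tfrac72)$; with a different $\varphi$ (or with your unweighted regularisation) one would get a nonzero constant. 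So the identity $\reg[\sum_{k\in\ZZ}k^{j}]=0$ is a correct and relevant observation---it is essentially why the doubled bilateral sum in \cref{eq:10} vanishes---but it cannot be applied until one has established, via the doubling construction and \cref{lempi,inv}, that the one-sided undoubled product plus its mirror image under $\tau\circ\vartheta\circ\sigma$ reassembles into that bilateral sum and that each half carries exactly half of it.
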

In other words, after zeta-function regularisation, the divergent first products are not merely finite but, strikingly, are actually \emph{all} zero on the nose.
One should caveat that this statement does not immediately lead to a homomorphism of vertex algebras, essentially because the map $\reg$ does not respect associativity. However, the seemingly rather intricate cancellations involved are certainly highly suggestive.

In more detail, the $\zeta$-function regularisation procedure we employ works as follows.
We define a new vertex Lie algebra over the ring of rational function $\CC(z)$, whose products depend on a formal parameter $z$.
This parameter $z$ plays the role of a \emph{regulator}: this will enable us to identify the ill-defined quantities and ultimately to ``renormalise'' them. The map
\begin{align}
  \reg : \CC(z) \to \CC,
\end{align}
is defined through a series of steps: one first maps $z\mapsto e^{y} $ and then expands for small $y$. Then, after some manipulations of the resulting series, one extracts the constant term.
To see how it works, consider the following example
\begin{equation}
  \label{eq:exreg2}
  \begin{split}
    & \frac{z^2}{1-z^{2}} \leadsto \frac{e^{2y}}{1-e^{2y}}\\
    &\qquad \leadsto\frac{(1+2y+2y^2+\dots)}{1-1-2y-2y^2-\dots} = - \frac{1}{2y}\frac{(1+2y+\dots)}{(1+y+\dots)}\\
    &\qquad \qquad \leadsto -\frac{1}{2 y} (1+2y+\dots)(1-y+\dots)   = -\frac{1}{2y} -\frac12 + \Oa(y) \leadsto -\frac12
  \end{split}
\end{equation}
This is equivalent to removing the singular term and taking the standard limit $y\to0$ ($z\to 1$).

\medskip\smallskip
\begin{center}
  ***
\end{center}
The article is structured as follows.

In \cref{sec:def} we recall the definitions of vertex algebras and vertex Lie algebras and their relation.
In \cref{sec:Malgb}, we introduce the space $\hM_z[\leq 1]$, depending on the regulator $z$. We show that this space has the structure of a vertex Lie algebra over $\CC(z)$.
Finally, in \cref{sec:zeta} we introduce the regularisation procedure and use it to regularise the products of $\hM_z$. Section \ref{appB} is dedicated to the proof of the main theorem.

\section{Vertex (Lie) algebras}\label{sec:def}

\subsection{Definition and properties}
\label{sec:defpropVA}

\begin{definition}\label{def:VA}
  Given a graded vector space $\Vs=\sum_{k\in\ZZ}\Vs^{(k)}$ over a field $\KK$, a vector $\vac\in\Vs$ called the \emph{vacuum}, a map $T\in\End(\Vs)$ called \emph{translation}, and a map called the \emph{state-field correspondence} $Y:\Vs \to \End\Vs[[x,x\inv]]$, $a\mapsto Y(a,x):=\sum_{k\in\ZZ}a_{(k)}x^{-k-1}$, with $a_{(k)}\in\End\Vs$ of degree $\deg(a)-k-1$, \ie $a_{(k)}\Vs^{(n)}\subset \Vs^{(n+\deg(a)-k-1)}$, a \emph{vertex algebra} is the quadruple $(\Vs,\vac,T,Y(\bullet,x))$ satisfying the following axioms
  \begin{enumerate}[label=\roman*),ref=\it{\roman*)}]
  \item \label{VAi} Vacuum axiom: $Y(\vac,x)=\id_{\Vs}$;
  \item \label{VAii} Creation axiom:  for all $a\in\Vs$, $Y(a,x)\vac \in \Vs[[x]]$ and $Y(a,x)\vac\Big|_{x=0}=a$;
  \item \label{VAiii} Translation axiom: $T\vac=0$ and for all $a\in\Vs$, $[T,Y(a,x)] = \partial_xY(a,x)$;
  \item \label{VAiv} Borcherds' identities: for all $a,b\in\Vs$,
    \begin{align}
      \label{eq:borVA}
      \res_{x-y}\iota_{y,x-y}&f(x,y)Y(Y(a,x-y)b,y) =\nn \\
                             & \res_{x}\iota_{x,y}f(x,y)Y(a,x)Y(b,y) - \res_{x}\iota_{y,x}f(x,y)Y(b,y)Y(a,x),
    \end{align}
    where $f(x,y)$ is a rational function with poles at most at $x=0$, $y=0$ or $x-y=0$ and $\iota_{x,y}$ denote the formal power series expansion in the domain $\abs{x}>\abs{y}$.
  \end{enumerate}
\end{definition}

Given a state $a\in\Vs$, the field $Y(a,x)$ can be interpreted as the generating function of an infinite number of products, called the \emph{$n$-th products}, $\Vs\times\Vs\to \Vs$, $(a,b)\mapsto a_{(n)}b$ for all $n\in\ZZ$, such that $a_{(n)}b=0$ for sufficiently large $n$, where $a_{(n)}\in\End\Vs$ is called the \emph{$n$-th mode of $a$}.
It is possible to rephrase the axioms in terms of modes as follows
\textit{
  \begin{enumerate}[label=\roman*'),ref=\it{\roman*'}]
  \item \label{VAip} Vacuum axiom: for all $n\in\ZZ$, $\vac_{(n)}a=\delta_{n,-1}a$;
  \item \label{VAiip} Creation axiom: for all $a\in\Vs$, $n\in\ZZ_{\geq 0}$, $a_{(n)}\vac=0$ and  $a_{(-1)}\vac=a$;
  \item \label{VAiiip} Translation axiom: for all $a\in\Vs$, $n\in\ZZ$, $[T,a_{(n)}] = -na_{(n-1)}$;
  \item \label{VAivp} Borcherds' identities: for all $a,b\in\Vs$, $n,m\in\ZZ$,
    \begin{align}
      \label{eq:borVAp}
        \sum_{j=0}^{\infty}\binom{m}{j}(a_{(n+j)}b)_{(m+k-j)} =  \sum_{j=0}^{\infty}\binom{n}{j} &\Big( (-1)^j a_{(m+n-j)}(b_{(k+j)})\nn\\
                                                                                                 &\qquad  - (-1)^{j+n} b_{(n+k-j)}(a_{(m+j)})\Big).
    \end{align}
    where for any $m\in\ZZ$, $\binom{m}{k}=\frac{1}{k!}m(m-1)\dots(m-k+1)$ for all $k>0$ and $\binom{m}{0}=1$.
  \end{enumerate}
  }

  Setting $n=0$ in \cref{eq:borVAp}, we find the so-called \emph{commutator formula} for the modes,
  \begin{align}\label{commfor}
    [a_{(m)},b_{(k)}] = \sum_{j=0}^{\infty}(a_{(j)}b)_{(m+k-j)}.
  \end{align}
Expressing this in terms of state-field map, one finds the \emph{locality formula} (see \cite[Theorem 2.3]{KacVA}),
  \begin{align}\label{locform}
    (x-y)^N[Y(a,x),Y(b,y)] = 0, \qquad \text{for $N$ big enough}.
  \end{align}

Similarly, setting $m=0$ one obtains the \emph{associativity formula}, which can be seen as a recursive formula for the composition of modes,
 \begin{align}\label{eq:compmodes}
    (a_{(n)}b)_{(k)} = \sum_{j=0}^{\infty}\binom{n}{j} &\Big( (-1)^j a_{(n-j)}(b_{(k+j)})- (-1)^{j+n} b_{(n+k-j)}(a_{(j)})\Big).
  \end{align}
  In particular, setting $n=-1$ one gets
  \begin{equation}
    \label{eq:NO}
    Y(a_{(-1)}b,y) = Y(a,y)_+Y(b,y) + Y(b,y)Y(a,y)_-,
  \end{equation}
  where
  \begin{align}
    Y(a,x)_+:=\sum_{k < 0}a_{(k)}x^{-k-1}, \qquad Y(a,x)_-:=\sum_{k \geq 0}a_{(k)}x^{-k-1}.
  \end{align}
This can be generalised to define the so-called \emph{normal ordering of fields},
\begin{equation}
  \label{eq:no}
  \no{Y(a,x)Y(b,y)}:=\sum_{n\in\ZZ}\left( \sum_{m<0}a_{{m}}b_{{n}}x^{-m-1}+\sum_{m\geq 0}b_{(n)}a_{(m)}x^{-m-1}\right)y^{-n-1}.
\end{equation}
where the effect of the ordering is to move positive modes to the right, which in a field theory context can be interpreted as ``annihilation operators act first''.
This operation is right-associative, \ie $\no{Y(a,x)Y(b,y)Y(c,z)}=\no{Y(a,x)\no{Y(b,y)Y(b,z)}}$.

From these axioms, it is possible to show that $Y(a,x)b=e^{x T}Y(b,-x)a$ for all $a,b\in\Vs$. In the language of modes, this gives rise to the so-called \emph{skew-symmetry formula} for the products,
\begin{equation}
  \label{eq:skew}
  a_{(n)}b = -\sum_{j=0}^{\infty}\frac{(-1)^{n+j}}{k!}T^{j}(b_{(n+j)}a).
\end{equation}
For example, it turns out that the zeroth product is skew-symmetric modulo translations, while the first product is symmetric modulo translations.

\medskip

We now proceed to define vertex Lie algebras.
\begin{definition}\label{def:VLA}
  Given a graded vector space $\Ls=\sum_{k\in\ZZ}\Ls^{(k)}$ over a field $\KK$, a vector $\vac\in\Ls$ called the \emph{vacuum}, a map $T\in\End(\Ls)$ called \emph{translation}, and a map $Y_-:\Ls \to \End(\Ls)t\inv\CC[[t\inv]]$, $a\mapsto Y(a,x):=\sum_{k\geq 0}a_{(k)}x^{-k-1}$, with $a_{(k)}\in\End(\Ls)$ of degree $\deg(a)-k-1$, \ie $a_{(k)}\Ls^{(n)}\subset \Ls^{(n+\deg(a)-k-1)}$, a \emph{vertex Lie algebra} is the quadruple $(\Ls,\vac,T,Y_-(\bullet,x))$ satisfying the following axioms
  \begin{enumerate}[label=\roman*),ref=\it{\roman*}]
  \item \label{VLAi} $a_{(n)}\vac = 0$ for $n$ large enough;
  \item \label{VLAii} Translation axiom: $(T a)_{(n)}b = -n a_{(n-1)} b$;
  \item \label{VLAiii} Skew-symmetry axioms: $a_{(n)}b = - \sum_{k\geq 0}\frac{(-1)^{n+k}}{k!}T^k(b_{(n+k)}a)$;
  \item \label{VLAiv} Borcherds' identities: for all $a,b\in\Ls$,
      \begin{equation}
        \label{eq:borVLAp}
          \sum_{j=0}^{\infty}\binom{m}{j}(a_{(j)}b)_{(m+n-j)}c =  a_{(m)}b_{(n)}c - b_{(n)}a_{(m)}c
      \end{equation}
  \end{enumerate}
\end{definition}

It is clear that the polar part of a vertex algebra, obtained by forgetting about all terms with positive powers $x^n$, $n\geq 0$, in the state-field map $Y(\bullet,x)$, gives rise to a vertex Lie algebra.

To any vertex Lie algebra, one can associate a Lie algebra which is the Lie algebra of its modes, namely
\begin{align}
  \Lie(\Ls) = \Ls\otimes \CC((t))/\Im(T\otimes 1 + \id \otimes \partial_t).
\end{align}
with commutation relations given by
\begin{align}
  [A_{[m]},B_{[n]}] = \sum_{k\geq 0} \binom{m}{k}(A_{(n)}B)_{[m+n-k]},
\end{align}
where by $A_{[n]}$ we identify the image of $A\in\Ls$ in $\Lie(\Ls)$.
It is also possible to define the left-adjoint functor to the one sending a vertex algebra to its polar part (for more detail cfr. \cite{frenkelvertex}). The resulting space is the \emph{universal enveloping vertex algebra}
\begin{align}
  \VV(\Ls) : = U(\Lie(\Ls))\otimes_{U(\Lie(\Ls))_+}\CC,
\end{align}
which has the structure of a vertex algebra.

\section{The vertex Lie algebra $\hM_z[\leq1]$}\label{sec:Malgb}
\subsection{Heisenberg algebra and Fock module}\label{sec:Heis}
Let $\g$ be an affine Kac-Moody algebra with triangular decomposition $\g\simeq \n_-\oplus\h\oplus\n_+$, whose underlying finite algebra is denoted by $\og$.
Let $\sfA = \{(a,0)\}_{a\in\upcirc\Delta_+}\cup (\Ia\times\ZZ_{\geq 1})$ be an index set indexing a basis of $\n_+$, where $\upcirc{\Delta}_+$ is the set of positive roots of $\og$ and $\Ia = (\upcirc{\Delta}\setminus\{0\})\cup\{1,\dots,\rank\og\}$.

Let $z$ be a formal parameter. We denote by $\CC[z]$, the polynomial ring in $z$ with complex coefficients. Given $M,N \in \ZZ$ and $(a,m) \in \sfA$, consider the free unital associative algebra $\sfH_z$ generated by $\bet_{a,m}[M]$, $\gam^{a,m}[N]$ and $\bsf{1}$ quotiented by the ideal generated by the commutation relations
\begin{equation}
  \label{eq:commrelH}
  \begin{gathered}
    [\bet_{a,m}[M],\bet_{b,n}[N]]=0, \qquad  \qquad [\gam^{a,m}[M],\gam^{b,n}[N]]=0,\\
    [\bet_{a,m}[M],\gam^{b,n}[N]]= z^m\delta_{N+M,0}\delta^{b}_{a}\delta^n_{m}\bsf{1}.
  \end{gathered}
\end{equation}
This algebra can be seen as a ``deformation'' of the \emph{Heisenberg algebra} $\sfH$, that can be recovered by taking the limit $z\to 1$. We introduce the parameter $z$ with the role of \emph{regulator}: its meaning will be clear in the following sections.

In this section, unless otherwise stated, we work over the ring $\CC[[z]]$ of formal power series. As we will see, in certain special cases we can work over $\CC[z]$ or $\CC(z)$.

This algebra represents a system of free fields as it can be decomposed into $\sfH_z\cong \sfH_z^+\otimes\sfH_z^-$, where
\begin{align}
  \label{eq:Hsubalg}
  \sfH_z^-\simeq_{\CC}\{ \gam^{a,m}[N],\bet_{a,m}[N-1]\}_{(a,m)\in \sfA;N\leq 0}, \\
  \sfH_z^+\simeq_{\CC}\{ \gam^{a,m}[N],\bet_{a,m}[N-1]\}_{(a,m)\in \sfA;N > 0},
\end{align}
are called the \emph{creation} and \emph{annihilation} subalgebras, respectively.

Introducing a vacuum vector $\vac$, we call $\sfM_z$ the induced $\sfH_z$-module annihilated by $\sfH_z^+$, on which $\bsf{1}\vac=\vac$.

Denoting by $Q$ the root lattice of $\g$, there is the $\ZZ\times Q$ gradation of $\sfH_z$ and consequently of $\sfM_z$, in which $\bet_{a,n}[N]$ has grade $(N,\alpha)$ and $\gam^{a,n}[N]$ has grade $(N,-\alpha)$, whenever $J_{a,n}\in\g_{\alpha}$ and $\vac$ has degree $(0,0)$.
The natural depth gradation of $\sfM_z$ is given by
\begin{equation}
  \label{eq:depthgr}
  \sfM_z=\bigoplus_{i=0}^{\infty}\sfM_z[i]
\end{equation}
and a corresponding filtration $\sfM_z[\leq\! K]=\bigoplus_{i=0}^K\sfM_z[i]$ for $K\geq 0$.

More explicitly, the space $\sfM_z[0]$ is spanned by elements of the form $R(\gam[0])\vac$, while $\sfM_z[1]$ by finite linear combinations of elements of the form
\begin{equation}
  \label{eq:basM1}
  P(\gam[0])\bet_{a,m}[-1]\vac + Q(\gam[0])\gam^{a,m}[-1]\vac
\end{equation}
where $P,Q,R$ are polynomials in $\CC[\gam^{a,n}[0]]_{(a,n)\in \sfA}$.

\subsection{Vertex algebra structure on $\sfM_z$}\label{sec:VAstr}
The space $\sfM_z$ is called the \emph{Fock module of the $\bet\gam$-system of free fields} and it has the structure of a vertex algebra over $\CC[z]$, as defined in \cref{sec:defpropVA}. The state-field-map $Y$ is
\begin{equation}\label{eq:Ymap}
  Y: \sfM_z \to \End\sfM_z((x)), \qquad  A \mapsto Y(A,x):=\sum_{k\in\ZZ}A[k] x^{-k-1}
\end{equation}
satisfying the axioms \labelcref{VAi,VAii,VAiii,VAiv}, where the $k$th-mode is the map in $\End(\sfM_z)$ denoted by
\begin{equation}
  \label{eq:nmode}
  \sfM_z \to \End \sfM_z, \qquad a \mapsto a[k], \qquad  k\in\ZZ.
\end{equation}
The fields are
\begin{equation}
  \label{eq:fields}
  \bet_{a,n}(x):=\sum_{k\in\ZZ}\bet_{a,n}[k]x^{-k-1}, \qquad \gam^{a,n}(x):=\sum_{k\in\ZZ}\gam^{a,n}[k]x^{-k},
\end{equation}
where $\bet_{a,n}[k]:=(\bet_{a,n}[-1]\vac)_{(k)}$ and $\gam^{a,n}[k]:=(\gam^{a,n}[0]\vac)_{(k-1)}$.
Composite fields are obtained by using iteratively \cref{eq:compmodes}.
The translation map is defined as follows
\begin{equation}
  \label{eq:defT}
  T\gam^{a,n}[N]\vac = -(N-1)\gam^{a,n}[N-1]\vac, \qquad T\bet_{a,n}[N]\vac= -N \bet_{a,n}[N-1]\vac.
\end{equation}
for $N\in\ZZ_{\leq 0}$ and $T\vac=0$.

\subsection{Completion of the Fock module}\label{completion}
Having in mind to lift the homomorphism in \cref{eq:liehom1} to the vertex algebra case, we also need to be able to work with certain infinite sums.

This is achieved by working in a suitable completion of the space $\sfM_z$. This means that infinite sums will be allowed provided they truncate to finite ones, modulo terms containing $\bet_{a,m}$ and $z^m$, for $m$ big enough.

The explicit construction goes as follows. Let us denote by $\sfH^-_{z,\geq k}$ the two-sided ideal in $\sfH_z^-$ generated by $\{\bet_{a,m}[N]: m \geq k,  N\in\ZZ\}$.
Let
\begin{equation}
  \Is_z[\leq M]_k:=\sfM_z[\leq M] \cap (\sfH^-_{z,\geq k}\vac),
\end{equation}
for some $M\in\ZZ_{\geq 0}$. Therefore, $\Is_z[\leq M]_k$ is the subspace of $\sfM_z[\leq M]$ spanned by monomials of depth less or equal to $M$ in the creation operators with some factor $\bet_{a,m}[N]$, with $m \geq k$, $N\in\ZZ$.
One has
\begin{equation}
  \Is_z[\leq M]_0\supset \Is_z[\leq M]_1\supset \Is_z[\leq M]_2 \supset \dots
\end{equation}
with $\bigcap_{i=0}^{\infty} \Is_z[\leq M]_i=\{0\}$. One defines the completed subspaces
\begin{equation}\label{invlim}
  \tM_z[\leq M] := \varprojlim_{k} \sfM_z[\leq M] / \Is_z[\leq M]_{k}.
\end{equation}

To give an element of this inverse limit means to give an element of each space of the inverse system, in a manner compatible with the projection maps between them. In this sense one allows infinite sums: every element of the sum is well-defined because each of its truncations is well-defined.
Finally, one can consider a completion in the depth direction, from the system of inclusions
\begin{align}
  \tM_z[0] \subset \tM_z[\leq 1]\subset \dots
\end{align}
by taking the direct limit
\begin{align}\label{dirlim}
  \tM_z := \varinjlim_{M} \tM_z[\leq M].
\end{align}
Any element of $\tM_z$ is therefore a well-defined element in $\tM_z[\leq M]$ for some $M$.
We have the following result:
\begin{proposition}\label{propzz}
    The space $(\tM_z[\leq 1],\vac,T,Y(\bullet,x))$ is a vertex Lie algebra.
\end{proposition}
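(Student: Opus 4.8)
The plan is to verify the four axioms of a vertex Lie algebra (\cref{def:VLA}) for the quadruple $(\tM_z[\leq 1],\vac,T,Y_-(\bullet,x))$, where $Y_-$ is the polar part of the state-field map $Y$ restricted to $\tM_z[\leq 1]$. The essential point is that $\sfM_z$ is already a vertex algebra over $\CC[z]$ (\cref{sec:VAstr}), so its polar part $\sfM_z[\leq 1]$ is a vertex Lie algebra; the real content is to show that passing to the completion $\tM_z[\leq 1]$ defined in \cref{invlim}--\cref{dirlim} does not spoil anything, i.e.\ that all the relevant products remain well-defined after completion and that the axioms pass to the limit.

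First I would check that the $n$-th products for $n\geq 0$ are well-defined on $\tM_z[\leq 1]$, i.e.\ that they respect the inverse system. The key observations are: (i) the depth filtration behaves well under polar products, since $a_{(n)}b$ for $n\geq 0$ lands in depth $\leq \deg a + \deg b - n - 1$, and on $\tM_z[\leq 1]$ only finitely many products $a_{(0)}b$, $a_{(1)}b$ are nonzero with $a_{(0)}b$ of depth $\leq 1$ and $a_{(1)}b$ of depth $0$ — so the direct-limit structure \cref{dirlim} causes no trouble; (ii) for the inverse limit in the $\bet$-direction, one must check that if $b\in\Is_z[\leq 1]_k$ then $a_{(n)}b$ lies in $\Is_z[\leq 1]_{k'}$ for some $k'$ that tends to $\infty$ with $k$, and symmetrically in the first slot. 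This follows because the commutation relation \cref{eq:commrelH} only pairs $\bet_{a,m}$ with $\gam^{a,m}$ of the \emph{same} index $m$, and contracting with $\gam^{a,m}[0]$ from a fixed element $a\in\sfM_z[\leq1]$ (which involves only finitely many indices $m$) can lower the minimal $\bet$-index appearing by at most a bounded amount; the power $z^m$ attached to each contraction is exactly what guarantees convergence in the $\CC[[z]]$-adic sense. Hence each $a_{(n)}$ extends to a continuous endomorphism of $\tM_z[\leq 1]$, and $Y_-$ maps into $\End(\tM_z[\leq1])x\inv\CC[[x\inv]]$ as required, giving axiom \labelcref{VLAi}.

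Next, axioms \labelcref{VLAii} (translation), \labelcref{VLAiii} (skew-symmetry), and \labelcref{VLAiv} (Borcherds) are identities between the extended operators. Since each holds on the dense subspace $\sfM_z[\leq 1]$ — being inherited from the vertex algebra structure on $\sfM_z$ via \cref{eq:borVAp}, \cref{eq:skew}, \cref{VAiiip} — and since all maps involved ($T$, the modes $a_{(n)}$, and the finite sums appearing in each identity) are continuous with respect to the inverse-limit topology, the identities extend by continuity to all of $\tM_z[\leq1]$. One subtlety for \labelcref{VLAiii} and \labelcref{VLAiv}: the skew-symmetry and Borcherds sums are \emph{a priori} infinite sums in $k$ or $j$, but on elements of bounded depth they truncate (since $T^k$ raises depth or the mode degree forces vanishing), so these are genuinely finite sums and continuity is immediate. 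I would also note $T\vac = 0$ and the vacuum behaves correctly, which is built into \cref{eq:defT}.

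The main obstacle is the well-definedness check in the inverse limit, namely part (ii) above: one must be careful that the composite fields entering $Y(a,x)$ for $a\in\sfM_z[\leq1]$ (obtained by the iterated normal-ordering \cref{eq:no}, \cref{eq:compmodes}) interact with the completion correctly, and in particular that $a_{(1)}b$ — which is where the divergences flagged in \cref{sec:results} live — is actually a \emph{finite}, hence well-defined, element of $\tM_z[0]$ for $a,b\in\tM_z[\leq1]$, \emph{thanks to the regulator} $z$ in \cref{eq:commrelH}. Concretely, the problematic sum $\sum_{a\in\Ia}\sum_{k\geq 0}1\cdot\vac$ from \cref{sec:results} becomes, with the $z$-deformed bracket, a sum $\sum_k z^k$-type geometric series (or at worst $\sum_k z^k\delta_{N+M,0}$ patterns), which converges in $\CC[[z]]$; this is precisely why one must work over $\CC[[z]]$ rather than $\CC$, and it is the technical heart that makes $\tM_z[\leq1]$ a vertex Lie algebra whereas the na\"ive $z=1$ completion $\tM(\n_+)$ is not. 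Once this convergence is in hand, everything else is a routine transfer of identities from $\sfM_z$ to its completion.
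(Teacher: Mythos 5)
Your proof is correct and follows essentially the same route as the paper, which simply observes that each quotient $\sfM_z[\leq 1]/\Is_z[\leq 1]_k$ is a vertex Lie algebra over $\CC[z]/z^k\CC[z]$ and passes to the inverse limit over $\CC[[z]]$. Your more explicit verification — that the $z^m$ factor in the bracket \eqref{eq:commrelH} ties the $\bet$-index filtration to the $z$-adic filtration, so the polar products respect the inverse system and the axioms extend by continuity — is exactly the content the paper's two-sentence argument leaves implicit.
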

\begin{proof}
  The results follows from the fact that each space $\sfM_z[\leq 1] / \Is_z[\leq 1]_{k}$ has the structure of a vertex Lie algebra over $\CC[z]/z^k\CC[z]$, with $k\in\ZZ_{\geq 0}$. Therefore, the inverse limit \cref{invlim} defines a vertex Lie algebra over the inverse limit of rings $\CC[z]/z^k\CC[z]$, which is the ring of power series $\CC[[z]]$.
\end{proof}

For example, for $p(z),q(z)\in\CC[[z]]$, one would get
  \begin{align}
    \label{eq:exampleof1pr}
       (p(z)\sum_{k\geq 1} \gam^{a,k-1}[0]&\bet_{b,k}[-1]\vac)_{(1)}q(z)\sum_{j\geq 0}\gam^{a,j+1}[0]\bet_{b,j}[-1]\vac\nn\\
      & =p(z)q(z)\sum_{k\geq 1} \sum_{j\geq 1}\wick{\c1\gam^{a,k-1}[1]\c2\bet_{b,k}[0]\c2\gam^{c,j+1}[0]\c1\bet_{d,j}[-1]\vac}\nn\\
      & = -\delta^a_d\delta^c_b p(z)q(z)\sum_{k\geq 1}z^{2k - 1}\vac,
  \end{align}
  whose $k$-truncations are all well defined over $\CC[z]/z^k\CC[z]$, $k\in\ZZ_{\geq 0}$.

From this simple example we see the fact mentioned above: when $z\to 1$, the vertex algebra structure breaks down since ill-defined quantities start to appear.

\subsection{States with widening gap}\label{sec:WG}
We now restrict the space $\tM_z$, by excluding all those sums that give rise to infinite power series in $z$, in such a way that the vertex algebra structure is preserved.
This follows by extending the idea of widening gap introduced in \cite{analogFFR} to this setting.

A family of polynomials $\{P^{a,n}\}_{(a,n)\in\sfA}$ in $\CC[\gam^{b,m}]_{(b,m)\in\sfA}$ has widening gap if for all $K\geq 1$, there exists a $\overline{n}\in\ZZ_{>0}$, such that for all $n\geq \overline{n}$
\begin{equation}
  \label{eq:WG}
  P^{a,n}(\gam) \in\CC[\gam^{b,m} : m < n - K, b\in\Ia].
\end{equation}

Define $\bM_z\subset\tM_z$, as the space spanned by sums of the form
\begin{equation}
  \label{eq:genelM}
  \sum_{(a_i,m_i)_{i=1,\dots,n}\in\sfA}P^{a_1,m_1}(\gam) \cdots P^{a_n,m_n}(\gam) \bet_{a_1,m_1}[-N_1] \cdots \bet_{a_n,m_n}[-N_n] \vac
\end{equation}
where $N_i\in\ZZ_{>0}$, $i=1,\dots,n$ and the polynomials $P^{a_i,m_i}(\gam)$ have widening gap.
By construction, we have that $\sfM_z\subset \bMz$, since any finite sum has obviously a widening gap.

We have the following useful result,
  \begin{lemma}\label{lem:der}
    Given a collection of polynomials with widening gap $\{P^{b,m}(\gam)\}_{(b,m)\in\sfA}$,
    \begin{equation}
      \label{eq:der}
      \bet_{a,n}[N]P^{b,m} (\gam)\vac= z^{n}\frac{\partial P^{b,m}(\gam)}{\partial \gam^{a,n}[-N]}\vac
    \end{equation}
    is again a collection of polynomials with widening gap, with $(a,n)\in\sfA$ and $N\in\ZZ_{\geq 0}$.
  \end{lemma}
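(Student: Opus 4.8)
The statement to prove is \Cref{lem:der}: the collection of polynomials obtained from a widening-gap collection $\{P^{b,m}(\gam)\}$ by acting with a creation/annihilation mode $\bet_{a,n}[N]$, $N\ge 0$, is again widening gap.  The plan is to first establish the pointwise formula in \eqref{eq:der}, namely $\bet_{a,n}[N]\,P^{b,m}(\gam)\vac = z^{n}\,\partial P^{b,m}/\partial\gamma^{a,n}[-N]\,\vac$, and then to check that differentiation by a single variable $\gamma^{a,n}[-N]$ preserves the widening-gap condition \eqref{eq:WG}, uniformly in the family index $(b,m)$.

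\emph{Step 1: the formula.}  I would argue that $\bet_{a,n}[N]$ for $N\ge 0$ acts on $\sfM_z$ as an annihilation operator (it annihilates $\vac$ by the creation axiom for $N\ge 0$, and annihilates $\gam^{c,p}[0]$ unless it can be contracted).  Using the commutation relations \eqref{eq:commrelH}, $[\bet_{a,n}[N],\gam^{b,m}[M]] = z^{n}\delta_{N+M,0}\delta^b_a\delta^m_n\bm1$, the only nonzero contraction of $\bet_{a,n}[N]$ inside $P^{b,m}(\gam)\vac$ is with a factor $\gamma^{a,n}[-N]$ (note this requires $-N$ to be an admissible depth, i.e.\ $N\ge 0$ so that $\gamma^{a,n}[-N]$ is a creation operator; for $N>0$ this is an actual depth-raising mode present in $P$ only if $P$ is not purely a polynomial in the $\gam[0]$'s — but the statement and \eqref{eq:WG} are phrased for general monomials, so one writes $P^{b,m}\in\CC[\gamma^{c,p}]$ and contracts accordingly).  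Commuting $\bet_{a,n}[N]$ to the right through the monomial and using the Leibniz rule for the commutator-with-a-product then yields exactly $z^{n}\,\partial_{\gamma^{a,n}[-N]}P^{b,m}(\gam)\,\vac$, since each contraction produces the factor $z^{n}$ and deletes one occurrence of $\gamma^{a,n}[-N]$.  This is a standard $\beta\gamma$ Wick-type computation and I would present it in one or two lines.

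\emph{Step 2: widening gap is preserved.}  Fix $K\ge 1$.  Because $\{P^{b,m}\}$ has widening gap, there is $\overline{n}\in\ZZ_{>0}$ with $P^{b,m}(\gam)\in\CC[\gamma^{c,p} : p < m-K,\ c\in\Ia]$ for all $m\ge\overline{n}$.  Differentiating with respect to $\gamma^{a,n}[-N]$ either kills the polynomial (if $\gamma^{a,n}[-N]$ does not occur, in particular whenever $n\ge m-K$) or produces a polynomial still in the \emph{same} variable set $\CC[\gamma^{c,p} : p < m-K,\ c\in\Ia]$, since $\partial/\partial\gamma^{a,n}[-N]$ does not introduce any new variables.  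Hence $z^{n}\,\partial_{\gamma^{a,n}[-N]}P^{b,m}(\gam)\in\CC[\gamma^{c,p} : p < m-K,\ c\in\Ia]$ for all $m\ge\overline{n}$, with the \emph{same} threshold $\overline{n}$; the extra scalar factor $z^{n}\in\CC[z]$ is irrelevant to the variable-content condition \eqref{eq:WG}.  Therefore the new collection satisfies \eqref{eq:WG} for every $K$, i.e.\ it has widening gap.  One should also note the new collection is still indexed by $(b,m)\in\sfA$, so it is a legitimate object of the same type, and it still lies in $\bM_z$ (finiteness/truncation of each sum modulo $\sfH^-_{z,\ge k}\vac$ is inherited, since differentiation only decreases the number of $\gamma$-factors and cannot create new $\beta$-factors).

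\emph{Main obstacle.}  The genuinely delicate point is Step 1 in the case $N>0$: one must be careful that $\gamma^{a,n}[-N]$ is a legitimate creation generator appearing in the monomials and that the contraction $[\bet_{a,n}[N],\gamma^{a,n}[-N]]=z^{n}$ (with $M=-N$ so $N+M=0$) is the \emph{only} surviving term — all other commutators vanish by the Kronecker deltas, and the term where $\bet_{a,n}[N]$ passes all the way to $\vac$ vanishes by the creation axiom since $N\ge 0$.  Once this bookkeeping is pinned down, Step 2 is essentially immediate, the whole content being the trivial observation that partial differentiation never enlarges the set of variables occurring in a polynomial — which is exactly what the widening-gap condition constrains.
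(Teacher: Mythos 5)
Your proof is correct. The paper states this lemma without giving any proof (treating it as routine), and your argument --- the Wick/Leibniz computation in which the only surviving contraction is $[\bet_{a,n}[N],\gam^{a,n}[-N]]=z^{n}$ while the term with $\bet_{a,n}[N]$ reaching $\vac$ vanishes, followed by the observation that $\partial/\partial\gamma^{a,n}[-N]$ cannot enlarge the set of variables occurring in $P^{b,m}$ and hence preserves the condition \eqref{eq:WG} with the same threshold $\overline{n}$ --- is exactly the standard argument the paper implicitly intends.
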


The following statement characterises the space of states with widening gap
\begin{lemma}\label{lem:MVA}
  The space $(\bMn1,\vac,T,Y(\bullet,z))$ is a vertex Lie algebra.
\end{lemma}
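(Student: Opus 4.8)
The plan is to show that the vertex Lie algebra structure on $\tM_z[\leq 1]$ established in \cref{propzz} restricts to the subspace $\bMn1$. Concretely, one must verify that $\bMn1$ is closed under the translation operator $T$, contains the vacuum $\vac$, and is closed under the polar products $a_{(k)}b$ for all $k\geq 0$. The vacuum lies in $\bMn1$ trivially (it is a finite sum), and closure under $T$ is immediate from the explicit formula \cref{eq:defT}: applying $T$ to a monomial of the form \cref{eq:genelM} shifts one of the mode labels $N_i\mapsto N_i-1$ and rescales by a constant, which neither introduces new $\gamma$-factors nor alters the loop degrees, so the widening gap property \cref{eq:WG} is preserved termwise. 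Since $\bMn1\subset\tM_z[\leq1]$ and the latter is already known to be a vertex Lie algebra, the only genuine content is the closure of $\bMn1$ under the nonnegative products; the axioms \labelcref{VLAi,VLAii,VLAiii,VLAiv} then hold automatically because they hold in the ambient space.

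The main step is therefore to compute $a_{(k)}b$ for $a,b$ of the form \cref{eq:genelM} and check the output again has widening gap. Here \cref{lem:der} does the heavy lifting. Since $a,b\in\bMn1$ each have depth $\leq 1$, we may write them as (infinite, widening-gap) combinations of $P(\gam)\bet_{a,m}[-1]\vac$ and $Q(\gam)\gam^{a,m}[-1]\vac$ and the pure-depth-zero $R(\gam)\vac$; the product $a_{(k)}b$ is computed by Wick contractions, where each contraction of a $\bet_{a,m}[\,\cdot\,]$ against a $\gam$ produces a factor $z^m$ together with a partial derivative $\partial/\partial\gamma^{a,m}[\,\cdot\,]$ acting on the polynomial coefficients (this is exactly the mechanism in \cref{eq:der} and illustrated in \cref{eq:exampleof1pr}). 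By \cref{lem:der}, differentiating a widening-gap family of polynomials yields a widening-gap family, and multiplying two such families is again widening gap (the loop-degree bound \cref{eq:WG} for the product follows by intersecting the two bounds and is, if anything, sharper). The surviving uncontracted creation operators $\bet\cdot[-N]$ keep depth $\leq 1$ since each of $a,b$ contributes at most one of them and at least one gets contracted down whenever $k\geq 0$ forces a contraction; one has to track that the bookkeeping of which indices are summed is compatible with the widening-gap condition, but this is the same combinatorial pattern as in \cref{eq:exampleof1pr}. One should also confirm that the resulting $z$-series actually lies in $\CC[[z]]$ rather than requiring negative powers: each contraction contributes $z^{m}$ with $m\geq 0$, and the widening gap precisely guarantees that for fixed power $z^{m}$ only finitely many terms of the infinite sum contribute, so the output is a well-defined element of $\tM_z[\leq1]$ and in fact of $\bMn1$.

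The step I expect to be the main obstacle is verifying that the widening-gap property survives the full product and not merely a single contraction — that is, controlling the interaction between the \emph{infinite} sums defining $a$ and $b$ and the sum over Wick contractions, so that for every $K$ one can still exhibit a uniform threshold $\overline n$ beyond which the coefficient polynomials avoid $\gamma^{b,m}$ with $m\geq n-K$. This requires combining the two separate thresholds coming from $a$ and from $b$ with the shift in loop degree induced by differentiation, and checking that the shifts go the right way. A secondary technical point is to make sure that $\bMn1$ is stable under the normal-ordered products used to build composite fields, though for a vertex Lie algebra only the polar ($k\geq 0$) products are needed, which sidesteps most of that. Once these closure checks are in place, the proposition follows by the general principle that a $T$-stable, vacuum-containing, polar-product-closed subspace of a vertex Lie algebra is itself a vertex Lie algebra.
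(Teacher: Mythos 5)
Your proposal is correct and follows essentially the same route as the paper, which gives no independent argument here but defers to \cite[Lemma 21]{analogFFR}: one checks that the widening-gap subspace is closed under the polar products via \cref{lem:der} (together with the incompatibility of the two gap conditions, which makes the double-contraction sums finite) and inherits the axioms from the ambient space of \cref{propzz}. The only point to watch is that $T$ applied to a $\gam^{a,n}[0]$ factor raises the depth, so stability under $T$ holds only in the truncated sense already implicit in \cref{propzz}; this does not affect the widening-gap property, which depends only on loop degrees.
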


The proof is essentially the same of \cite[Lemma 21]{analogFFR}, with the addition of the regulator which is this setting does not produce any particular difference. As a matter of fact, when taking the limit $z\to 1$, the vertex Lie algebra structure is not spoiled.

\subsection{The space $\hMn{1}$}\label{sec:M}
Recall from \cref{sec:WGsub} that the image of the element $J_{a,n}\in\g$ under the map $\varrho$ from \cref{eq:liehom1} contains sums that don't develop a widening gap, namely the image of the map $\nu$ from \cref{eq:8}.
To take care of these terms, in this section we introduce a slightly bigger space obtained by adjoining the specific type of infinite sums from \cref{eq:8}.

Define by $\Oa_z=\CC(z)[\gam^{a,n}[0]]_{(a,n)\in\sfA}$ the ring of polynomials in the generators $\gam^{a,n}[0]$, $(a,n)\in\sfA$ with coefficients in $\CC(z)$, the ring of rational functions.

We define the space of \emph{quadratic infinite sums} $\sfQ_z$ as follows
\begin{equation}
  \label{eq:Qdef}
  \sfQ_z := \left\{ \sum_{k\geq \max(0,n)} z^{\alpha k}\gam^{a,k-n}[0]\bet_{b,k}[-1]\vac : \alpha\in\ZZ_{\geq 0}, n\in\ZZ \right\},
\end{equation}
which clearly are sums that do not develop a widening gap.

Working now over $\CC(z)$, we introduce the direct sum
\begin{equation}
  \label{eq:Mhat}
  \hMn{1} := \sfQ_z\oplus \bMn{1} \subset \tMn{1}.
\end{equation}
Elements in $\hM_z[0]$ are finite sums of terms $R(\gam)\vac$, $R(\gam)\in\Oa_z$, while elements in $\hM_z[1]$ are of the form
\begin{equation}
  \begin{split}
    &\sum_{k\geq \max(0,n)} z^{\alpha k}\gam^{a,k-n}[0]\bet_{b,k}[-1]\vac \\
    &\qquad+ \sum_{(a,n)\in\sfA}P^{a,n}(\gam)\bet_{a,n}[-1]\vac + \sum_{b,m} Q_{b,m}(\gam)\gam^{b,m}[-1]\vac,
  \end{split}
\end{equation}
where the second possibly infinite sum is over a family of polynomials $P^{a,n}(\gam)\in\Oa_z$ with widening gap, while the last sum is finite and $Q(\gam)\in\Oa_z$.

\subsection{Vertex Lie algebra $\hMn{1}$}
\label{sec:VLA}

Extending the definition of the modes of the states from \cref{sec:VAstr} to this space, we see that some products can generate infinite power series in $z$.
In particular, we find that the first products between quadratic infinite sums have coefficients in $\CC[[z]]$:
    \begin{align}\label{eq:1Mhat}
         (&\sum_{k\geq \max(0,n)}z^{\alpha k}\gam^{a,k-n}[0]\bet_{b,k}[-1]\vac)_{(1)}\sum_{j\geq \max(0,m)}z^{\beta j}\gam^{c,j-m}[0]\bet_{d,j}[-1]\vac\nn\\
        =& \sum_{k\geq \max(0,n)}\sum_{j\geq \max(0,m)}z^{\alpha k}z^{\beta j}\wick{\c1\gam^{a,k-n}[1]\c2\bet_{b,k}[0]\c2\gam^{c,j-m}[0]\c1\bet_{d,j}[-1]\vac}\nn\\
      =& -\delta^a_d\delta^c_b\delta_{m+n,0}z^{-n(\beta +1)}\sum_{k\geq  \max(0,m+n)}z^{k(\alpha+\beta+2)}\vac \\
          &\qquad -\delta^a_d\delta^c_b\delta_{m+n,0}z^{-n(\beta +1)}\sum_{k\geq  \max(0,n,m+n)}^{\max(0,m+n)-1}z^{k(\alpha+\beta+2)}\vac.
    \end{align}
In the last step, note that the second sum is quadratic but finite, hence it is a well defined element in $\CC[z]\subset\CC(z)$.
Conversely, the first term is an infinite sum in $\CC[[z]]$. Crucially, we can regard it as the expansion of a rational function for $\abs{z}< 1$. By doing this, we can rewrite the first term as
\begin{equation}\label{resummation}
    \sum_{k\geq \max(0,m+nn)}z^{k(\alpha+\beta+2)} = \frac{z^{\max(0,m+n)(\alpha+\beta+2)}}{1-z^{\alpha+\beta+2} }\in\CC(z).
\end{equation}

As a result, we have the following
\begin{proposition}\label{VAAL}
  Regarding infinite sums as the small-$z$ expansions of rational functions, $(\hMn{1},\vac,T,Y(\bullet,x))$ is a vertex Lie algebra over $\CC(z)$.
\end{proposition}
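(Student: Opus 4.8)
The plan is to reduce Proposition \ref{VAAL} to the two vertex-Lie-algebra structures already in hand --- that of $\bMn1$ (\cref{lem:MVA}) and that of $\tMn1$ over $\CC[[z]]$ (\cref{propzz}) --- by checking that the direct summand $\sfQ_z$ does not take us out of $\hMn1$ once we allow ourselves to resum geometric series. Since $\bMn1$ is already a vertex Lie algebra, and $\hMn1 = \sfQ_z\oplus\bMn1$ as a graded space concentrated in depths $0$ and $1$, the only axioms that can fail are those involving a quadratic infinite sum $q\in\sfQ_z$ as an argument of a nonnegative product. By depth counting, $q_{(k)}(\bullet)$ lands in depth $\leq 0$ for $k\geq 1$ and in depth $\leq 1$ for $k=0$; moreover $q_{(k)}=0$ for $k\geq 2$, and a depth-$0$ state (a polynomial $R(\gam)\vac\in\Oa_z$) has all nonnegative modes acting as $\CC(z)$-linear differential operators that manifestly preserve $\hMn1$. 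So the whole question comes down to a finite list of products: $q_{(0)}b$, $q_{(0)}q'$, $q_{(1)}b$, $q_{(1)}q'$ with $q,q'\in\sfQ_z$ and $b$ a depth-$\leq1$ basis state of the form \eqref{eq:basM1}.

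First I would record the explicit Wick computations. The product $q_{(1)}q'$ is exactly \cref{eq:1Mhat}: it produces a finite quadratic tail in $\CC[z]$ plus the infinite geometric series which, via \eqref{resummation}, is the small-$z$ expansion of $z^{\max(0,m+n)(\alpha+\beta+2)}/(1-z^{\alpha+\beta+2})\in\CC(z)$, times $\vac$; hence $q_{(1)}q'\in\hM_z[0]$. Next, $q_{(0)}b$ and $q_{(1)}b$: writing $q=\sum_{k\geq\max(0,n)}z^{\alpha k}\gam^{a,k-n}[0]\bet_{b,k}[-1]\vac$ and $b=P(\gam)\bet_{c,\ell}[-1]\vac+Q(\gam)\gam^{c,\ell}[-1]\vac$, one Wick-contracts; the $\bet_{b,k}[0]$ in $q$ either contracts against the $\gam^{c,\ell}$ in $b$ (using \cref{lem:der}, which gives a derivative of $P$ or $Q$, times $z^{k}$) or kills $b$, while the $\gam^{a,k-n}$ in $q$ acts by multiplication. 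Because $P$ has widening gap, for the contraction against $\partial P/\partial\gam^{c,\ell}$ the index $\ell$ is pinned, so $k$ is pinned to $\ell$ and the $k$-sum collapses to a single term --- no infinite series arises, and the output is a finite sum of widening-gap states (multiplying a widening-gap family by a single fixed generator $\gam^{a,\ell-n}[0]$ preserves the widening-gap condition \eqref{eq:WG} by shifting $K$). For the contraction against $Q$ the sum is already finite since $Q$ is a finite sum. Thus $q_{(k)}b\in\bMn1\subset\hMn1$ for $k=0,1$. Finally $q_{(0)}q'$: here the $\gam^{c,j-m}[0]$ of $q'$ is a coordinate, not a $\bet$, so the only contraction is $\bet_{b,k}[0]$ against $\bet_{d,j}[0]$, which vanishes; the surviving term is $\gam^{a,k-n}[0]$ times (the zeroth mode of $\bet_{b,k}$ acting trivially on) $q'$, i.e. again a quadratic-type infinite sum. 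One checks it lies in $\sfQ_z\oplus\bMn1$: either it is identically a $\sfQ_z$ element (when the double index structure matches up), or the $k$-sum telescopes because of a delta that pins $k$, leaving a finite widening-gap piece.

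With all nonnegative products shown to land in $\hMn1$, the remaining axioms are inherited. The translation operator $T$ from \eqref{eq:defT} maps $\sfQ_z$ into $\sfQ_z\oplus\bMn1$ (apply $T$ termwise; the $\gam^{a,k-n}[0]$ factor has $T$-weight forcing a $\gam[-1]$, producing a quadratic sum again or a widening-gap state), so axiom \ref{VLAii}, the translation axiom, makes sense and holds termwise since it holds in each $\sfM_z[\leq1]/\Is_z[\leq1]_k$. Skew-symmetry \ref{VLAiii} and the Borcherds identity \ref{eq:borVLAp} are identities among the finitely many modes just enumerated; each is an identity of formal power series in $z$ that already holds over $\CC[[z]]$ by \cref{propzz}, and all the series occurring are expansions of rational functions (geometric series, by the computations above), so the identities continue to hold after we reinterpret them as identities in $\CC(z)$ --- the expansion map $\CC(z)_{|z|<1}\hookrightarrow\CC[[z]]$ is an injective ring homomorphism. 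I expect the main obstacle to be the bookkeeping in $q_{(0)}q'$ and in the widening-gap part of $q_{(k)}b$: one must verify carefully that after pinning an index via a Kronecker delta the residual sum is genuinely finite (or genuinely a single $\sfQ_z$-generator), and that multiplying a widening-gap family by the coordinate $\gam^{a,\ell-n}[0]$ really does preserve condition \eqref{eq:WG} uniformly. These are the points where the direct-sum decomposition $\hMn1=\sfQ_z\oplus\bMn1$ could in principle leak, and they are what the proof needs to nail down.
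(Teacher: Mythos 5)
Your overall strategy is the same as the paper's: reduce to the ${}_{(0)}$ and ${}_{(1)}$ products, check closure case by case for the products involving $\sfQ_z$, resum the geometric series into $\CC(z)$, and then argue the axioms survive the reinterpretation. Your treatment of $q_{(1)}q'$, of the products against widening-gap states, and of the Borcherds identities (via injectivity of the small-$z$ expansion of rational functions regular at $0$, where the paper instead observes that the only nested first product is $u_{(1)}(v_{(1)}w)=(u_{(1)}v)_{(1)}w$ for $u,v,w\in\sfQ_z$, which is trivially satisfied because $u_{(1)}v\in\CC(z)\vac$ and positive modes kill the vacuum) is essentially sound.

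There is, however, a concrete error in your computation of $\npr{0}{\sfQ_z}{\sfQ_z}$. You assert that the only possible contraction is ``$\bet_{b,k}[0]$ against $\bet_{d,j}[0]$, which vanishes,'' so that $q_{(0)}q'$ is essentially $q'$ multiplied by a coordinate. This misreads the contraction structure: $\bet$ never contracts with $\bet$; by \eqref{eq:commrelH} the two nontrivial single contractions are $\bet_{b,k}[0]$ against $\gam^{c,j-m}[0]$, producing $z^{k}\delta^c_b\delta_{k,j-m}$, and $\gam^{a,k-n}[1]$ against $\bet_{d,j}[-1]$, producing (up to sign) $z^{j}\delta^a_d\delta_{j,k-n}$. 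Each delta pins one summation index to the other and yields a new semi-infinite quadratic sum of $z$-weight $\alpha+\beta+1$, plus finite boundary terms from re-indexing the lower limits of summation; see \eqref{eq:qq}. Followed literally, your mechanism would give $q_{(0)}q'$ proportional to $q'$ (or zero), which is false --- these zeroth products are precisely what reproduce the $\gl(\og)[t,t\inv]$ relations \eqref{eq:Scommrel}. This computation is also the one place where one sees why arbitrary weights $z^{\alpha k}$, $\alpha\in\ZZ_{\geq 0}$, must be allowed in the definition \eqref{eq:Qdef}: the product of two generators of weights $\alpha$ and $\beta$ has weight $\alpha+\beta+1$, so with $\alpha$ fixed to $0$ the space $\sfQ_z$ would not be closed. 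Your argument cannot detect this. A secondary, more minor point: you treat the widening-gap factor $b$ as a single basis state \eqref{eq:basM1}, whereas a general element of $\bMn{1}$ is an infinite widening-gap sum $\sum_{(c,n)\in\sfA}P^{c,n}(\gam)\bet_{c,n}[-1]\vac$; the resulting double sum in $q_{(0)}b$ is genuinely infinite and one must invoke \cref{lem:der} to see that it again has widening gap, as in \eqref{eq:6}.
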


\begin{proof}

We first show that the products on $\hM_z$ close over $\CC[[z]]$.

Since we are considering the first two graded subspaces $\hM_z[0]$ and $\hM_z[1]$, we restrict our analysis on the only two non-trivial products on this space: the ${}_{(0)}$ and ${}_{(1)}$ product. In more explicit terms, for $f\in\hM_z[0]$ and $u,v\in\hM_z[1]$, we have the following possible non-trivial combinations
\begin{align}\label{ntrivial}
    f_{(0)} u \in \hM_z[0] ,\qquad u_{(0)} f \in \hM_z[0] ,\qquad u_{(0)}v \in \hM_z[1], \qquad u_{(1)}v \in \hM_z[0].
\end{align}

Moreover, we just need to focus on cross products, \ie products between elements in $\sfQ_z$ and $\bMn{1}$, and products of two infinite quadratic sums in $\sfQ_z$. The closure of the products between elements in $\bMn{1}$ follows from the fact that it already has the structure of a vertex Lie algebra, as pointed out in \cref{lem:MVA}.

Let us start considering the ${}_{(0)}$ products.
It is easy to show that expressions like $\npr{0}{\sfQ_z}{R(\gam)}\vac$ and $\npr{0}{\sfQ_z}{R(\gam)}\gam[-1]\vac$ or those with the factors flipped close in $\hMn{1}$ for any element $R(\gam)\in\Oa_z$, as they give rise to finite sums of polynomials of depth 0 or 1 which are well-defined elements in $\hMn{1}$, respectively.

    We need to show the closure for $\npr{0}{\sfQ_z}{\sfQ_z}$ and $\npr{0}{\sfQ_z}{\bMn{1}}$. The former reads
    \begin{align}\label{eq:qq}
      & (\sum_{k\geq \max(0,n)}z^{\alpha k}\gam^{a,k-n}[0]\bet_{b,k}[-1]\vac)_{(0)}\sum_{j\geq\max(0,m)}z^{\beta j}\gam^{c,j-m}[0]\bet_{d,j}[-1]\vac \nonumber\\
      &=\sum_{k\geq \max(0,n)}\sum_{j\geq \max(0,m)}z^{\alpha k}z^{\beta j}\wick{\gam^{a,k-n}[0]\c1\bet_{b,k}[0]\c1\gam^{c,j-m}[0]\bet_{d,j}[-1]\vac}\nonumber\\
      & + \sum_{k\geq \max(0,n)}\sum_{j\geq \max(0,m)}z^{\alpha k}z^{\beta j}\wick{\c1\gam^{a,k-n}[1]\bet_{b,k}[-1]\gam^{c,j-m}[0]\c1\bet_{d,j}[-1]\vac}\nonumber\\
      = & +\delta^c_bz^{-m(\alpha+1)}\sum_{k\geq \max(0,m,m+n)}z^{(\alpha+\beta+1) k}\gam^{a,k-(m+n)}[0]\bet_{d,k}[-1]\vac\nonumber\\
      &\qquad -\delta^a_dz^{-n(\beta+1)}\sum_{k\geq \max(0,n,m+n)}z^{(\alpha+\beta+1) k}\gam^{c,k-(m+n)}[0]\bet_{b,k}[-1]\vac\nn\\
      = & \delta^c_b z^{-m(\alpha+1)}\sum_{k\geq \max(0,m+n)}z^{(\alpha+\beta+1) k}\gam^{a,k-(m+n)}[0]\bet_{d,k}[-1]\vac\nn \\
      & \qquad -\delta^a_d z^{-n(\beta+1)}\sum_{k\geq \max(0,m+n)}z^{(\alpha+\beta+1) k}\gam^{c,k-(m+n)}[0]\bet_{b,k}[-1]\vac\nn\\
      & + \delta^c_b z^{-m(\alpha+1)}\sum_{k\geq \max(0,m,m+n)}^{\max(0,m+n)-1}z^{(\alpha+\beta+1) k}\gam^{a,j-(m+n)}[0]\bet_{d,k}[-1]\vac\nn\\
      &\qquad - \delta^a_d z^{-n(\beta+1)}\sum_{k\geq \max(0,n,m+n)}^{\max(0,m+n)-1}z^{(\alpha+\beta+1) k}\gam^{c,j-(m+n)}[0]\bet_{b,k}[-1]\vac
    \end{align}

    In the last step, we have further decomposed the sums into two terms that manifestly live in $\sfQ_z$, while the rest are finite sums. Note \emph{en passant} the reason why in the definition \eqref{eq:Qdef} we had to include the power of $z^{\alpha}$; indeed, setting $\alpha,\beta=0$, we see that the first two terms in the last expression would not be well-defined.

    Consider now the product between one element in $\sfQ_z$ and one infinite sum with widening gap in $\bM_z[1]$,
    \begin{align}\label{eq:6}
      (&\sum_{k\geq \max(0,m)} z^{\alpha k}\gam^{a,k-m}[0]\bet_{b,k}[-1]\vac)_{(0)}\sum_{(c,n)\in\sfA}P^{c,n}\bet_{c,n}[-1]\vac\nn\\
       &=  \sum_{k\geq \max(0,m)} z^{\alpha k}\wick{\gam^{a,k-m}[0]\c1\bet_{b,k}[0]\sum_{(c,n)\in\sfA}\c1 P^{c,n}\bet_{c,n}[-1]\vac}\nn\\
       &\qquad  +\sum_{k\geq \max(0,m)} z^{\alpha k}\wick{\c1\gam^{a,k-m}[1]\bet_{b,k}[-1]\sum_{(c,n)\in\sfA}P^{c,n}\c1\bet_{c,n}[-1]\vac}\nn\\
       &= \sum_{k\geq \max(0,m)}\sum_{(c,n)\in\sfA} z^{k(\alpha +1)}\frac{\partial P^{c,n}}{\partial \gam^{b,k}}\gam^{a,k-m}[0]\bet_{c,n}[-1]\vac\nn\\
       & \qquad-\sum_{k\geq \max(0,m)}\sum_{(c,n)\in\sfA} z^{k(\alpha+1) -m}\delta_{k-m,n}\delta^a_c P^{c,n}\bet_{b,k}[-1]\vac
    \end{align}
    The first sum is well defined because, from \cref{lem:der}, it has widening gap. The second sum is non-zero only if $n>-m$, for all $n,m\in\ZZ$: this ensures that the combination $k=n+m$ is non-negative, and therefore also the second sum develops widening gap.

   Let us consider now the $_{(1)}$ product.
    We need to show the closure only for products of elements of depth 1, namely $\npr{1}{\sfQ_z}{\sfQ_z}$ and $\npr{1}{\sfQ_z}{\bM_z[1]}$. The first case is
    \begin{align}\label{eq:7}
      & (\sum_{k\geq \max(0,n)}z^{\alpha k}\gam^{a,k-n}[0]\bet_{b,k}[-1]\vac)_{(1)}\sum_{j\geq \max(0,m)}z^{\beta j}\gam^{c,j-m}[0]\bet_{d,j}[-1]\vac\nn\\
      =& \sum_{k\geq \max(0,n)}\sum_{j\geq \max(0,m)}z^{\alpha k}z^{\beta j}\wick{\c1\gam^{a,k-n}[1]\c2\bet_{b,k}[0]\c2\gam^{c,j-m}[0]\c1\bet_{d,j}[-1]\vac}\nn\\
      =& -\delta^a_d\delta^c_b\delta_{m+n,0} z^{-n(\beta+1)} \sum_{k\geq \max(0,n,m+n)}z^{k(\alpha+\beta+2)}\vac
    \end{align}

    Consider now the product between one element in $\sfQ_z$ and one infinite sum with widening gap in $\bM_z[1]$,
    \begin{equation}
      \label{eq:deriv}
      \begin{split}
        &(\sum_{k\geq \max(0,m)} z^{\alpha k}\gam^{a,k-m}[0]\bet_{b,k}[-1]\vac)_{(1)}\sum_{(c,n)\in\sfA}P^{c,n}\bet_{c,n}[-1]\vac\\
        =&\sum_{k\geq \max(0,m)} \sum_{(c,n)\in\sfA}z^{\alpha k}\wick{\c2\gam^{a,k-m}[1]\c1\bet_{b,k}[0]\sum_{(c,n)\in\sfA}\c1 P^{c,n}\c2\bet_{c,n}[-1]\vac}\\
        = &-\sum_{k\geq \max(0,m)} \sum_{(c,n)\in\sfA}z^{k(\alpha +2)-n}\delta_{k-m,n}\delta^a_c \frac{\partial P^{c,k-m}}{\partial \gam^{b,k}}\vac\\
      \end{split}
    \end{equation}
Since this sum is finite, it represents a well-defined element in $\hM_z[0]$. All other combinations of elements give rise to well-defined elements in $\hMn{1}$.

This shows that the products close on $\hMn{1}$ over $\CC[[z]]$. In particular, since $\tM_z[\leq 1]$ has the structure of a vertex Lie algebra (cfr. \cref{propzz}), this implies that $\hM_z[\leq 1]$ is a vertex Lie subalgebra of $\tM_z[\leq 1]$ over $\CC[[z]]$, as all axioms (\ref{VLAi})-(\ref{VLAiv}) are satisfied.

These products also close on $\hM_z[\leq 1]$ over $\CC(z)$, when regarding infinite sums as expansions of rational functions. Indeed, looking at the calculations above, the only difference is in the first product between quadratic states in \cref{eq:7}, which reads
\begin{align}\label{eq:7bis}
  &(\sum_{k\geq \max(0,n)}z^{\alpha k}\gam^{a,k-n}[0]\bet_{b,k}[-1]\vac)_{(1)}\sum_{j\geq \max(0,m)}z^{\beta j}\gam^{c,j-m}[0]\bet_{d,j}[-1]\vac\nn\\
  &= -\delta^a_d\delta^c_b\delta_{m+n,0}z^{-n(\beta+1)}\left(\frac{z^{\max(0,n)(\alpha+\beta+2)}}{1-z^{\alpha+\beta+2} } +\sum_{k\geq  \max(0,n,m+n)}^{\max(0,m+n)-1}z^{k(\alpha+\beta+2)}\right)\vac.
\end{align}

To finally prove the statement, one has to ensure that the vertex Lie algebra axioms are still satisfied after regarding infinite sums as expansions of rational functions. This is the case for vacuum, translation and skew-symmetry as they are all equality involving single products. More subtle is the case for Borcherds' identities, since nested products appear. However, as pointed out above, resummation is only needed when computing the first products of quadratic states.
    Writing down explicitly the identities \eqref{eq:borVLAp} for all possible combinations of the products \eqref{ntrivial}, we find that the only non-trivial identity which presents a nesting of first products is
\begin{equation}
    u_{(1)}(v_{(1)}w) - v_{(1)}(u_{(1)}w) = (u_{(1)}v)_{(1)}w,
\end{equation}
for $u,v,w\in\sfQ_z$. However, as the first product of such states is proportional to the vacuum and the action of positive modes on the vacuum is zero, this identity is trivially satisfied.
\end{proof}

\section{Regularising the products}\label{sec:zeta}

In \cref{completion,sec:WG}, we defined the completion $\tM_z$ of the Fock module of a Heisenberg algebra and we restricted it to the subspace $\bMn{1}$ of sums with widening gap which has the structure of a vertex Lie algebra.
Then, in the section above we have introduced the space of main interest for this work, the vertex Lie algebra $\hMn{1}$ over $\CC(z)$, which is spanned by a specific type of quadratic infinite sums and possibly infinite sums with widening gap. We will now proceed to regularise the products.

\subsection{$\zeta$-function regularisation}
For any given expression in $\CC(z)$, we introduce the following regularisation procedure
\begin{equation}\label{reg}
    \reg: \CC(z) \to \CC
\end{equation}
defined as follows
  \begin{enumerate}[label=\itshape\roman*)]
  \item perform the transformation $z\to e^{y}$\label{reg1};
  \item power expand the resulting term for small values of $y$\label{reg2};
  \item  regard the result as the ratio of Laurent series, which is again a Laurent series\label{reg3};
  \item remove the singular terms and perform the limit $y\to 0$, \ie $z\to 1$ to remove the regulator. This is equivalent to extracting the constant term of the series obtained\label{reg4}.
  \end{enumerate}

  As an example, consider the following
  \begin{align}
    \label{eq:exreg2bis}
      & \frac{z^{3}}{1-z^{2}} \leadsto \frac{e^{3y}}{1-e^{2y}}\\
      &\qquad \leadsto\frac{(1+3y+9y^2/2+\dots)}{1-1-2y-2y^2-\dots} = - \frac{1}{2y}\frac{(1+3y+\dots)}{(1+y+\dots)}\\
      &\qquad \qquad \leadsto -\frac{1}{2 y} (1+3y+\dots)(1-y+\dots)   = -\frac{1}{2y} -1 + \Oa(y) \leadsto -1
  \end{align}
  where each arrow corresponds to one of the steps above. Therefore we would write
  \begin{equation}
    \label{eq:2}
    \reg\left[\frac{z^3}{1-z^{2}} \right] =-1 .
  \end{equation}
In particular, for any polynomial or rational function which is regular at $z=1$, this procedure is equivalent to the evaluation $z= 1$. For this reason, roughly speaking, we can regard this procedure as a ``renormalised''  version of the limiting procedure $z\to 1$.

\subsection{Regularisation of the first products}

Recall from \cref{eq:fullmap} the definition of the Lie algebra map
\begin{align}\label{LAmap}
 \varrho + \varphi : \g \to \wt{\Der}\Oa(\n_+)\oplus\Omega_{\Oa(\n_+)}.
\end{align}

We have the embedding into the Fock space
\begin{align}\label{jota}
  \jota: \wt{\Der}\Oa(\n_+)\oplus\Omega_{\Oa(\n_+)}\hookrightarrow\hM_z[\leq 1]
\end{align}
by simply replacing $X^{a,n}$ with $\gam^{a,n}[0]$, $D_{a,n}$ with $\bet_{a,n}[-1]$ and $dX^{a,n}$ with $\gam^{a,n}[-1]$.

By identifying $\VV_0^{k,0}[1]\simeq \g$ and then composing the Lie algebra map \eqref{LAmap} with the embedding \cref{jota} we obtain a map
  \begin{equation}
    \label{eq:theta}
    \vartheta:= \jota \circ (\varrho + \varphi) :\VV_0^{\g,0}[\leq1] \longrightarrow \hM_z[\leq 1].
  \end{equation}
More explicitly, any element $J_{a,n}[-1]\vac\in\VV_0^{\g,0}[1]$ gets mapped to
  \begin{equation}
    \label{eq:mapsto}
    \begin{split}
        \ff{a}{b}{c}&\sum_{k\geq \max(0,n)}\gam^{b,k-n}[0]\bet_{c,k}[-1]\vac \\
                  &\qquad+ \sum_{(b,m)\in\sfA}R_{J_{a,n}}^{b,m}(\gam)\bet_{b,m}[-1]\vac + \sum_{(b,m)\in\sfA}Q_{J_{a,n};b,m}(\gam)\gam^{b,m}[-1]\vac.
    \end{split}
  \end{equation}
  where $R^{b,m}_{J_{a,n}}(\gam) = \jota(R^{b,m}_{J_{a,n}}(X))$ from \cref{eq:thWG} and $Q_{J_{a,n};b,m}$ is the image of $\jota\circ\varphi$, cfr. \cref{phiimage}, while $\vac\mapsto\vac$.

We have the following
  \begin{lemma}
  \label{prop1}
    For any two states $J_{a,n}[-1]\vac, J_{b,m}[-1]\vac\in\VV_0^{\g,0}[\leq 1]$ one has
    \begin{equation}
      \label{eq:0pr}
      \textup{\texttt{reg}}[\vartheta(J_{a,n}[-1]\vac)_{(0)}\vartheta(J_{b,m}[-1]\vac)] = \textup{\texttt{reg}}[\vartheta(J_{a,n}[-1]\vac_{(0)}J_{b,m}[-1]\vac)].
    \end{equation}
  \end{lemma}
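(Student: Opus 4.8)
The plan is to reduce the statement to its undeformed counterpart, the identity $\vartheta(J_{a,n}[-1]\vac)_{(0)}\vartheta(J_{b,m}[-1]\vac)=\vartheta(J_{a,n}[-1]\vac_{(0)}J_{b,m}[-1]\vac)$ of \cite[Theorem 33]{analogFFR}, by showing that on zeroth products the map $\reg$ amounts to nothing more than the specialisation $z\mapsto 1$.

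I would begin by disposing of the trivial case in which one of the two arguments is $\vac$: both sides of \cref{eq:0pr} then vanish by the vacuum axiom, and it remains to treat states of the form $J_{a,n}[-1]\vac$. For these, recall from \cref{eq:mapsto} that $\vartheta(J_{a,n}[-1]\vac)$ is the sum of a quadratic infinite sum lying in $\sfQ_z$, a widening-gap element of $\bMn{1}$, and a \emph{finite} sum of $1$-forms, and that $\vartheta$ itself carries no explicit $z$-dependence, since the embedding $\jota$ only relabels generators. Expanding $\vartheta(J_{a,n}[-1]\vac)_{(0)}\vartheta(J_{b,m}[-1]\vac)$ bilinearly, each summand is one of the cross products already computed in the proof of \cref{VAAL}: the zeroth product of two elements of $\sfQ_z$ as in \cref{eq:qq}, the zeroth product of an element of $\sfQ_z$ with a widening-gap state as in \cref{eq:6}, the analogous products involving the $1$-form pieces, and products against the depth-zero part. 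All of these outputs are finite sums, infinite sums with widening gap, or elements of $\sfQ_z$, and in none of them is it necessary to resum a geometric series into a rational function — that manoeuvre, \cref{resummation}, being required only for the \emph{first} products of two quadratic states (cf. \cref{eq:7bis}). Hence the $z$-dependence of each output is carried entirely by Laurent monomials $z^{N}$, $N\in\ZZ$, multiplying honest basis vectors of $\hMn{1}$; every coefficient of the output therefore lies in $\CC[z,z\inv]$ and is regular at $z=1$, so by the remark following \cref{eq:2} the map $\reg$ acts on it as the evaluation $z\mapsto 1$, and trivially so on the right-hand side of \cref{eq:0pr}, which is already $z$-independent.

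Next I would observe that this evaluation converts the deformed zeroth product into the undeformed one. Setting $z=1$ in the commutation relations \cref{eq:commrelH} recovers the ordinary Heisenberg relations, hence the ordinary products on $\sfM(\n_+)$ and its completion $\tM(\n_+)$; and since the zeroth products above are obtained from finitely many Wick contractions whose coefficients are regular at $z=1$, specialisation commutes with forming them. Using that $J_{a,n}[-1]\vac_{(0)}J_{b,m}[-1]\vac=[J_{a,n},J_{b,m}][-1]\vac$ in $\VV_0^{\g,0}$, the undeformed identity of \cite[Theorem 33]{analogFFR} then identifies the $z=1$ specialisation of the left-hand side of \cref{eq:0pr} with $\vartheta(J_{a,n}[-1]\vac_{(0)}J_{b,m}[-1]\vac)$; this last element is $z$-free, hence fixed by $\reg$, and assembling the three observations yields \cref{eq:0pr}.

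Beyond this bookkeeping and the cited theorem, the only real content is the verification that a zeroth product never leaves the ``honest infinite sum'' regime — i.e. that the resummation \cref{resummation} used to endow $\hMn{1}$ with a vertex Lie algebra structure over $\CC(z)$, which is needed only for first products of quadratic states, is never triggered by a zeroth product, so that no pole at $z=1$ can ever be created. This is exactly what the explicit computations \cref{eq:qq,eq:6} establish, so the argument is short in practice. One should, however, keep in mind the mild but essential point that $\vartheta$ — equivalently \cref{eq:mapsto} — is literally independent of $z$; without it the reduction to the undeformed statement would not be available.
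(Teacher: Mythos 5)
Your proposal is correct and follows essentially the same route as the paper: the paper's proof likewise reduces to \cite[Theorem 33]{analogFFR} by noting that zeroth products admit no double contractions, hence never produce terms of the form $q(z)\vac$ requiring resummation, so that $\reg$ degenerates to the evaluation $z\to 1$ and the unregulated identity applies. Your more explicit verification that every coefficient is a Laurent monomial regular at $z=1$ is just a spelled-out version of the same observation.
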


    The proof is essentially the same of \cite[Theorem 33]{analogFFR}. In particular the fact that there are no possible double contractions implies that there will never be terms of the form $q(z)\vac$, with $q\in\CC(z)$. For this reason, the action of the regularisation procedure is simply to compute the limit $z\to 1$, \ie it is equivalent of working in the unregulated setting.

  We will now move our attention to first products. As illustrated above, in this case double contractions can appear and the regularisation procedure becomes of central importance.
The main result of the paper is the following
\begin{theorem}\label{prop2}
  For any two states $J_{a,n}[-1]\vac, J_{b,m}[-1]\vac\in\VV_0^{\g,0}[\leq 1]$ one has
  \begin{equation}
    \label{eq:0pr2}
    \textup{\texttt{reg}}[\vartheta(J_{a,n}[-1]\vac)_{(1)}\vartheta(J_{b,m}[-1]\vac)] = 0.
  \end{equation}
\end{theorem}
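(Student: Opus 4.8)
\emph{Plan of proof.} The plan is to compute the first product by Wick's theorem, isolate the single genuinely divergent contribution, and show that its $\zeta$-regularisation is cancelled by the remaining contributions, which are all finite and hence unambiguous.

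First I would reduce the statement. Since $\vartheta$ is linear, $(-)_{(1)}(-)$ is bilinear, and $\reg$ is $\CC$-linear on $\CC(z)$, while a depth-$0$ input (a multiple of $\vac$) annihilates everything in the first product by the vacuum axiom, it suffices to treat $J_{a,n}[-1]\vac$ and $J_{b,m}[-1]\vac$ of depth $1$. Using \eqref{eq:mapsto} I would write $\vartheta(J_{a,n}[-1]\vac)=\mathsf q_{a,n}+\mathsf r_{a,n}+\mathsf w_{a,n}$: the quadratic sum (image of $\nu$), the widening-gap derivation part (the $R$-terms), and the finite $1$-form part (the $Q$-terms); then expand $(-)_{(1)}(-)$ into the nine bilinear pieces. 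The first product extracts the coefficient of the second-order pole of the operator product, i.e.\ the term with a double $\bet\gam$-contraction. Here the piece $(\mathsf w)_{(1)}\mathsf w$ vanishes (no $\bet$'s in $1$-form fields), and for all other pieces one contracts, in two independent ways, a $\bet$ of one factor against a $\gam$ — or the $\partial\gam$ of a $1$-form — of the other, producing an explicit sum over $\sfA$ of products $\partial_{\gam^{c,k}}(\cdots)\,\partial_{\gam^{d,l}}(\cdots)$ weighted by $z^{k+l}$, with \cref{lem:der} governing the action of $\bet$ on widening-gap polynomials.

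The crucial observation is a finiteness dichotomy. The widening-gap property forces $(\mathsf q)_{(1)}\mathsf r$, $(\mathsf r)_{(1)}\mathsf q$ and $(\mathsf r)_{(1)}\mathsf r$ to be \emph{finite} sums — two widening-gap families cannot simultaneously supply matched $\gam$-factors of arbitrarily high loop degree — and finiteness of the $1$-form part makes the four $\mathsf w$-cross pieces finite as well; on all of these $\reg$ is simply evaluation at $z=1$. The only genuinely infinite piece is $(\mathsf q)_{(1)}\mathsf q$, and because the $\nu$-coefficients are linear in $\gam$ it collapses, exactly as in \eqref{eq:7bis}, to
\begin{equation*}
  \begin{aligned}
  (\mathsf q_{a,n})_{(1)}\,\mathsf q_{b,m}
  &= \pm\,\delta_{m+n,0}\,\langle J_a,J_b\rangle_{\og}\sum_{k\ge\max(0,n)}z^{2k-n}\,\vac\\
  &= \pm\,\delta_{m+n,0}\,\langle J_a,J_b\rangle_{\og}\,\frac{z^{|n|}}{1-z^{2}}\,\vac ,
  \end{aligned}
\end{equation*}
where $\langle\,\cdot\,,\,\cdot\,\rangle_{\og}$ is the Killing form of $\og$ (coming from the structure-constant contraction $\sum f^{c}{}_{a b'}f^{b'}{}_{b c'}$); its regularisation is the constant $\pm\tfrac12(1-|n|)\,\delta_{m+n,0}\langle J_a,J_b\rangle_{\og}\,\vac$, using $\reg\!\left[z^{c}/(1-z^{2})\right]=\tfrac12(1-c)$.

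It remains to assemble everything and verify the cancellation. Morally, the seven finite pieces together with the leading quadratic term ought — by analogy with the finite-type Feigin–Frenkel computation — to realise the $2$-cocycle measuring the failure of $\vartheta$ to be a vertex-algebra map; $\g$-invariance of this cocycle (inherited from the Lie-algebra homomorphism property of $\varrho+\varphi$ together with the fact that $\vartheta$ respects zeroth products up to $\reg$, \cref{prop1}) forces every $\gam$-dependent part of the finite pieces to cancel, leaving only a multiple of the invariant form $\langle J_{a,n},J_{b,m}\rangle_{\g}\propto\delta_{m+n,0}\langle J_a,J_b\rangle_{\og}$, that is, a constant multiple of $\vac$; since the source has level $0$, this multiple is precisely the (divergent) ``dual Coxeter number of $\g$'', whose $\zeta$-regularisation — the geometric sum $\sum_k z^{2k-n}$ over the positive affine roots treated above — equals $0$. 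This is only a heuristic, because $\reg$ is merely linear and does not respect associativity, so rigorously one must verify the identity on the explicit expressions; this is done in \cref{appB} by combining the recursive description of $\varrho$ from \cite{analogFFR} with the Lie-homomorphism constraints, amounting to the statement that the $z=1$ value of the sum of the seven finite pieces is exactly $\mp\tfrac12(1-|n|)\,\delta_{m+n,0}\langle J_a,J_b\rangle_{\og}\,\vac$. The main obstacle is precisely this last bookkeeping: the polynomials $R^{b,m}_{J_{a,n}}$ have no closed form, so the cancellation must be proved structurally — by induction along the height filtration using the recursion for $\varrho$ — and one must track carefully the $\eta$-shifts in the summation bounds of \eqref{eq:8}, which record whether a coordinate $\gam^{a,0}$ is actually present.
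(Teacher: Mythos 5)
Your setup is sound as far as it goes: the decomposition of $\vartheta(J_{a,n}[-1]\vac)$ into quadratic, widening-gap and $1$-form parts, the observation that only the $(\mathsf q)_{(1)}\mathsf q$ piece diverges, and the value $\reg\!\left[z^{|n|}/(1-z^{2})\right]=\tfrac12(1-|n|)$ are all correct and consistent with the paper's own computations \eqref{eq:7bis}. But the heart of the theorem --- that the seven finite pieces evaluated at $z=1$ sum to exactly minus the regularised value of the divergent piece --- is precisely the step you do not prove. You describe it as ``morally'' following from a $\g$-invariance/cocycle argument and then defer the ``bookkeeping'' to an appendix argument (``induction along the height filtration using the recursion for $\varrho$'') that you never carry out; you even identify this as the main obstacle. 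As written, this is a plan for a proof, not a proof: nothing in the proposal establishes that the cocycle is a multiple of the invariant form with the specific constant $\mp\tfrac12(1-|n|)\delta_{m+n,0}\langle J_a,J_b\rangle_{\og}$, and the polynomials $R^{b,m}_{J_{a,n}}$ admit no closed form that would let one check this directly.

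The paper avoids this bookkeeping entirely by a different route, which is the actual content of \cref{appB}. One passes to the doubled setting, where $\tet:\VV_0^{\g,0}\to\bMM$ is a genuine vertex algebra homomorphism and the first product vanishes exactly because the level is zero. One then applies the undoubling map $p=p_++p_-+p_0$; \cref{lempi} shows that $p(x_{(1)}y)$ equals the sum of the regularised first products of the three projections, the only nontrivial check being the single identity $\reg[\sum_{k\ge\max(0,n)}z^{2k-n}]+\reg[\sum_{k\le-\max(0,n)}z^{2k-n}]+\lim_{z\to1}\sum_{\text{finite}}z^{2k-n}=0$. Since $p_0\circ\tet=0$, $p_+\circ\tet=\vartheta$ and $p_-\circ\tet=\tau\circ\vartheta\circ\sigma$, this yields $\reg[\vartheta(x)_{(1)}\vartheta(y)]+\reg[\tau\circ\vartheta\circ\sigma(x)_{(1)}\tau\circ\vartheta\circ\sigma(y)]=0$, and the two summands are shown to be equal using the symmetry of the first product together with \cref{inv} (invariance of $\reg$ under $z\mapsto z^{-1}$, proved via Bernoulli numbers). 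None of this machinery appears in your proposal, and without it --- or a completed version of your own structural cancellation argument --- the proof does not close. If you want to salvage your approach, the missing lemma you would have to prove is exactly the identification of the finite part of the anomaly with $\mp\tfrac12(1-|n|)$ times the Killing form; I would encourage you instead to adopt the doubling argument, which converts that computation into the trivial vanishing of the level-zero first product.
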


\begin{proof}
  The proof occupies \cref{appB}.
\end{proof}

As an example, consider the elements from $\VV_0^{\hsl_2,0}$
\begin{align}
    J_{E,2}[-1]\vac &\xmapsto{\vartheta} \bet_{E,2}[-1]\vac -\sum_{k\geq 5} \gam^{F,k-2}[0]\bet_{H,k}[-1] \vac +2 \sum_{k\geq 5}\gam^{H,k-2}[0]\bet_{E,k}[-1]\vac +\dots\nn\\
    J_{F,-2}[-1]\vac &\xmapsto{\vartheta} -14\gam^{E,2}[-1]\vac -\sum_{k\geq 1} \gam^{E,k+2}[0]\bet_{H,k}[-1] \vac +2 \sum_{k\geq 1}\gam^{H,k+2}[0]\bet_{F,k}[-1]\vac +\dots
\end{align}
where we only wrote the quadratic infinite sums and the other terms that could contribute with terms of the form $\CC(z)\vac$ in the computation of first products; the dots denote all other terms with widening gap.
Their first product is
\begin{align}
    &(\bet_{E,2}[-1]\vac -\sum_{k\geq 5} \gam^{F,k-2}[0]\bet_{H,k}[-1] \vac +2 \sum_{k\geq 5}\gam^{H,k-2}[0]\bet_{E,k}[-1]\vac+\dots)_{(1)}\nn\\
    &\hspace{1cm}( -14\gam^{E,2}[-1]\vac +\sum_{k\geq 1} \gam^{E,k+2}[0]\bet_{H,k}[-1] \vac -2 \sum_{k\geq 1}\gam^{H,k+2}[0]\bet_{F,k}[-1]\vac +\dots)\nn\\
     =&(\bet_{E,2}[1] -\sum_{k\geq 5} \gam^{F,k-2}[1]\bet_{H,k}[0]  +2 \sum_{k\geq 5}\gam^{H,k-2}[-1]\bet_{E,k}[0]+\dots)\nn\\
    &\hspace{1cm}( -14\gam^{E,2}[-1]\vac +\sum_{k\geq 1} \gam^{E,k+2}[0]\bet_{H,k}[-1] \vac -2 \sum_{k\geq 1}\gam^{H,k+2}[0]\bet_{F,k}[-1]\vac +\dots)\nn\\
  =& (-14 z^2  - 4 \sum_{k\geq 5} z^{2k-2} )\vac +\dots = (-14 z^2  - 4 \frac{z^{8}}{1-z^2})\vac+\dots .
\end{align}
In the last line, the dots would correspond to terms which are not of the form $\CC(z)\vac$. However, the theorem above ensures that they give trivial contribution.
Applying the regularisation procedure to this expression, we obtain
\begin{align}
  & -14 z^2  - 4 \frac{z^{8}}{1-z^2} \leadsto -14e^{2y}-4\frac{e^{8y}}{1-e^{2y}} \nn\\
  &\qquad \leadsto -\frac{1}{2y}\Big(-14\frac{1+2y+\dots}{1+y+\dots} + 14\frac{1+4y+\dots}{1+y+\dots} -4\frac{1+8y+\dots}{1+y+\dots}\Big)\nn\\
  &\qquad \qquad = \frac{14}{2y}(1+2y+\dots)(1-y+\dots)-\frac{14}{2y}(1+4y+\dots)(1-y+\dots)\nn\\
  &\qquad \qquad \qquad -\frac{4}{2y}(1+8y+\dots)(1-y+\dots) \leadsto 0
\end{align}

\textbf{Remark.} It might be tempting to think that one could use this procedure to systematically regularise the products of the vertex Lie algebra $\hMn{1}$ as follows
\begin{equation}
  \label{eq:newprod}
  _{[i]}:=\reg \circ _{(i)} :\hMn1\times\hMn1 \to \hM[\leq 1],
\end{equation}
and conclude that \cref{prop1,prop2} define a homomorphism of vertex Lie algebras.
However, the space with these new products has \emph{not} the structure of a vertex Lie algebra, since Borcherds' identities are in general not satisfied.

\section{Proof of Theorem \ref{prop2}}\label{appB}
The proof of the theorem makes use of the doubling procedure introduced in \cite{analogFFR}.
For the sake of completeness, we will first recall the main ideas of that construction, which will be used below.

\subsection{The doubling trick}
  \label{doub}
  Recall from \cref{VAsplit} that it is not possible to lift the Lie algebra homomorphism at the vertex algebra level because first products are in general not well-defined.

  The so called \emph{doubling trick} was introduced in order to make sense of such products and construct a genuine homomorphism of vertex algebra.

  The idea is that the problem can be solved by suitably ``glueing together'' the algebra $\wt{\Der}\Oa(\n_+)$ with a ``negative copy'' of itself, $\wt{\Der}\Oa(\n_-)$ acting on the polynomial algebra $\Oa(\n_-)=\CC[X^{a,n}]_{(a,n)\in\sfA_-}$, with $\sfA_-:={(\alpha,0)}_{\alpha\in\mathring{\Delta}_-}\cup \mathcal I \times \ZZ_{\leq -1}$. With a construction analogous to the one outlined in the previous subsection, one can define $\Der \Oa(\n_-)$, its completion $\wt{\Der}\Oa(\n_-)$ and the subalgebra of elements with widening gap $\bDer\Oa(\n_-)$. Also, one defines the space $\Oa:=\CC[X^{a,n}]_{(a,n)\in\Ia\times\ZZ}$, and accordingly the completion $\wt{\Der}\Oa$ and the subalgebra of terms with widening gap $\bDer\Oa$.

  By using the involution map $\tau:\wt{\Der}\Oa\to\wt{\Der}\Oa$, with the property of exchanging $\wt{\Der}\Oa(\n_+)$ with $\wt{\Der}\Oa(\n_-)$ and vice-versa, one defines
  \begin{equation}
    \label{eq:doublemap}
    \ro:=\varrho + \tau\circ\varrho\circ\sigma:\g\to \wt{\Der}\Oa(\n_+)\oplus\wt{\Der}\Oa(\n_-)\hookrightarrow\wt{\Der}\Oa
  \end{equation}
  where $\sigma:\g\to\g$ is the Cartan involution, with the property of exchanging $\n_+$ with $\n_-$, namely
  \begin{equation}
    \label{eq:sigmadef}
    \sigma(J_{\alpha,n}) = J_{-\alpha,-n}, \qquad \sigma(J_i,n) = - J_{i,-n}.
  \end{equation}

  One can prove a similar result to \cref{th1}, now adapted to the doubled case:
  \begin{lemma}
    \label{lem2}
    For all $(a,n)\in\Ia\times\ZZ$
    \begin{equation}
      \label{eq:thWGdouble}
      \ro(J_{a,n})-\ff{a}{b}{c}\sum_{k\in\ZZ}X^{b,k-n}D_{c,k}:=\sum_{(b,m)\in\Ia\times\ZZ}R^{b,m}_{J_{a,n}}(X)D_{b,m} \in \bDer\Oa
    \end{equation}
    where $R\in\Oa$.
  \end{lemma}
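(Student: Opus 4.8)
The plan is to prove \cref{lem2} by reducing it to the single-copy result \cref{th1}, applied separately to each of the two summands of $\ro=\varrho+\tau\circ\varrho\circ\sigma$, and then checking that the two resulting ``halves'' assemble into the bi-infinite leading sum $\ff{a}{b}{c}\sum_{k\in\ZZ}X^{b,k-n}D_{c,k}$ up to a widening-gap remainder. First I would record the contribution of the first summand: \cref{th1} gives directly
\begin{equation}
  \varrho(J_{a,n}) = \ff{a}{b}{c}\sum_{k\geq\max(0,n)}X^{b,k-n}D_{c,k} + R_+,\qquad R_+\in\bDer\Oa(\n_+)\subset\bDer\Oa,
\end{equation}
so the first summand already produces the ``positive half'' $k\geq\max(0,n)$ of the desired sum together with a widening-gap term.

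For the second summand I would proceed in three moves. Using \cref{eq:sigmadef} write $\sigma(J_{a,n})=\epsilon_a J_{\sigma(a),-n}$, where $\epsilon_a=\pm1$ and $\sigma(a)$ denotes the index of the $\sigma$-image at finite level. Applying \cref{th1} to $J_{\sigma(a),-n}$ yields $\varrho(\sigma(J_{a,n}))=\epsilon_a\ff{\sigma(a)}{b}{c}\sum_{k\geq\max(0,-n)}X^{b,k+n}D_{c,k}+\epsilon_a R'$ with $R'\in\bDer\Oa(\n_+)$. I would then apply $\tau$, using that it exchanges the two copies by reflecting every loop index (so that $\tau(X^{b,m})$ and $\tau(D_{c,k})$ carry indices $-m$ and $-k$, up to the fixed signs built into its definition). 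Relabelling the summation index $j=-k$ turns the leading part into a sum over $j\leq-\max(0,-n)$ of terms $X^{b',j-n}D_{c',j}$; the accompanying coefficient, after combining $\epsilon_a$, the automorphism relation $\ff{\sigma(a)}{\sigma(b')}{\sigma(c')}=\tfrac{\epsilon_{c'}}{\epsilon_a\epsilon_{b'}}\ff{a}{b'}{c'}$ coming from $\sigma\in\operatorname{Aut}(\og)$, and the fixed signs of $\tau$, collapses back to $\ff{a}{b'}{c'}$. This is where the sign bookkeeping must be carried out honestly; it is arranged precisely by the compatible definitions of $\sigma$ and $\tau$ in \cite{analogFFR}.

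It then remains to assemble the two halves. The positive half covers $k\geq\max(0,n)$ and the negative half covers $k\leq-\max(0,-n)$; a short case distinction on the sign of $n$ shows that the union of these ranges differs from all of $\ZZ$ only by the bounded block of indices strictly between them (and, for $n=0$, by a single overlapping term). Hence
\begin{equation}
  \ro(J_{a,n}) - \ff{a}{b}{c}\sum_{k\in\ZZ}X^{b,k-n}D_{c,k} = R_+ + \tau(\epsilon_a R') - \ff{a}{b}{c}\!\!\sum_{k\ \text{(finite block)}}\!\!X^{b,k-n}D_{c,k}.
\end{equation}
The last term is a finite sum of linear derivations and is therefore trivially of widening gap, as stressed after \cref{th1}. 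Finally, $\tau(\epsilon_a R')$ lies in $\bDer\Oa$ because $\tau$ preserves the widening-gap property: it merely reflects each loop index $m\mapsto-m$, so a gap $k-m$ that widens as $k\to+\infty$ becomes a gap that widens as $-k\to-\infty$, which is exactly the widening-gap condition for $\bDer\Oa(\n_-)$. Since $\bDer\Oa$ is closed under sums, the whole right-hand side lies in $\bDer\Oa$, proving the claim.

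The main obstacle I anticipate is this middle step: verifying that the reflected structure constants together with the fixed signs of $\tau$ and $\sigma$ recombine exactly into $\ff{a}{b}{c}$, and confirming that the definition of $\bDer\Oa$ is symmetric under the index reflection induced by $\tau$, so that widening gap in the ``positive'' sense is carried to widening gap in the ``negative'' sense. Both are essentially bookkeeping, but they are where an error would most easily creep in; the genuinely structural inputs are just \cref{th1} and the observation that the two index ranges tile $\ZZ$ up to a finite set.
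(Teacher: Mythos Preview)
Your proposal is correct and follows essentially the same approach that the paper sketches. The paper does not give a formal proof of \cref{lem2}, but the worked example in \eqref{eq:exrho} immediately after the lemma illustrates precisely your strategy: apply \cref{th1} to each of the two summands $\varrho$ and $\tau\circ\varrho\circ\sigma$ separately, then ``fill the gap'' between the positive and negative semi-infinite quadratic sums with finitely many compensating terms, which are absorbed into the widening-gap remainder. Your identification of the sign bookkeeping for $\sigma$ and $\tau$ and the preservation of the widening-gap property under $\tau$ as the only delicate points is accurate.
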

  The most important feature is that the infinite quadratic sum now runs over all $k\in\ZZ$.
  Let us make an explicit example, in order to understand what this realisation looks like. Consider the element $J_{\alpha,1}\in\hsl_2$. It is realised as
  \begin{align}
    \label{eq:exrho}
    \ro(J_{\alpha,1}) = & \underbrace{ -\sum_{k\geq 3}X^{-\alpha,k-1}D_{1,k} +2\sum_{k\geq 3}X^{1,k-1}D_{\alpha,k} + \sum_{(b,m)\in\sfA}{}^{+}R^{b,m}_{\alpha,1}(X)D_{b,m}}_{\varrho(J_{\alpha,1})}\nn\\
                   &\underbrace{ -\sum_{k\leq -1}X^{-\alpha,k-1}D_{1,k} +2\sum_{k\leq -1}X^{1,k-1}D_{\alpha,k} +\sum_{(b,m)\in\sfA_-}{}^{-}R^{b,m}_{-\alpha,-1}(X)D_{b,m}}_{\tau\circ\varrho\circ\sigma(J_{\alpha,1})}\nn\\
    =& -\sum_{k\in\ZZ}X^{-\alpha,k-1}D_{1,k} +2\sum_{k\in\ZZ}X^{1,k-1}D_{\alpha,k} + \sum_{(b,m)\in\sfA}{}^{+}R^{b,m}_{\alpha,1}(X)D_{b,m}  \nn\\
                   &\qquad +\sum_{(b,m)\in\sfA_-}{}^{-}R^{b,m}_{-\alpha,-1}(X)D_{b,m} +\sum_{k=0}^{2}X^{-\alpha,k-1}D_{1,k} -2\sum_{k=0}^{2}X^{1,k-1}D_{\alpha,k}\nn\\
    =& -\sum_{k\in\ZZ}X^{-\alpha,k-1}D_{1,k} +2\sum_{k\in\ZZ}X^{1,k-1}D_{\alpha,k} + \sum_{(b,m)\in\Ia\times\ZZ}R^{b,m}_{\alpha,1}(X)D_{b,m}
  \end{align}
  where $\sum_{(b,m)\in\sfA_-}{}^{-}R^{b,m}_{-\alpha,-1}(X)D_{b,m} = \tau\circ\varrho\circ\sigma(\sum_{(b,m)\in\sfA}{}^{+}R^{b,m}_{\alpha,1}(X)D_{b,m})$. In the second-to-last step one ``fills the gap'' between the semi-infinite sums in the positive and negative directions, which is the reason for the appearance of a finite number of \emph{quadratic compensating terms}. The sum $\sum_{(b,m)\in\Ia\times\ZZ}R^{b,m}_{\alpha,1}(X)D_{b,m} = \sum_{(b,m)\in\sfA}{}^{+}R^{b,m}_{\alpha,1}(X)D_{b,m}  +\sum_{(b,m)\in\sfA_-}{}^{-}R^{b,m}_{-\alpha,-1}(X)D_{b,m}+$\textit{ finite quadratic compensating terms} in the last line is precisely the r.h.s. of \cref{eq:thWGdouble}.

  As before, $\wt{\Der}\Oa$ can be naturally embedded into the ``doubled'' Fock space of the $\bet\gam$-system $\tM_{\text{d}}$ and $\bDer\Oa$ into $\bM_{\text{d}}\subset \tM_{\text{d}}$, by the identification $X^{a,n} \mapsto \gam^{a,n}[0]$ and $D_{a,n}\mapsto \bet_{a,n}[-1]$.

  The main advantage of the glueing procedure is that one can now regard the infinite sums of quadratic terms $\sum_{k\in\ZZ}X^{a,k-n}D_{a,k}$ as new abstract generator  $\sfS^a_{b,n}$ of the loop algebra $\gl(\og)[t,t\inv]$, with commutation relations
  \begin{equation}
    \label{eq:Scommrel}
    [\sfS^a_{b,m},\sfS^c_{d,n}]=\delta^c_b\sfS^a_{d,n+m}-\delta^a_d\sfS^c_{b,n+m}.
  \end{equation}
  Let $\sfD$ be the derivation element for the homogeneous gradation of this algebra, obeying $[\sfD,\sfS^a_{b,n}]=n\sfS^a_{b,n}$. One has the homomorphism $\og\lp\to\gl(\og)\lp$, given by
  \begin{equation}
    \label{eq:Lad}
    J_{a,n} \mapsto \ff{a}{b}{c}\sfS^b_{c,n}, \qquad n\in\ZZ,
  \end{equation}
  where we are using the index summation convention for the Lie algebra indices.

  This can be extended to the whole affine algebra by declaring
  \begin{equation}
    \label{eq:Lad2}
    \sfk\mapsto 0, \qquad \sfd\mapsto \sfD.
  \end{equation}

  One can then introduce the loop algebra $\Ds=L(\gl(\og)[t,t\inv]\rtimes\CC\sfD)$, with generators $\sfS^a_{b,n}[N]$ and $\sfD[N]$, where $a,b\in\Ia$ and $n,N\in\ZZ$ with the following commutation relations
  \begin{align}
    \label{eq:Scomrel}
    \begin{gathered}
      [\sfS^a_{b,n}[N],\sfS^c_{d,m}[M]] = \delta^c_b\sfS^a_{d,n+m}[N+M]-\delta^a_d\sfS^c_{b,n+m}[N+M]\\
      [\sfD[N],\sfS^a_{b,n}[M]]=n\sfS^a_{b,n}[N+M]
    \end{gathered}
  \end{align}

  Introducing a vacuum vector $\vac$, one defines the vacuum Verma module $\VV_0^{\gl(\og)[t,t\inv]\rtimes\CC\sfD,0}$ over this loop algebra at level zero. The tensor product of $\Ds\ltimes\sfH$ modules,
  \begin{equation}
    \label{eq:defM}
    \overline{\mathsf{\bsf{M}}}:=\overline{\sfM}_\text{d}\otimes\VV_0^{\gl(\og)[t,t\inv]\rtimes\CC\sfD,0},
  \end{equation}
has the structure of a vertex algebra.

  As in the finite-type case, lifting the homomorphism of Lie algebras $\uprho$ to one of vertex algebras from $\VV_0^{\g,k}$ to $\bMM$, does not preserve the non-negative products and therefore does not define a homomorphism of vertex algebras. However, one has the doubled analogue of the map \cref{eq:phimap2}
  \begin{equation}
    \label{eq:phi}
    \upphi:\g\longrightarrow\Omega_{\Oa},
  \end{equation}
  where $\Omega_{\Oa}$ is the space of one forms $dX^{a,n}$, such that the map $\uprho + \upphi$ can be lifted to a homomorphism of vertex algebras
  \begin{equation}
    \label{eq:homVA}
    \tet:\VV_0^{\g,0}\longrightarrow\bMM,
  \end{equation}
  where the level of the vacuum Verma module has to be set to the very particular value $k=0$.

  \subsection{Undoubling}
  In order to prove our statement, we need to make contact between the double setting and the undoubled one.

There is the embedding map $p:\bMM\to\tM_{\text{d}}$, mapping the abstract generators to doubly infinite sums, namely
  \begin{align}
    \sfS\indices{^a_{b,n}}[-1]\vac &\xmapsto{\phantom{++}p\phantom{++}} \,\sum_{k\in\ZZ}\gam^{a,k-n}[0]\bet_{b,k}[-1]\vac,\\
    \sfD[-1]\vac \quad\, &\xmapsto{\phantom{++}p\phantom{++}} \,\sum_{k\in\ZZ} k \gam^{a,k}[0]\bet_{a,k}[-1]\vac,
  \end{align}
  and acting as the identity on the widening gap subspace $\bM_{\text{d}}\subset\bMM$. We can introduce the projectors onto the positive and negative subspaces
  \begin{align}
    \pi_+: \tM_{\text{d}} \to \tM(\n_+), \qquad \pi_-: \tM_{\text{d}} \to \tM(\n_-),
  \end{align}
  defined in the obvious way. However, recall that ``overlapping terms'' with both positive and negative loop modes, like $\gam^{a,1}[0]\bet_{b,-1}[-1]\vac$, are also well defined states in $\tM_{\text{d}}$. Hence, we also define $\pi_0:\tM_{\text{d}}\to\tM_{\text{d}}$ as follows
  \begin{align}\label{eq:pisum}
    \pi_0 := \id_{\tM_{\text{d}}} - \pi_+ - \pi_-.
  \end{align}

  We define the compositions $p_+:=\pi_+\circ p:\bMM\to\tM(\n_+)$,  $p_-:=\pi_-\circ p:\bMM\to\tM(\n_-)$ and  $p_0:=\pi_0\circ p:\bMM\to\tM_{\text{d}}$, and therefore we have
  \begin{align}\label{eq:deco}
    p = p_+ + p_- +p_0.
  \end{align}

In particular, by the definition of $\bMM$, we have $p_+(\bMM)\subseteq\hM_z(\n_+)$, because the elements in $\VV_0^{\gl(\og)[t,t\inv]\rtimes\CC\sfD,0}$ are mapped to semi-infinite sums that can always be regarded as elements of $\sfQ_z$ in \cref{eq:Qdef} setting $\alpha=0$, while elements in $\bM_{\text{d}}$ are mapped to $\bM_z(\n_+)$.
Recall from \cref{sec:Malgb} that this space is a vertex Lie algebras over $\CC(z)$ when regarding infinite sums as the expansion for small $z$ of some rational functions. Similarly, one can repeat similar arguments for the image $p_-(\bMM)\subseteq\hM_z(\n_-)$, finding that it is a vertex Lie algebra over $\CC(z\inv)$ when regarding infinite sums as the \emph{large}-$z$ expansions of rational functions. This allows us to employ the same regularisation procedure as described in \cref{sec:zeta} to regularise the products on both spaces.

In particular, we have the following result, which relates the values of regularised rational functions in $z$ and $z\inv$,
\begin{lemma}\label{inv}
  The regularisation map \textup{\texttt{reg}} is invariant under the inversion map $\varpi: z\mapsto z\inv$,  \ie $\textup{\texttt{reg}}[f(z)] = \textup{\texttt{reg}}[\varpi(f(z))]$, for any $f\in\CC(z)$.
\end{lemma}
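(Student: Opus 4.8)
The plan is to observe that the regularisation map \texttt{reg} depends on its argument $f\in\CC(z)$ only through the element $f(e^y)$ of the field of formal Laurent series $\CC((y))$ — equivalently, through the germ of $f(e^y)$ at $y=0$ — and that on this field the substitution $y\mapsto -y$, which is precisely what $\varpi:z\mapsto z^{-1}$ becomes after step (i) of the procedure, is a ring automorphism fixing the constant-coefficient functional. Granting this reduction, the invariance is immediate and no case analysis on the poles of $f$ is needed.

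Concretely, I would proceed as follows. Write $f=P/Q$ with $P,Q\in\CC[z]$, $Q\neq0$. Step (i) of \texttt{reg} replaces $f$ by $g(y):=f(e^y)=P(e^y)/Q(e^y)$. Here $P(e^y),Q(e^y)\in\CC[[y]]$, and $Q(e^y)\neq0$ in $\CC[[y]]$ (by linear independence of the exponentials $e^{ky}$, equivalently a Vandermonde argument on the coefficients of $Q$), so the quotient $g$ is a well-defined element of $\CC((y))$, say $g(y)=\sum_{n\geq -N}c_ny^n$ with $N<\infty$. Steps (ii)–(iv) — power-expanding numerator and denominator, dividing, deleting the singular part and letting $y\to0$ — are exactly the formation of this Laurent series followed by extraction of $c_0$, so $\texttt{reg}[f(z)]=c_0$. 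Now $\varpi(f)(z)=f(z^{-1})$, hence $\varpi(f)(e^y)=f(e^{-y})=g(-y)$, and the map $\sigma:h(y)\mapsto h(-y)$ is a ring automorphism of $\CC((y))$ that sends the Taylor series of $e^y$ to that of $e^{-y}$ (hence $P(e^y)\mapsto P(e^{-y})$, $Q(e^y)\mapsto Q(e^{-y})$, and it commutes with division) and explicitly acts by $\sum_n c_ny^n\mapsto\sum_n(-1)^nc_ny^n$. In particular $\sigma$ fixes the constant coefficient, so $\texttt{reg}[\varpi(f(z))]=c_0=\texttt{reg}[f(z)]$.

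The one point that needs genuine care — and the mild crux of the argument — is the first reduction: verifying that for \emph{every} $f\in\CC(z)$ the function $f(e^y)$ really lies in $\CC((y))$, i.e. has at worst a finite-order pole at $y=0$, so that the formal manipulations of steps (ii)–(iv) and the substitution $y\mapsto -y$ are legitimate even when $f$ has a pole at $z=1$. This holds because $Q(e^y)$, a nonzero exponential polynomial, has an isolated zero of finite order at the origin, while poles of $f$ at roots of unity other than $1$ contribute no pole of $f(e^y)$ at $y=0$ and are harmless. Once this is in place the proof is complete, and the same argument applies verbatim with $\CC(z)$ replaced by $\CC(z^{-1})$ — which is the form in which \cref{inv} is used in the undoubling step.
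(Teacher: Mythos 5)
Your proof is correct, and it takes a genuinely different --- and more uniform --- route than the paper's. The paper argues by partial fraction decomposition: for poles away from $z=1$ it notes that $\reg$ reduces to evaluation at the fixed point $z=1$ of $\varpi$, and for the pole at $z=1$ it expands $(1-e^{\pm y})^{-k}$ explicitly in Bernoulli numbers and checks, via the constraint $j_1+\dots+j_k=k$ on the multi-indices contributing to the constant term, that the signs $(-1)^{j_1+\dots+j_k-k}$ are all $+1$. You replace all of this with the single observation that $\reg[f]$ is the constant coefficient of the Laurent series $f(e^y)\in\CC((y))$ and that $\varpi$ becomes, after the substitution $z=e^y$, the parity automorphism $y\mapsto -y$, which multiplies the $n$-th Laurent coefficient by $(-1)^n$ and hence fixes the constant one. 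This buys both brevity and generality: no case analysis on the poles is needed, and a pole at $z=1$ with an arbitrary polynomial numerator $f_i(z)$ is handled automatically, whereas the paper's ``without loss of generality set $f_i(z)=1$'' quietly suppresses the cross-terms between the Taylor coefficients of $f_i(e^y)$ and the singular part of $(e^y-1)^{-k_i}$. What the paper's computation buys in exchange is the explicit appearance of the Bernoulli numbers, i.e.\ the link to $\zeta$-values advertised in the title. Your identification of the one point requiring care --- that $Q(e^y)$ is a nonzero element of $\CC[[y]]$ (by the Vandermonde/linear-independence argument), so that $f(e^y)$ genuinely lies in $\CC((y))$ and the substitution $y\mapsto -y$ is a legitimate field automorphism commuting with the construction --- is exactly right, and your closing remark that the same argument applies verbatim over $\CC(z^{-1})$ matches how the lemma is used in the undoubling step.
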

\begin{proof}
We denote by $\{z_1,\dots,z_{n}\}\subset \CC $ the set of poles of $f\in\CC(z)$ and by $\{k_1,\dots,k_n\}$ their multiplicities. By partial fraction decomposition, we can write
  \begin{align}\label{PFD}
    f(z) = \sum_{i= 1}^n \frac{f_i(z)}{(z-z_i)^{k_i}}+f_0(z)
  \end{align}
  where $f_i\in\CC[z]$, $i=0,\dots,n$. Recall that by definition, the regularisation procedure is essentially the evaluation $z\to 1$, whenever this does not produce ill-defined quantities. Hence, if $z_i\neq 1$ one can explicitly evaluate the limit $z\to 1$. In this case, since $z=1$ is a fixed point for the map $\varpi$, the result will not change under inversion.

  Consider now the case when $z_i=1$ is one of the poles. We have
  \begin{align}
    \frac{1}{(z-1)^{k_i}},
  \end{align}
  where, without loss of generality, we set $f_i(z)=1$. Recall the expansion
  \begin{align}
      \frac{1}{1-e^y}=-\sum_{k\geq 0}\frac{B_k}{k!}y^{k-1},
  \end{align}
  where $B_k$ are the $k$-th Bernoulli numbers \cite{lepowski99}. By performing the steps \ref{reg1} - \ref{reg3} on this expression, we obtain
  \begin{gather}\label{eq:plus}
    \frac{1}{(z-1)^{k_i}} \leadsto \frac{(-1)^{k_i}}{(1-e^y)^{k_i}} = \left[\sum_{j\geq 0}\frac{B_j}{j!}y^{j-1}\right]^{k_i}:=y^{-k_i}\left[\sum_{j\geq 0}c_j y^{j}\right]^{k_i}.
  \end{gather}
 The constant term can be obtained by extracting the coefficient of the $k_i$-th power in $y$ of the series in brackets, which will have the following form
  \begin{align}\label{eq:CT}
\sum_{\{j_1,\dots,j_{k_i}\}}^{} c_{j_1} \cdots c_{j_{k_i}} y^{j_1+\dots+j_{k_i}},
  \end{align}
  where the sum is over all the tuples of reals $\{j_1,\dots,j_{k_i}\}$, with the constraint $\sum_{i=1}^{k_i}j_{i}=k_i$. For example, if $k_i=1$, the only tuple one can choose is $\{j_1\}=\{1\}$ and the constant term is just $c_1$. If $k_i=2$ the tuples are $\{j_1,j_2\} = \{ 1,1 \},\{0,2\},\{2,0\}$; plugging the values in \cref{eq:CT} we obtain $c_1^2+2c_0c_2$ as constant term.

Repeating analogous steps for the same expression where we first send $z\mapsto z\inv$, we get
  \begin{gather}\label{eq:minus}
    \frac{1}{(z\inv-1)^{k_i}} \leadsto \frac{(-1)^{k_i}}{(1-e^{-y})^{k_i}} = \left[\sum_{j\geq 0}\frac{(-1)^{j-1}B_j}{j!}y^{j-1}\right]^{k_i}:=y^{-k_i}\left[\sum_{j\geq 0}(-1)^{j-1}c_j y^{j}\right]^{k_i}.
  \end{gather}
and using the same notations as above the constant term will have the form
\begin{align}
  \sum_{\{j_1,\dots,j_{k_i}\}} (-1)^{j_1+\dots+j_{k_i}-k_i}c_{j_1} \cdots c_{j_{k_i}} y^{j_1+\dots+j_{k_i}}.
\end{align}
Since we have the constraint $\sum_{i=1}^{k_i}j_i=k_i$, the sign disappear. Therefore the constant terms from \cref{eq:plus} and \cref{eq:minus} agree. This concludes the proof.
\end{proof}

  We can now compute first product in the doubled setting, without abstract generators, using the regulation procedure, as defined in the following lemma
\begin{lemma}\label{lempi}
    For any $x,y\in\bMM[\leq 1]$, we have
    \begin{align}\label{eq:pisplit}
      p(x_{(1)}y) :=  \textup{\texttt{reg}}&[p_+(x)_{(1)}p_+(y)] + \textup{\texttt{reg}}[p_-(x)_{(1)}p_-(y)] + \lim_{z\to 1}[p_0(x)_{(1)} p_0(y)]
    \end{align}
where $\textup{\texttt{reg}}$ is the regularisation procedure introduced in \cref{reg}.
\end{lemma}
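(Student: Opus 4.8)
The statement to prove is \cref{lempi}: that for $x,y\in\bMM[\leq 1]$ the first product $x_{(1)}y$, computed in the vertex algebra $\bMM$ and then pushed forward along $p$, decomposes as a sum of three pieces — a regularised $\sfQ_z$-type product on the positive copy, a regularised product on the negative copy, and an ordinary (limit $z\to 1$) product on the overlap. I would organise the proof around the decomposition $p=p_++p_-+p_0$ from \cref{eq:deco}, together with the decomposition of any $x\in\bMM[\leq 1]$ into its abstract-generator part (in $\VV_0^{\gl(\og)[t,t\inv]\rtimes\CC\sfD,0}$) and its widening-gap part (in $\bM_{\text{d}}$), as guaranteed by \cref{lem2}.

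\textbf{Step 1: reduce to generators.} Both sides of \eqref{eq:pisplit} are bilinear, so it suffices to verify the identity on a spanning set, i.e.\ on states of the form $\sfS\indices{^a_{b,n}}[-1]\vac$, $\sfD[-1]\vac$, and elements of $\bM_{\text{d}}[\leq 1]$ with widening gap (and products of such in depth $\leq 1$). On the left we compute $x_{(1)}y$ using the commutator/associativity formulae in $\bMM$ and then apply $p$; on the right we first apply $p_\pm$, $p_0$ and then compute in $\tM(\n_+)$, $\tM(\n_-)$, $\tM_{\text{d}}$ respectively.

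\textbf{Step 2: the generator-generator case.} This is the crux. For $x=\sfS\indices{^a_{b,n}}[-1]\vac$, $y=\sfS\indices{^c_{d,m}}[-1]\vac$, the left side is computed from \eqref{eq:Scomrel}: the first product in the loop algebra $\Ds$ produces a finite combination of $\sfS$'s (and possibly a $\sfD$-term), which $p$ sends to a \emph{doubly infinite} sum $\sum_{k\in\ZZ}(\cdots)$. On the right side, $p_+(x)=\sum_{k\geq 0}\gam^{a,k-n}[0]\bet_{b,k}[-1]\vac\in\sfQ_z$ (with $\alpha=0$) and similarly $p_-(x)$ is a semi-infinite sum in the negative direction, while $p_0(x)$ is the finite "overlap" piece; the three first products are then evaluated by Wick contraction exactly as in \cref{eq:7} (positive), its mirror over $\CC(z\inv)$ (negative), and a finite computation (overlap). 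The key point is that each of the two semi-infinite Wick sums is a geometric series — $\sum_{k\geq k_0}z^{2k}$ in the positive sector, $\sum_{k\leq k_1}z^{2k}$ in the negative sector — which we re-sum to a rational function as in \eqref{resummation}, and then $\reg$ of the positive piece plus $\reg$ of the negative piece reproduces exactly the value that the doubly infinite sum $\sum_{k\in\ZZ}1\cdot(\text{coeff})$ "should" have. The identity of \cref{inv} is what makes the positive and negative contributions add up correctly: under $\varpi:z\mapsto z\inv$ the two geometric tails are exchanged, so $\reg[\text{positive tail}]+\reg[\text{negative tail}] = \reg[\text{positive tail}]+\reg[\varpi(\text{positive tail})]$, and one checks that for the relevant rational function $\tfrac{z^{2k_0}}{1-z^2}$ this sum equals the "naive" value $\tfrac12(\text{something})$ coming from the regularised doubly-infinite sum $\sum_{k\in\ZZ}z^{2k} \leadsto \tfrac{1}{1-z^2}+\tfrac{z^{-2}}{1-z^{-2}} = 0$, consistent with the vanishing one expects. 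One must also track the finite "compensating" terms (as in the $J_{E,1}$ example of \eqref{eq:exrho}): these are precisely absorbed by $p_0$ and by the finite leftover sums appearing in \cref{eq:7bis}, so that nothing is lost or double-counted.

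\textbf{Step 3: the mixed and widening-gap cases.} When at least one of $x,y$ lies in the widening-gap subspace $\bM_{\text{d}}$, $p$ acts as the identity there, and $p_0$ contributes nothing new; the product $x_{(1)}y$ is then computed by finitely many Wick contractions (as in \cref{eq:deriv} and \cref{eq:6}), producing no infinite $z$-series, so $\reg$ reduces to $z\to 1$ and the three-term decomposition on the right collapses to the genuine product — here one simply invokes that $\bMn1$ is a vertex Lie algebra (\cref{lem:MVA}) and that $p$ intertwines the relevant structure. The generator-$\bM_{\text{d}}$ mixed case is handled the same way: the $\sfS$-part contracts against a widening-gap polynomial giving, by \cref{lem:der}, another widening-gap state — again finitely many nonzero contractions after using the gap, so no regularisation subtlety.

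\textbf{Main obstacle.} The hard part is Step 2: showing that the two \emph{independently regularised} geometric tails (one an expansion at $z=0$, the other at $z=\infty$) recombine to give exactly $p$ of the abstract-generator answer, rather than $p$ of that answer plus some spurious constant. This is where \cref{inv} and the explicit Bernoulli-number form of $\reg[(z-1)^{-k}]$ do the real work, and where bookkeeping of the finite compensating/leftover terms must be done with care to make sure the "filling the gap" in \eqref{eq:exrho} is mirrored precisely on the Fock-module side. Everything else is routine Wick calculus that closely follows the computations already displayed in \cref{sec:VLA}.
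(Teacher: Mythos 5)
Your overall strategy --- bilinear reduction followed by a case analysis into generator--generator, mixed, and widening-gap--widening-gap products --- is exactly the route the paper takes, and your Step 3 matches its treatment of the easy cases (there $p$ is essentially the identity, all contractions are finite, and \texttt{reg} degenerates to evaluation at $z=1$). The problems are concentrated in Step 2. First, you misidentify the left-hand side: for $x=\sfS^a_{b,n}[-1]\vac$ and $y=\sfS^c_{d,m}[-1]\vac$ the \emph{first} product in $\VV_0^{\gl(\og)[t,t\inv]\rtimes\CC\sfD,0}$ is
\[
\sfS^a_{b,n}[1]\,\sfS^c_{d,m}[-1]\vac=\big(\delta^c_b\sfS^a_{d,n+m}[0]-\delta^a_d\sfS^c_{b,n+m}[0]\big)\vac=0,
\]
because the relations \eqref{eq:Scomrel} carry no central term and the module is at level zero. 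It is the \emph{zeroth} product that yields ``a finite combination of $\sfS$'s'' which $p$ sends to a doubly infinite sum; the first product of two depth-one states lands in depth zero. Hence $p(x_{(1)}y)=0$ identically, and the whole content of the generator--generator case is to show that the right-hand side of \eqref{eq:pisplit} vanishes. Your narrative of the regularised tails ``recombining to give exactly $p$ of the abstract-generator answer'' obscures the fact that the target is zero and that a genuine cancellation must be exhibited.

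Second, the mechanism you propose for that cancellation fails as stated. Up to the common prefactor, the two tails are $\sum_{k\geq\max(0,n)}z^{2k-n}=\frac{z^{2\max(0,n)-n}}{1-z^{2}}$ and $\sum_{k\leq-\max(0,n)}z^{2k-n}=\frac{z^{-2\max(0,n)-n}}{1-z^{-2}}$; applying $\varpi$ to the first gives $\frac{z^{n-2\max(0,n)}}{1-z^{-2}}$, which coincides with the second only when $n=0$. So the tails are \emph{not} exchanged by $\varpi$, and \cref{inv} cannot be used to identify their regularised values (indeed $\reg$ gives $-\tfrac{2\max(0,n)-n-1}{2}$ and $-\tfrac{2\max(0,n)+n-1}{2}$ respectively, which differ by $n$; in the paper \cref{inv} is invoked only later, in the proof of the main theorem). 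What actually happens is that their \emph{sum}, $1-2\max(0,n)$, cancels against the $2\max(0,n)-1$ produced by the finite overlap piece $\lim_{z\to1}\sum_{k=-\max(0,n)+1}^{\max(0,n)-1}z^{2k-n}$ --- the three-term computation of \eqref{eq:10}. This explicit computation, together with noting that $p_+(\sfS^a_{b,n}[-1]\vac)$ starts at $k=\max(0,n)$ rather than $k=0$, is the step your proposal replaces with an incorrect symmetry argument, and it needs to be carried out rather than gestured at.
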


\begin{proof}
  We need to check all different possible combinations of products, namely when both $x$ and $y$ are in $\VV_0^{\gl(\og)[t,t\inv]\rtimes\CC\sfD,0}$, when they are both in $\bM_{\text{d}}[\leq 1]$ and the mixed case.

  Consider $\bM_{\text{d}}[\leq1]_{(1)}\bM_{\text{d}}[\leq1]$. In this case, the embedding map $p$ acts on this space as the identity, by definition. The vertex algebra products are all well-defined, since $\bM_{\text{d}}$ is a vertex subalgebra over $\CC$, hence the regularisation procedure is simply the evaluation at $z\to 1$. Therefore, the right hand side of \cref{eq:pisplit} is just the decomposition of such products relatively to the maps \eqref{eq:pisum}.

  For the mixed case, note that $\bM_{\text{d}}[\leq1 ]$ is a vertex algebra ideal in $\bMM[\leq1]$, and therefore $\VV_0^{\gl(\og)[t,t\inv]\rtimes\CC\sfD,0}{}_{(1)}\bM_{\text{d}}[\leq1]\subset\bM_{\text{d}}[\leq 1]$. Moreover, we have
  \begin{align}\label{eq:SMb}
    p(\sfS^a_{b,n}&[-1]\vac)_{(1)}p(\sum_{(c,m)\in\Ia\times\ZZ}P^{c,m}(\gam)\bet_{c,m}[-1]\vac)\nn\\
                    &=\sum_{k\in\ZZ}\gam^{a,k-n}[0]\bet_{b,k}[-1]\vac_{(1)}\sum_{(c,m)\in\Ia\times\ZZ}P^{c,m}(\gam)\bet_{c,m}[-1]\vac\nn\\
                    &=\sum_{k\in\ZZ}\wick{\c1\gam^{a,k-n}[1]\c2\bet_{b,k}[0]\sum_{(c,m)\in\Ia\times\ZZ}\c2P^{c,m}(\gam)\c1\bet_{c,m}[-1]\vac}\\
                    &=- \sum_{k\in\ZZ}\sum_{(c,m)\in\Ia\times\ZZ}z^{2k-n}\frac{\partial P^{c,m}(\gam)}{\partial\gam^{b,k}[0]}\delta^a_c\delta_{k-n,m}\vac=-\sum_{(c,m)\in\Ia\times\ZZ}\delta^a_cz^{2m+n}\frac{\partial P^{c,m}(\gam)}{\partial\gam^{b,n+m}[0]}\vac,\nn
  \end{align}

  which is precisely the same result one would obtain without embedding the two terms, considering the additional regulated commutation relations
  \begin{align}
    [S^a_{b,n}[N],\gam^{c,m}[M]] = z^{m} \delta^c_b\gam^{a,m-n}[N+M],\nn\\
    [S^a_{b,n}[N],\bet_{c,m}[M]] = -z^{m} \delta^a_c\bet_{b,m+n}[N+M].
  \end{align}
  Since the final expression in \cref{eq:SMb} only involves a finite number of non-zero terms, the regularisation procedure is just the evaluation $z\to 1$. As before, the right hand side of \cref{eq:pisplit} just follows from the decomposition \cref{eq:deco}.

  The only non-trivial check is when considering two elements in  $\VV_0^{\gl(\og)[t,t\inv]\rtimes\CC\sfD,0}$.
  In this case the left hand side of \cref{eq:pisplit} is identically zero by definition.
  First, we decompose each infinite sum, obtaining
  \begin{align}
    p(\sfS^a_{b,n}[-1]\vac) = &\sum_{k\in\ZZ}\gam^{a,k-n}[0]\bet_{b,k}[-1]\vac =\sum_{k\geq\max(0,n)}\gam^{a,k-n}[0]\bet_{b,k}[-1]\vac \nn\\
                                &\quad + \sum_{k\leq -\max(0,n)}\gam^{a,k-n}[0]\bet_{b,k}[-1]\vac +\sum_{k=-\max(0,n)+1}^{\max(0,n)+1}\gam^{a,k-n}[0]\bet_{b,k}[-1]\vac\nn\\
    =& p_+(\sfS^a_{b,n}[-1]\vac) + p_-(\sfS^a_{b,n}[-1]\vac)+p_0(\sfS^a_{b,n}[-1]\vac).\label{eq:Sdec}
  \end{align}
  where crucially the last sum is always finite. In the case $n=0$, one should instead consider $(\sum_{k > 0} + \sum_{k<0}+\delta_{k,0})\gam^{a,k}[0]\bet_{b,k}[-1]\vac$. This however does not alter the proof.
  Using \cref{eq:Sdec}, we can compute the regularised first products of two such states
  \begin{align}
   \reg[p_+(\sfS^a_{b,n}[-1]\vac)_{(1)}p_+(&\sfS^c_{d,m}[-1]\vac)] + \reg[p_-(\sfS^a_{b,n}[-1]\vac)_{(1)}p_-(\sfS^c_{d,m}[-1]\vac)]\nn\\&\qquad+ \reg[p_0(\sfS^a_{b,n}[-1]\vac)_{(1)}p_0(\sfS^c_{d,m}[-1]\vac)].
  \end{align}
  where no additional cross-terms appear, since they would only give trivial contractions. Since the image $p_0(\sfS^a_{b,n}[-1]\vac)$ is always a finite sum, the last term in the previous expression has only a finite number of non-zero contractions. For this reason, strictly speaking, the full regularisation procedure is not needed, as it just corresponds to the limit $z\to1$.

Writing out explicitly the images of the various projectors and computing the products, we obtain
  \begin{align}
 -\delta_{m+n,0}\delta^a_d\delta^c_b\left(\reg\Big[ \sum_{k\geq \max(0,n)}z^{2k-n}\Big]  +\reg\Big[\sum_{k\leq - \max(0,n)}z^{2k-n}\Big] +\lim_{z\to 1}\sum_{k= -\max(0,n)+1 }^{\max(0,n)-1}z^{2k-n} \right)
  \end{align}

  Regarding the infinite sums as expansions of rational functions, we obtain
  \begin{align}\label{eq:10}
    -\delta_{m+n,0}\delta^a_d\delta^c_b\left(\reg\Big[ \iota_{z=0} \frac{z^{2\max(0,n)-n}}{1-z^2}\Big] + \reg\Big[ \iota_{z=\infty} \frac{z^{-2\max(0,n)-n}}{1-z^{-2}}\Big]+\lim_{z\to1}\sum_{k= -\max(0,n)+1 }^{\max(0,n)-1}z^{2k-n}\right) .
  \end{align}
Finally, by explicitly performing the regularisation procedure, one finds
  \begin{align}
    \reg[p(\sfS^a_{b,n}&[-1]\vac)_{(1)}p(\sfS^c_{d,n}[-1]\vac)] \nn\\&= -\delta_{m+n,0}\delta^a_d\delta^c_b\Big( 1-2\max(0,n)+2\max(0,n)-1\Big)=0.
  \end{align}
\end{proof}

As a remark, note that the expression in \cref{eq:10} is zero on the nose, even \emph{without} employing the full regularisation procedure outlined in \cref{sec:zeta}. However, as we will see in the proof of the main theorem below, the remarkable cancellations only happen when we perform the other steps of the procedure.

\subsection{Proof of the main theorem}

\sloppy We now have all the necessary tools to prove the main theorem.
Consider the states  $J_{a,n}[-1]\vac, J_{b,m}[-1]\vac\in\VV_0^{\g,0}$. They can be mapped into $\bMM$, using the vertex algebra map $\tet$ in \cref{eq:homVA}. Since it is a homomorphism of vertex algebras, we have
    \begin{equation}
    \tet(J_{a,n}[-1]\vac)_{(1)}\tet(J_{b,m}[-1]\vac) = \tet(J_{a,n}[-1]\vac_{(1)}J_{b,m}[-1]\vac) = 0.
\end{equation}
Here the right-hand side is zero because it is defined from the vacuum Verma module at zero level, cfr. \cref{eq:homVA}. Acting on both sides with the map $p$ from \cref{eq:deco} we get
\begin{equation}\label{eq:thpr1}
    p\left(\tet(J_{a,n}[-1]\vac)_{(1)}\tet(J_{b,m}[-1]\vac)\right) = 0.
  \end{equation}

  Recall from the doubling construction summarised in \cref{doub} that the image of $\tet$ is obtained by glueing together a positive and a negative copy of the Lie algebra homomorphism \eqref{eq:liehom1}. For this reason, there are no overlapping terms, \ie
  \begin{align}
    p_0\circ \tet(x) = 0 \qquad \text{for all } x\in\VV_0^{\g,0}.
  \end{align}
  Moreover, for all $x\in\VV_0^{\g,0}$ we can identify
  \begin{align}
    p_+\circ \tet (x) = \vartheta(x), \qquad p_-\circ\tet (x) = \tau\circ\vartheta\circ\sigma (x)
  \end{align}
These facts can be understood looking at the first two lines of the example in \eqref{eq:exrho}.
Keeping this in mind, using the result of \cref{lempi} on \cref{eq:thpr1}, we find
\begin{equation}
  \textup{\texttt{reg}}[\vartheta(x)_{(1)}\vartheta(y)] + \textup{\texttt{reg}}[\tau\circ\vartheta\circ\sigma(x)_{(1)}\tau\circ\vartheta\circ\sigma(y)] = 0.
  \end{equation}
We will now proceed to show that these terms are in fact equal and therefore independently zero.

The first term will produce either a term proportional to the vacuum, when two quadratic semi-infinite sums are contracted together or when $(\bet_{a,n}[-1]\vac)_{(1)}=\bet_{a,n}[1]$ is contracted with a single $\gam^{a,n}[-1]\vac$, or terms of the form $R(\gam)\vac$, with $R\in\CC[\gam^{a,n}]_{(a,n)\in\sfA_+}$ of degree $>0$. Similarly, the second term, being the ``negative copy" coming from the glueing procedure, will produce either terms proportional to the vacuum or terms of the form $Q(\gam)\vac$, with $Q\in\CC[\gam^{a,n}]_{(a,n)\in\sfA_-}$ of degree $>0$.
It follows that the two contributions have to be independently zero, exception made for terms of the form $\CC(z)\vac$.

By direct calculation one finds that this product is proportional to the Killing form, and therefore it is non-trivial only in two cases: $(a,n)=(E_{\alpha},n)$, $(b,m)=(E_{-\alpha},-n)$ for some $\alpha_+\in\upcirc{\Delta}$, $n\in\ZZ$, and $(a,n)=(H_{i},n)$, $(b,m)=(H_{i},-n)$, where $H_i \in\upcirc{\h}$  $i=1,\dots,\rank\og$, is an orthogonal basis for the Cartan subalgebra.

In the first case, we can act with the involution $\sigma$ explicitly and use the symmetry property of the ${}_{(1)}$ product to get
\begin{align}
    \reg[\vartheta(J_{E_{\alpha},n}[-1]\vac)_{(1)}&\vartheta(J_{E_{-\alpha},-n}[-1]\vac)] \nn\\
    &+ \reg[\tau\circ\vartheta\circ\sigma(J_{E_{\alpha},n}[-1]\vac)_{(1)}\tau\circ\vartheta\circ\sigma(J_{E_{-\alpha},-n}[-1]\vac)] \nn\\
    =\reg[\vartheta(J_{E_{\alpha},n}[-1]\vac)_{(1)}&\vartheta(J_{E_{-\alpha},-n}[-1]\vac)]\nn \\
    & +\reg[\tau\circ\vartheta(J_{E_{-\alpha},-n}[-1]\vac)_{(1)}\tau\circ\vartheta(J_{E_{\alpha},n}[-1]\vac)]\nn\\
    =\reg[\vartheta(J_{E_{\alpha},n}[-1]\vac)_{(1)}&\vartheta(J_{E_{-\alpha},-n}[-1]\vac)]\nn \\
    & +\reg[\tau\circ\vartheta(J_{E_{\alpha},n}[-1]\vac)_{(1)}\tau\circ\vartheta(J_{E_{-\alpha},-n}[-1]\vac)],
\end{align}
where in the last step we have used the symmetry of the first product on the second term (cfr. \cref{eq:skew}).
In the second case, we similarly obtain
\begin{align}
    \reg[\vartheta(J_{H_i,n}[-1]\vac)_{(1)}&\vartheta(J_{H_i,-n}[-1]\vac) ]\nn\\
    &+ \reg[\tau\circ\vartheta\circ\sigma(J_{H_i,n}[-1]\vac)_{(1)}\tau\circ\vartheta\circ\sigma(J_{H_i,-n}[-1]\vac)] \nn\\
    =\reg[\vartheta(J_{H_i,n}[-1]\vac)_{(1)}&\vartheta(J_{H_i,-n}[-1]\vac)] \nn\\
    &+ \reg[\tau\circ\vartheta(J_{H_i,-n}[-1]\vac)_{(1)}\tau\circ\vartheta(J_{H_i,n}[-1]\vac)]\nn\\
    =\reg[\vartheta(J_{H_i,n}[-1]\vac)_{(1)}&\vartheta(J_{H_i,-n}[-1]\vac)] \nn\\
    &+\reg[\tau\circ\vartheta(J_{H_i,n}[-1]\vac)_{(1)}\tau\circ\vartheta(J_{H_i,-n}[-1]\vac)].
\end{align}
where in the last term we have used the symmetry of the ${}_{(1)}$ product.

By definition, the effect of the map $\tau$ is to ``mirror" the elements in $\hMn{1}$, to get a realisation of the algebra as derivations on $\Oa(\n_-)$. This means that whenever we have a non-trivial contraction between $a,b\in\hMn{1}$, we also have it between $\tau(a)$ and $\tau(b)$, with the only difference that $z$ is replaced with $ z\inv$, \ie
\begin{equation}
  \tau(a)_{(1)}\tau(b) = \varpi (a_{(1)}b)
\end{equation}
where $\varpi$ is the inversion map $z\mapsto z\inv$.

By \cref{inv}, we conclude that after regularisation, the two terms are equal and therefore they must be both independently zero.

\section{Conclusion}

In this paper we rigorously proved an observation made in \cite{analogFFR}.
To do that, we first defined the space $\hM_z$, which has the structure of a vertex Lie algebra, whose products depend on a regulation parameter $z$. It is known that the unregulated limit $z\to 1$ gives rise to ill-defined quantities. For this reason, we introduced a suitable regularisation procedure, thanks to which these products were ``renormalised''.
Using this, we showed that the regularised first products on the image of the map $\vartheta$ are not just well-defined, but remarkably they are zero on the nose.

\section*{Acknowledgements}
The author wants to thank Charles Young and D. Simon H. Jonsson for useful discussions.
The author gratefully acknowledges the financial support of the grant ERC-2022-CoG - FAIM 101088193.

\appendix
\numberwithin{equation}{section}

\section{Realisation of $\hsl_3$ as differential operators}
Explicit expression for the images of $J_{\alpha_1,n}$, with $n=-1,0,1 $ through the Lie algebra homomorphism \eqref{eq:liehom1}, truncated up to loop order 3,4 or 5 (depending on the length of the expressions).

\begin{align}
     \varrho&(J_{{\alpha_1},1}) = D_{{\alpha_1},1}-\sum_{k\geq 2} X^{{\alpha_2},k-1}D_{{\alpha_1+\alpha_2},k}-\sum_{k\geq 3} X^{{-\alpha_1},k-1}D_{1,k} +\sum_{k\geq 3} X^{{-\alpha_1-\alpha_2},k-1}D_{{-\alpha_2},k}\nn\\
       & +2\sum_{k\geq 3}X^{1,k-1}D_{{\alpha_1},k} -\sum_{k\geq 3}X^{2,k-1}D_{{\alpha_1},k}\nn\\
       &+(- X^{{-\alpha_2},2}X^{{\alpha_2},1}+\dots) D_{{\alpha_1},4} +(- X^{{-\alpha_1-\alpha_2},2}X^{{\alpha_2},1}+\dots) D_{1,4}\nn\\
      & + (- X^{{-\alpha_1-\alpha_2},2}X^{{\alpha_2},1}+\dots)D_{2,4} +( X^{1,2}X^{{\alpha_2},1}+ X^{2,2}X^{{\alpha_2},1} + \dots) D_{{\alpha_1+\alpha_2},4}\nn\\
      & +(- 2X^{{-\alpha_1},2}X^{{\alpha_1},2}- X^{{-\alpha_1-\alpha_2},2}X^{{\alpha_1+\alpha_2},2} \nn\\
      &\qquad \qquad - 2X^{1,2}X^{1,2} + 2X^{2,2}X^{1,2} - \frac{1}{2}X^{2,2}X^{2,2}+\dots)D_{{\alpha_1},5}\nn\\
     &  + (X^{{-\alpha_1},2}X^{{\alpha_2},2}+\dots)D_{{\alpha_2},5}\nn\\
            &  +( - X^{{-\alpha_1},2}X
              ^{{\alpha_1+\alpha_2},2}+ 2X^{1,2}X^{{\alpha_2},2} - X^{2,2}X^{{\alpha_2},2}+\dots)D_{{\alpha_1+\alpha_2},5}\nn\\
      & + (X^{{-\alpha_1},2}X^{{-\alpha_1},2}+\dots)D_{{-\alpha_1},5}\nn\\
      & +(- X^{{-\alpha_2},2}X^{{-\alpha_1},2}+- X^{1,2}X^{{-\alpha_1-\alpha_2},2}  + 2X^{2,2}X^{{-\alpha_1-\alpha_2},2} +\dots)D_{{-\alpha_2},5}\nn\\
       &- (X^{{-\alpha_1-\alpha_2},2}X^{{\alpha_2},2}+\dots)D_{2,5}+ (X^{{-\alpha_1-\alpha_2},2}X^{{-\alpha_1},2}+\dots)D_{{-\alpha_1-\alpha_2},5}+\dots
\end{align}

\begin{align}
  \varrho&(J_{{\alpha_1},0}) = D_{{\alpha_1},0}-\sum_{k\geq 0} X^{{\alpha_2},k}D_{{\alpha_1+\alpha_2},k}-\sum_{k\geq 1} X^{{-\alpha_1},k}D_{1,k} +\sum_{k\geq 1} X^{{-\alpha_1-\alpha_2},k}D_{{-\alpha_2},k}\nn\\
         & +2\sum_{k\geq 1}X^{1,k}D_{{\alpha_1},k} -\sum_{k\geq 1}X^{2,k}D_{{\alpha_1},k}\nn\\
         &+(- 2 X^{1,{1}} X^{1,{1}}+ 2 X^{2,{1}} X^{1,{1}} - \frac12 X^{2,{1}} X^{2,{1}} D_{\alpha_1,{2}}+\dots)D_{\alpha_1,{2}} +(X^{-\alpha_1,{1}} X^{-\alpha_1,{1}}+\dots) D_{-\alpha_1,{2}}\nn\\
         &+(- X^{-\alpha_2,{1}} X^{-\alpha_1,{1}}+\dots) D_{-\alpha_2,{2}} +( X^{-\alpha_1-\alpha_2,{1}} X^{-\alpha_1,{1}} +\dots)D_{-\alpha_1-\alpha_2,{2}}\nn\\
         & +(X^{-\alpha_2,{1}} X^{-\alpha_1,{1}} X^{\alpha_1+\alpha_2,{1}} + \frac43 X^{1,{1}} X^{1,{1}} X^{1,{1}}- 2 X^{2,{1}} X^{1,{1}} X^{1,{1}}\nn\\&\qquad \qquad + X^{2,{1}} X^{2,{1}} X^{1,{1}}- \frac16 X^{2,{1}} X^{2,{1}} X^{2,{1}} D_{\alpha_1,{3}}+\dots) D_{\alpha_1,{3}}\nn\\
         & +(X^{-\alpha_1,{1}} X^{-\alpha_1,{1}} X^{\alpha_1+\alpha_2,{1}} +\dots )D_{\alpha_2,{3}} \nn\\
         &+(- 2 X^{1,{1}} X^{1,{1}} X^{\alpha_2,{1}}+ 2 X^{2,{1}} X^{1,{1}} X^{\alpha_2,{1}}- \frac12 X^{2,{1}} X^{2,{1}} X^{\alpha_2,{1}}+\dots ) D_{\alpha_1+\alpha_2,{3}}\nn\\
         &+( - X^{-\alpha_1-\alpha_2,{1}} X^{-\alpha_1,{1}} X^{\alpha_2,{1}} + 2 X^{1,{1}} X^{-\alpha_1,{1}} X^{-\alpha_1,{1}} - X^{2,{1}} X^{-\alpha_1,{1}} X^{-\alpha_1,{1}}+\dots)D_{-\alpha_1,{3}}\nn\\
         & +(X^{-\alpha_1-\alpha_2,{1}} X^{-\alpha_1,{1}} X^{\alpha_1,{1}} + X^{1,{1}} X^{-\alpha_2,{1}} X^{-\alpha_1,{1}} - 2 X^{2,{1}} X^{-\alpha_2,{1}} X^{-\alpha_1,{1}}+\dots) D_{-\alpha_2,{3}}\nn\\
         &+(- X^{-\alpha_2,{1}} X^{-\alpha_1,{1}} X^{-\alpha_1,{1}} + X^{1,{1}} X^{-\alpha_1-\alpha_2,{1}} X^{-\alpha_1,{1}} + X^{2,{1}} X^{-\alpha_1-\alpha_2,{1}} X^{-\alpha_1,{1}} +\dots) D_{-\alpha_1-\alpha_2,{3}}\nn\\
         & + (- X^{-\alpha_1,{1}} X^{-\alpha_1,{1}} X^{\alpha_1,{1}} - X^{-\alpha_1-\alpha_2,{1}} X^{-\alpha_1,{1}} X^{\alpha_1+\alpha_2,{1}}\dots )D_{1,3}\nn\\
         &+ (X^{-\alpha_2,{1}} X^{-\alpha_1,{1}} X^{\alpha_2,{1}}- X^{-\alpha_1-\alpha_2,{1}} X^{-\alpha_1,{1}} X^{\alpha_1+\alpha_2,{1}}+\dots) D_{2,{3}}+\dots
\end{align}

\begin{align}
  \varrho&(J_{{\alpha_1},-1}) = -\sum_{k\geq 0} X^{{\alpha_2},k+1}D_{{\alpha_1+\alpha_2},k}-\sum_{k\geq 1} X^{{-\alpha_1},k+1}D_{1,k} +\sum_{k\geq 1} X^{{-\alpha_1-\alpha_2},k+1}D_{{-\alpha_2},k}\nn\\
         & +2\sum_{k\geq 0}X^{1,k+1}D_{{\alpha_1},k} -\sum_{k\geq 0}X^{2,k+1}D_{{\alpha_1},k}\nn\\
         &  + ( 2X^{-\alpha_1,{1}} X^{\alpha_1,{0}} + X^{-\alpha_1-\alpha_2,{1}} X^{\alpha_2,{0}} X^{\alpha_1,{0}}+ X^{-\alpha_1-\alpha_2,{1}} X^{\alpha_1+\alpha_2,{0}} +\dots )D_{\alpha_1,{0}}\nn\\
         &+ (- X^{-\alpha_1,{1}} X^{\alpha_2,{0}}- X^{-\alpha_1-\alpha_2,{1}} X^{\alpha_2,{0}} X^{\alpha_2,{0}}  +\dots) D_{\alpha_2,{0}}\nn\\
         &+(X^{-\alpha_1,{1}} X^{\alpha_1+\alpha_2,{0}} - X^{-\alpha_1-\alpha_2,{1}} X^{\alpha_1+\alpha_2,{0}} X^{\alpha_2,{0}} D_{\alpha_1+\alpha_2,{0}} - 2 X^{1,{1}} X^{\alpha_2,{0}} + X^{2,{1}} X^{\alpha_2,{0}}+\dots )D_{\alpha_1+\alpha_2,{0}}\nn\\
         &+(X^{-\alpha_2,{1}} X^{\alpha_2,{1}} + 2 X^{1,{1}} X^{1,{1}}  - 2 X^{2,{1}} X^{1,{1}}+\frac12 X^{2,{1}} X^{2,{1}} D_{\alpha_1,{1}} +\dots)D_{\alpha_1,{1}}\nn\\
         &+(- X^{-\alpha_1,{1}} X^{\alpha_2,{1}} +\dots )D_{\alpha_2,{1}} + (- X^{1,{1}} X^{\alpha_2,{1}} X^{2,{1}} X^{\alpha_2,{1}}+\dots )D_{\alpha_1+\alpha_2,{1}}\nn\\
         &+(- X^{-\alpha_1,{1}} X^{-\alpha_1,{1}}+\dots)  D_{-\alpha_1,{1}} +(2 X^{1,{1}} X^{-\alpha_1-\alpha_2,{1}} - X^{2,{1}} X^{-\alpha_1-\alpha_2,{1}} +\dots )D_{-\alpha_2,{1}}\nn\\
         &+(- X^{-\alpha_1-\alpha_2,{1}} X^{-\alpha_1,{1}}+\dots) D_{-\alpha_1-\alpha_2,{1}} +(X^{-\alpha_1-\alpha_2,{1}} X^{\alpha_2,{1}}- 2 X^{1,{1}} X^{-\alpha_1,{1}} + X^{2,{1}} X^{-\alpha_1,{1}}+\dots) D_{1,{1}}\nn\\
         &+( X^{-\alpha_1-\alpha_2,{1}} X^{\alpha_2,{1}} +\dots)D_{2,{1}}\nn\\
         &+( 2 X^{-\alpha_1,{2}} X^{\alpha_1,{1}}+ X^{-\alpha_1-\alpha_2,{2}} X^{\alpha_1+\alpha_2,{1}} - 2 X^{1,{1}} X^{-\alpha_2,{1}} X^{\alpha_2,{1}} - \frac83 X^{1,{1}} X^{1,{1}} X^{1,{1}}\nn\\
         &\qquad + X^{2,{1}} X^{-\alpha_2,{1}} X^{\alpha_2,{1}}+ 4 X^{2,{1}} X^{1,{1}} X^{1,{1}} - 2 X^{2,{1}} X^{2,{1}} X^{1,{1}} + \frac13 X^{2,{1}} X^{2,{1}} X^{2,{1}} D_{\alpha_1,{2}} +\dots)D_{\alpha_1,{2}}\nn\\
         &+(- X^{-\alpha_1,{2}} X^{\alpha_2,{1}}- X^{1,{1}} X^{-\alpha_1,{1}} X^{\alpha_2,{1}}+ 2 X^{2,{1}} X^{-\alpha_1,{1}} X^{\alpha_2,{1}}+\dots)D_{\alpha_2,{2}}\nn\\
         &+( X^{-\alpha_1,{1}} X^{\alpha_2,{1}} X^{\alpha_1,{1}} + X^{-\alpha_1,{2}} X^{\alpha_1+\alpha_2,{1}}+ \frac12 X^{1,{1}} X^{1,{1}} X^{\alpha_2,{1}}\nn\\
         &\qquad- 2 X^{1,{2}} X^{\alpha_2,{1}} + X^{2,{1}} X^{1,{1}} X^{\alpha_2,{1}} + \frac12 X^{2,{1}} X^{2,{1}} X^{\alpha_2,{1}}+ X^{2,{2}} X^{\alpha_2,{1}} D_{\alpha_1+\alpha_2,{2}}+\dots) D_{\alpha_1+\alpha_2,{2}}\nn\\
         &+(- X^{-\alpha_1-\alpha_2,{1}} X^{-\alpha_1,{1}} X^{\alpha_2,{1}} + 2 X^{1,{1}} X^{-\alpha_1,{1}} X^{-\alpha_1,{1}}- X^{2,{1}} X^{-\alpha_1,{1}} X^{-\alpha_1,{1}} +\dots)D_{-\alpha_1,{2}}\nn\\
         &+(- X^{-\alpha_1-\alpha_2,{1}} X^{-\alpha_2,{1}} X^{\alpha_2,{1}}- 2 X^{1,{1}} X^{-\alpha_2,{1}} X^{-\alpha_1,{1}}+ X^{1,{1}} X^{-\alpha_1-\alpha_2,{2}}\nn\\&\qquad+ X^{2,{1}} X^{-\alpha_2,{1}} X^{-\alpha_1,{1}}- 2 X^{2,{1}} X^{-\alpha_1-\alpha_2,{2}} D_{-\alpha_2,{2}}+\dots) D_{-\alpha_2,{2}}\nn\\
         &+(- X^{-\alpha_1-\alpha_2,{1}} X^{-\alpha_1-\alpha_2,{1}} X^{\alpha_2,{1}} + 2 X^{1,{1}} X^{-\alpha_1-\alpha_2,{1}} X^{-\alpha_1,{1}} \nn\\&\qquad- X^{2,{1}} X^{-\alpha_1-\alpha_2,{1}} X^{-\alpha_1,{1}} D_{-\alpha_1-\alpha_2,{2}}+\dots)D_{-\alpha_1-\alpha_2,{2}}\nn\\
         &+(- X^{-\alpha_2,{1}} X^{-\alpha_1,{1}} X^{\alpha_2,{1}} + X^{-\alpha_1-\alpha_2,{2}} X^{\alpha_2,{1}} +\dots) D_{2,{2}}\nn\\
         &+(X^{-\alpha_2,{1}} X^{-\alpha_1,{1}} X^{\alpha_2,{1}} X^{\alpha_1,{1}}+ X^{-\alpha_2,{2}} X^{\alpha_2,{2}} + X^{-\alpha_1-\alpha_2,{1}} X^{-\alpha_2,{1}} X^{\alpha_1+\alpha_2,{1}} X^{\alpha_2,{1}}\nn\\
         &\qquad+ 2 X^{1,{1}} X^{-\alpha_2,{1}} X^{-\alpha_1,{1}} X^{\alpha_1+\alpha_2,{1}} - X^{1,{1}} X^{-\alpha_1-\alpha_2,{2}} X^{\alpha_1+\alpha_2,{1}}+ 2 X^{1,{1}} X^{1,{1}} X^{-\alpha_2,{1}} X^{\alpha_2,{1}}\nn\\
         &\qquad+ 2 X^{1,{1}} X^{1,{1}} X^{1,{1}} X^{1,{1}} + 2 X^{1,{2}} X^{1,{2}}- X^{2,{1}} X^{-\alpha_2,{1}} X^{-\alpha_1,{1}} X^{\alpha_1+\alpha_2,{1}}+ 2 X^{2,{1}} X^{-\alpha_1-\alpha_2,{2}} X^{\alpha_1+\alpha_2,{1}}\nn\\
         &\qquad- 2 X^{2,{1}} X^{1,{1}} X^{-\alpha_2,{1}} X^{\alpha_2,{1}} - 4 X^{2,{1}} X^{1,{1}} X^{1,{1}} X^{1,{1}}+ \frac12 X^{2,{1}} X^{2,{1}} X^{-\alpha_2,{1}} X^{\alpha_2,{1}}\nn\\
         &\qquad+ 3 X^{2,{1}} X^{2,{1}} X^{1,{1}} X^{1,{1}}- X^{2,{1}} X^{2,{1}} X^{2,{1}} X^{1,{1}}+ \frac18 X^{2,{1}} X^{2,{1}} X^{2,{1}} X^{2,{1}}\nn\\
         &\qquad- 2 X^{2,{2}} X^{1,{2}}+\frac12 X^{2,{2}} X^{2,{2}} +\dots) D_{\alpha_1,{3}}\nn\\
         &+(- X^{-\alpha_1,{2}} X^{\alpha_2,{2}}- 2 X^{-\alpha_2,{1}} X^{-\alpha_1,{1}} X^{\alpha_2,{1}} X^{\alpha_2,{1}}- X^{-\alpha_1-\alpha_2,{1}} X^{-\alpha_1,{1}} X^{\alpha_1+\alpha_2,{1}} X^{\alpha_2,{1}}\nn\\
         &\qquad+ X^{-\alpha_1-\alpha_2,{2}} X^{\alpha_2,{1}} X^{\alpha_2,{1}}+ 2 X^{1,{1}} X^{-\alpha_1,{1}} X^{-\alpha_1,{1}} X^{\alpha_1+\alpha_2,{1}}- \frac12 X^{1,{1}} X^{1,{1}} X^{-\alpha_1,{1}} X^{\alpha_2,{1}} \nn\\
         &\qquad- X^{2,{1}} X^{-\alpha_1,{1}} X^{-\alpha_1,{1}} X^{\alpha_1+\alpha_2,{1}}+ 2 X^{2,{1}} X^{1,{1}} X^{-\alpha_1,{1}} X^{\alpha_2,{1}}- 2 X^{2,{1}} X^{2,{1}} X^{-\alpha_1,{1}} X^{\alpha_2,{1}}+\dots) D_{\alpha_2,{3}}\nn\\
         &+(2 X^{-\alpha_1,{2}} X^{\alpha_2,{1}} X^{\alpha_1,{1}}- X^{-\alpha_2,{1}} X^{-\alpha_1,{1}} X^{\alpha_1+\alpha_2,{1}} X^{\alpha_2,{1}} + X^{-\alpha_1-\alpha_2,{2}} X^{\alpha_1+\alpha_2,{1}} X^{\alpha_2,{1}} \nn\\
         &\qquad+ X^{1,{1}} X^{-\alpha_1,{1}} X^{\alpha_2,{1}} X^{\alpha_1,{1}}- 2 X^{1,{1}} X^{-\alpha_2,{1}} X^{\alpha_2,{1}} X^{\alpha_2,{1}}- \frac{17}{6} X^{1,{1}} X^{1,{1}} X^{1,{1}} X^{\alpha_2,{1}} \nn\\
         &\qquad- X^{1,{2}} X^{\alpha_2,{2}}- 2 X^{2,{1}} X^{-\alpha_1,{1}} X^{\alpha_2,{1}} X^{\alpha_1,{1}}+ X^{2,{1}} X^{-\alpha_2,{1}} X^{\alpha_2,{1}} X^{\alpha_2,{1}}+ \frac72 X^{2,{1}} X^{1,{1}} X^{1,{1}} X^{\alpha_2,{1}}\nn\\
         &\qquad - \frac52 X^{2,{1}} X^{2,{1}} X^{1,{1}} X^{\alpha_2,{1}} + \frac16 X^{2,{1}} X^{2,{1}} X^{2,{1}} X^{\alpha_2,{1}}- X^{2,{2}} X^{\alpha_2,{2}} +\dots)D_{\alpha_1+\alpha_2,{3}}\nn\\
         &+(- X^{-\alpha_1,{2}} X^{-\alpha_1,{2}}+ X^{-\alpha_1-\alpha_2,{1}} X^{-\alpha_1-\alpha_2,{1}} X^{\alpha_2,{1}} X^{\alpha_2,{1}} - 4 X^{1,{1}} X^{-\alpha_1-\alpha_2,{1}} X^{-\alpha_1,{1}} X^{\alpha_2,{1}} \nn\\
         &\qquad+ 4 X^{1,{1}} X^{1,{1}} X^{-\alpha_1,{1}} X^{-\alpha_1,{1}}+ 2 X^{2,{1}} X^{-\alpha_1-\alpha_2,{1}} X^{-\alpha_1,{1}} X^{\alpha_2,{1}}- 4 X^{2,{1}} X^{1,{1}} X^{-\alpha_1,{1}} X^{-\alpha_1,{1}} \nn\\
         &\qquad+ X^{2,{1}} X^{2,{1}} X^{-\alpha_1,{1}} X^{-\alpha_1,{1}}+\dots) D_{-\alpha_1,{3}}\nn\\
         &+( X^{-\alpha_2,{1}} X^{-\alpha_2,{1}} X^{-\alpha_1,{1}} X^{\alpha_2,{1}} - X^{-\alpha_1-\alpha_2,{1}} X^{-\alpha_1-\alpha_2,{1}} X^{\alpha_2,{1}} X^{\alpha_1,{1}} \nn\\
         &\qquad+ 2 X^{1,{1}} X^{-\alpha_1-\alpha_2,{1}} X^{-\alpha_1,{1}} X^{\alpha_1,{1}}+ X^{1,{1}} X^{-\alpha_1-\alpha_2,{1}} X^{-\alpha_2,{1}} X^{\alpha_2,{1}}+ 2 X^{1,{1}} X^{1,{1}} X^{-\alpha_2,{1}} X^{-\alpha_1,{1}}\nn\\
         &\qquad- \frac12 X^{1,{1}} X^{1,{1}} X^{-\alpha_1-\alpha_2,{2}} + 2 X^{1,{2}} X^{-\alpha_1-\alpha_2,{2}}- X^{2,{1}} X^{-\alpha_1-\alpha_2,{1}} X^{-\alpha_1,{1}} X^{\alpha_1,{1}}\nn\\
         &\qquad- 2 X^{2,{1}} X^{-\alpha_1-\alpha_2,{1}} X^{-\alpha_2,{1}} X^{\alpha_2,{1}} - 5 X^{2,{1}} X^{1,{1}} X^{-\alpha_2,{1}} X^{-\alpha_1,{1}}+ 2 X^{2,{1}} X^{1,{1}} X^{-\alpha_1-\alpha_2,{2}} +\dots)D_{-\alpha_2,{3}}\nn\\
         &+(X^{-\alpha_1-\alpha_2,{1}} X^{-\alpha_2,{1}} X^{-\alpha_1,{1}} X^{\alpha_2,{1}}- X^{-\alpha_1-\alpha_2,{2}} X^{-\alpha_1,{2}}- 2 X^{1,{1}} X^{-\alpha_2,{1}} X^{-\alpha_1,{1}} X^{-\alpha_1,{1}}\nn\\
         &\qquad- X^{1,{1}} X^{-\alpha_1-\alpha_2,{1}} X^{-\alpha_1-\alpha_2,{1}} X^{\alpha_2,{1}}+ 2 X^{1,{1}} X^{1,{1}} X^{-\alpha_1-\alpha_2,{1}} X^{-\alpha_1,{1}}+ X^{2,{1}} X^{-\alpha_2,{1}} X^{-\alpha_1,{1}} X^{-\alpha_1,{1}}\nn\\
         &\qquad- X^{2,{1}} X^{-\alpha_1-\alpha_2,{1}} X^{-\alpha_1-\alpha_2,{1}} X^{\alpha_2,{1}}+ X^{2,{1}} X^{1,{1}} X^{-\alpha_1-\alpha_2,{1}} X^{-\alpha_1,{1}}\nn\\
         &\qquad- X^{2,{1}} X^{2,{1}} X^{-\alpha_1-\alpha_2,{1}} X^{-\alpha_1,{1}} +\dots)D_{-\alpha_1-\alpha_2,{3}}\nn\\
         &+(X^{-\alpha_1-\alpha_2,{1}} X^{-\alpha_1,{1}} X^{\alpha_2,{1}} X^{\alpha_1,{1}}+ X^{-\alpha_1-\alpha_2,{1}} X^{-\alpha_1-\alpha_2,{1}} X^{\alpha_1+\alpha_2,{1}} X^{\alpha_2,{1}}+ X^{-\alpha_1-\alpha_2,{2}} X^{\alpha_2,{2}}\nn\\
         &\qquad- 2 X^{1,{1}} X^{-\alpha_1,{1}} X^{-\alpha_1,{1}} X^{\alpha_1,{1}}- 2 X^{1,{1}} X^{-\alpha_1-\alpha_2,{1}} X^{-\alpha_1,{1}} X^{\alpha_1+\alpha_2,{1}} - 2 X^{1,{2}} X^{-\alpha_1,{2}}\nn\\
         &\qquad + X^{2,{1}} X^{-\alpha_1,{1}} X^{-\alpha_1,{1}} X^{\alpha_1,{1}}+ X^{2,{1}} X^{-\alpha_1-\alpha_2,{1}} X^{-\alpha_1,{1}} X^{\alpha_1+\alpha_2,{1}} + X^{2,{2}} X^{-\alpha_1,{2}} +\dots)D_{1,{3}} \nn\\
         &+( X^{-\alpha_1-\alpha_2,{1}} X^{-\alpha_2,{1}} X^{\alpha_2,{1}} X^{\alpha_2,{1}}+ X^{-\alpha_1-\alpha_2,{1}} X^{-\alpha_1-\alpha_2,{1}} X^{\alpha_1+\alpha_2,{1}} X^{\alpha_2,{1}}\nn\\
         &\qquad+ X^{-\alpha_1-\alpha_2,{2}} X^{\alpha_2,{2}} + 2 X^{1,{1}} X^{-\alpha_2,{1}} X^{-\alpha_1,{1}} X^{\alpha_2,{1}}- 2 X^{1,{1}} X^{-\alpha_1-\alpha_2,{1}} X^{-\alpha_1,{1}} X^{\alpha_1+\alpha_2,{1}}\nn\\
         &\qquad- X^{1,{1}} X^{-\alpha_1-\alpha_2,{2}} X^{\alpha_2,{1}}- X^{2,{1}} X^{-\alpha_2,{1}} X^{-\alpha_1,{1}} X^{\alpha_2,{1}}+ X^{2,{1}} X^{-\alpha_1-\alpha_2,{1}} X^{-\alpha_1,{1}} X^{\alpha_1+\alpha_2,{1}}\nn\\
         &\qquad+ 2 X^{2,{1}} X^{-\alpha_1-\alpha_2,{2}} X^{\alpha_2,{1}} + 2 X^{2,{1}} X^{2,{1}} X^{-\alpha_2,{1}} X^{-\alpha_1,{1}}\nn\\
         &\qquad- 2 X^{2,{1}} X^{2,{1}} X^{-\alpha_1-\alpha_2,{2}}- X^{2,{2}} X^{-\alpha_1-\alpha_2,{2}} +\dots) D_{2,3}+\dots
\end{align}

For the sake of completeness, we also report an example of the image of a generator from $\n_-$, truncated at order 4.
\begin{align}
\varrho&(J_{{-\alpha_1,1}})= D_{{-\alpha_1},1}  +\sum_{k\geq 2}X^{{\alpha_1},k-1}D_{{1},k} - \sum_{k\geq 3}X^{{\alpha_1+\alpha_2},k-1}D_{{\alpha_2},k}\nn\\
&+ \sum_{k\geq 2}X^{{-\alpha_2},k-1}D_{{-\alpha_1-\alpha_2},k} -2\sum_{k\geq 2}X^{{1},k-1}D_{{-\alpha_1},k} +\sum_{k\geq 2}X^{{2},k-1}D_{{-\alpha_1},k}\nn\\
& + ( X^{\alpha_1,1}X^{\alpha_1,1} +\dots ) D_{\alpha_1,3} +( - X^{\alpha_2,1}X^{\alpha_1,1} - 2X^{1,1}X^{\alpha_1+\alpha_2,1}
 + X^{2,1}X^{\alpha_1+\alpha_2,1} + \dots) D_{\alpha_2,3}\nn\\
 &+ (- X^{-\alpha_2,1}X^{\alpha_2,1}  - 2X^{1,1}X^{1,1} + 2X^{2,1}X^{1,1} - \frac12 X^{2,1}X^{2,1} +\dots )D_{-\alpha_1,3}\nn\\
 &+ ( X^{-\alpha_2,1}X^{\alpha_1,1} +\dots  )D_{-\alpha_2,3} +(X^{\alpha_1+\alpha_2,1}X^{\alpha_1,1} +\dots ) D_{\alpha_1+\alpha_2,3}\nn\\
 &+ (X^{1,1}X^{-\alpha_2,1} + X^{2,1}X^{-\alpha_2,1} +\dots)D_{-\alpha_1-\alpha_2,3}\nn\\
 &+(- X^{-\alpha_2,1}X^{\alpha_1+\alpha_2,1} + 2X^{1,1}X^{\alpha_1,1} - X^{2,1}X^{\alpha_1,1}+\dots )D_{1,3} + (- X^{-\alpha_2,1}X^{\alpha_1+\alpha_2,1} +\dots )D_{2,3}\nn\\
& +(- X^{-\alpha_2,1}X^{\alpha_1+\alpha_2,1}X^{\alpha_1,1} + 2X^{1,1}X^{\alpha_1,1}X^{\alpha_1,1}  - X^{2,1}X^{\alpha_1,1}X^{\alpha_1,1} +\dots)D_{\alpha_1,4}\nn\\
&+( - X^{-\alpha_2,1}X^{\alpha_1+\alpha_2,1}X^{\alpha_2,1} - 2X^{1,1}X^{\alpha_2,1}X^{\alpha_1,1} - 2X^{1,1}X^{1,1}X^{\alpha_1+\alpha_2,1} - X^{1,2}X^{\alpha_1+\alpha_2,1} \nn\\&\qquad + X^{2,1}X^{\alpha_2,1}X^{\alpha_1,1} + 2X^{2,1}X^{1,1}X^{\alpha_1+\alpha_2,1} - \frac12 X^{2,1}X^{2,1}X^{\alpha_1+\alpha_2,1} + 2X^{2,2}X^{\alpha_1+\alpha_2,1} +\dots) D_{\alpha_2,4}\nn\\
&+(- 2X^{-\alpha_1,2}X^{\alpha_1,1}  - X^{-\alpha_1-\alpha_2,2}X^{\alpha_1+\alpha_2,1} - X^{1,1}X^{-\alpha_2,1}X^{\alpha_2,1} - \frac43 X^{1,1}X^{1,1}X^{1,1}  \nn\\&\qquad - X^{2,1}X^{-\alpha_2,1}X^{\alpha_2,1}   + 2X^{2,1}X^{1,1}X^{1,1}  - X^{2,1}X^{2,1}X^{1,1} + \frac16 X^{2,1}X^{2,1}X^{2,1} +\dots )D_{-\alpha_1,4}\nn\\
&+(X^{-\alpha_2,2}X^{\alpha_1,1} + X^{1,1}X^{-\alpha_2,1}X^{\alpha_1,1}   + X^{2,1}X^{-\alpha_2,1}X^{\alpha_1,1}+\dots)D_{-\alpha_2,4}\nn\\
&+( X^{\alpha_2,1}X^{\alpha_1,1}X^{\alpha_1,1} + X^{\alpha_1+\alpha_2,1}X^{\alpha_1,2} - X^{-\alpha_2,1}X^{\alpha_1+\alpha_2,1}X^{\alpha_1+\alpha_2,1}  + 2X^{1,1}X^{\alpha_1+\alpha_2,1}X^{\alpha_1,1} \nn\\&\qquad- X^{2,1}X^{\alpha_1+\alpha_2,1}X^{\alpha_1,1}  +\dots)D_{\alpha_1+\alpha_2,4}\nn\\
&+( - X^{-\alpha_1-\alpha_2,2}X^{\alpha_1,1}  + \frac12 X^{1,1}X^{1,1}X^{-\alpha_2,1}  + X^{2,1}X^{1,1}X^{-\alpha_2,1}  + \frac12 X^{2,1}X^{2,1}X^{-\alpha_2,1}+ \dots )D_{-\alpha_1-\alpha_2,4} \nn\\
&+(- X^{1,1}X^{-\alpha_2,1}X^{\alpha_1+\alpha_2,1}  + 2X^{1,1}X^{1,1}X^{\alpha_1,1}  - X^{2,1}X^{-\alpha_2,1}X^{\alpha_1+\alpha_2,1} \nn\\&\qquad - 2X^{2,1}X^{1,1}X^{\alpha_1,1}  + \frac12 X^{2,1}X^{2,1}X^{\alpha_1,1}+\dots )D_{1,4}\nn\\
&+(- X^{-\alpha_2,1}X^{\alpha_2,1}X^{\alpha_1,1}      - X^{-\alpha_2,2}X^{\alpha_1+\alpha_2,1} - X^{1,1}X^{-\alpha_2,1}X^{\alpha_1+\alpha_2,1}   \nn\\&\qquad - X^{2,1}X^{-\alpha_2,1}X^{\alpha_1+\alpha_2,1}+\dots )D_{2,4}+\dots
\end{align}

\bibliographystyle{alpha}
\bibliography{biblio.bib}

\end{document}